\theoremstyle{definition}
\newtheorem{definition}{Definition}
\theoremstyle{plain}
\newtheorem{theorem}{Theorem}
\newtheorem{lemma}{Lemma}
\newtheorem{observation}{Observation}
\newtheorem{corollary}{Corollary}
\begin{document}

\title{Independent Set on P$_k$-Free Graphs in Quasi-Polynomial Time}
%\title{Weighted Independent Set on P$_k$-Free Graphs in Quasi-Polynomial Time}

\author{
Peter Gartland\thanks{University of California, Santa Barbara, USA. Emails: \texttt{petergartland@ucsb.edu}, \texttt{daniello@ucsb.edu}}
%\addtocounter{footnote}{-1}
%\and Daniel Lokshtanov\thanks{}
\and Daniel Lokshtanov\footnotemark[1]
%\thanks{University of California, Santa Barbara, USA. \texttt{daniello@ucsb.edu}}
}
\maketitle

\begin{abstract}
We present an algorithm that takes as input a graph $G$ with weights on the vertices, and computes a maximum weight independent set $S$ of $G$. If the input graph $G$ excludes a path $P_k$ on $k$ vertices as an induced subgraph, the algorithm runs in time $n^{O(k^2 \log^3 n)}$. Hence, for every fixed $k$ our algorithm runs in quasi-polynomial time. This resolves in the affirmative an open problem of~[Thomass\'{e}, SODA'20 invited presentation]. Previous to this work, polynomial time algorithms were only known for $P_4$-free graphs [Corneil et al., DAM'81], $P_5$-free graphs [Lokshtanov et al., SODA'14], and $P_6$-free graphs [Grzesik et al., SODA'19]. For larger values of $t$, only $2^{O(\sqrt{kn\log n})}$ time algorithms [Basc\'{o} et al., Algorithmica'19] and quasi-polynomial time approximation schemes [Chudnovsky et al., SODA'20] were known. Thus, our work is the first to offer conclusive evidence that {\sc Independent Set} on $P_k$-free graphs is not {\sf NP}-complete for any integer $k$.

Additionally we show that for every graph $H$, if there exists a quasi-polynomial time algorithm for {\sc Independent Set} on $C$-free graphs for every connected component $C$ of $H$, then there also exists a quasi-polynomial time algorithm for  {\sc Independent Set} on $H$-free graphs. This lifts our quasi-polynomial time algorithm to $T_k$-free graphs, where $T_k$ has one component that is a $P_k$, and $k-1$ components isomorphic to a {\em fork} (the unique $5$-vertex tree with a degree $3$ vertex).
\end{abstract}

\thispagestyle{empty}

\newpage

\setcounter{page}{1}
% Our results also apply to the {\sc Weighted Independent Set} problem.

\section{Introduction}\label{sec:intro}
%!tex root=main.tex
An {\em independent set} (also known as a {\em stable set}) in a graph $G$ is a vertex set $S$ such that no pair of distinct vertices in $S$ are adjacent in $G$. In the {\sc Independent Set} problem the input is a graph $G$ on $n$ vertices and integer $k$, the task is to determine whether $G$ contains an independent set $S$ of size at least $k$. {\sc Independent Set} is a well-studied and fundamental graph problem which is  {\sf NP}-complete~\cite{GareyJ79,Karp72} and intractable within most frameworks for coping with {\sf NP}-hardness. Indeed, {\sc Independent Set} was one of the very first problems to be shown to be {\sf NP}-hard to approximate~\cite{FeigeGLSS96,Zuckerman07}, one of the first intractable problems from the perspective of parameterized complexity~\cite{DF99}, one of the first problems to be shown not to have a $2^{o(n)}$ time algorithm assuming the Exponential Time Hypothesis (ETH)~\cite{ImpagliazzoPZ01}, and one of the very first problems whose hardness of parameterized approximation, assuming the Gap-ETH, was established~\cite{ChalermsookCKLM17}. 

With the above in mind, it is natural that a significant research effort has been devoted to identifying classes of input graphs for which the  {\sc Independent Set} problem is substantially easier than on general graphs. Of particular interest are the classes where  {\sc Independent Set} becomes polynomial time solvable. Most famously the problem becomes polynomial time solvable on Perfect graphs~\cite{GLS81}, other examples of polynomial time solvable cases include $k \times K_2$-free graphs~\cite{BY89} and graphs of bounded cliquewidth~\cite{CourcelleMR00}. 
For an extensive list, see~\cite{brandstadt1999graph} and the companion website~\cite{GCWebpage}. On the other hand the problem remains NP-complete even on planar graphs of maximum degree 3~\cite{GJ79}, unit disc graphs~\cite{CCJ90}, triangle-free graphs~\cite{Po74} and AT-free graphs~\cite{BKKM99}.

This paper fits in a long line of work to precisely classify the complexity of {\sc Independent Set} on all {\em hereditary} graph classes
%(classes closed under vertex deletion) 
defined by a single forbidden induced subgraph $H$ (and more generally, by a finite set ${\cal H}$ of forbidden induced subgraphs). A graph $G$ is said to be $H$-{\em free} if $G$ does not contain a copy of $H$ as an induced subgraph. For a set ${\cal H}$ of graphs, $G$ is ${\cal H}$-{\em free} if $G$ is $H$-free for all $H \in {\cal H}$. The ultimate goal of this research direction is to establish a dichotomy theorem that for every finite set ${\cal H}$ of graphs determines whether {\sc Independent Set} on ${\cal H}$-free graphs is polynomial time solvable, or {\sf NP}-complete\footnote{There is of course the possibility that {\sc Independent Set} on ${\cal H}$-free graphs has  {\sf NP}-intermediate complexity for some choice of ${\cal H}$. We believe this is unlikely, however that is pure speculation.}.

In 1982 Alekseev~\cite{Alekseev82} observed that {\sc Independent Set} remains {\sf NP}-complete on the class of ${\cal H}$-free graphs for every finite set ${\cal H}$ that does not include a graph $H$ whose every connected component is a path or a subdivision of the claw ($K_{1,3}$). Since then, no new {\sf NP}-completeness results for {\sc Independent Set} on ${\cal H}$-free graphs have been found for any other finite set ${\cal H}$. Thus, it is consistent with current knowledge that {\sc Independent Set} is polynomial time solvable on ${\cal H}$-free graphs for all other finite sets ${\cal H}$.
At the same time, progress on algorithms has been embarrassingly slow. The only connected graphs $H$ for which {\sf NP}-completness of {\sc Independent Set} does not follow from Alekseev's result are paths and subdivisions of the claw. Polynomial time algorithms for {\sc Independent Set} on claw-free graphs were found independently by Sbihi~\cite{Sbihi1980} and Minty~\cite{Minty80} in 1980. A polynomial time algorithm on {\em fork}-free graphs (a fork is a claw with one subdivided edge) was found by Alekseev~\cite{Alekseev04}. Subsequently, Lozin and Milanic~\cite{LozinM06} gave an algorithm for {\sc Weighted Independent Set} on fork-free graphs. 
%For the {\em fork} (the claw with one edge subdivided) a polynomial time algorithm 
For paths, {\sc Independent Set} on $P_4$-free graphs was shown to be polynomial time solvable by Corneil et al.~\cite{Corneil1981} in 1981. After a series of papers giving polynomial time algorithms for various subclasses of $P_5$-free graphs~\cite{BL03,BM03,GeLo03,LM05,Mo08,RaSc10}, in 2014 Lokshtanov et al.~\cite{LokshantovVV14} gave a polynomial time algorithm on $P_5$ free graphs. Two years later, Lokshtanov et al.~\cite{LokshtanovPL18} devised a quasi-polynomial time algorithm on $P_6$-free graphs, before Grzesik et al.~\cite{GrzesikKPP19} designed a polynomial time algorithm for $P_6$-free graphs in 2019. This summarizes the state-of-the-art for polynomial time solvability of {\sc Independent Set} on $H$-free graphs.

It appears that the currently known techniques are very far from being able to yield polynomial time algorithms for {\sc Independent Set} on $P_k$-free graphs for $k = 8$, let alone for all values of $k$. More concretely, the polynomial time algorithms for $P_5$-free graphs of Lokshtanov et al.~\cite{LokshantovVV14} and for $P_6$-free graphs of Grzesik et al.~\cite{GrzesikKPP19} are based on the same method. First, from a sample of two articles the complexity of applying this method seems to grow exponentially with $k$. Second, and more importantly, in a recent manuscript Grzesik et al.~\cite{corr/abs-2003-12345} show that a crucial component of this method fails completely on $P_k$-free graphs for $k \geq 8$.

The slow progress on polynomial time algorithms have prompted researchers to look for weaker forms of tractability of {\sc Independent Set} on $P_k$-free graphs. Bacs\'{o} et al.~\cite{BacsoLMPTL19} provided $2^{O(\sqrt{kn\log n})}$ time algorithms for {\sc Independent Set} on $P_k$-free graphs (see also~\cite{Brause17,GroenlandORSSS19}). Finally, Chudnovsky et al.~\cite{ChudnovskyPPT20} obtained quasi-polynomial time approximation schemes for $P_k$-free graphs for all $k$. In fact their result is much more general - they obtain quasi-polynomial time approximation schemes on ${\cal H}$-free graphs for all sets ${\cal H}$ for which {\sf NP}-hardness does not follow from Alekseev's~\cite{Alekseev82} observation. While the results above are general, they are consistent with {\sc Independent Set} being {\sf NP}-complete on all ${\cal H}$-free classes of graphs on which polynomial time algorithms are not already known. In this paper we obtain a quasi-polynomial time algorithm for {\sc Weighted Independent Set} on $P_k$-free graphs for every $k$. In particular we prove the following theorem.

\begin{theorem}\label{thm:main}
There exists an algorithm that given a graph $G$ and weight function $w : V(G) \rightarrow \mathbb{N}$ outputs the weight of a maximum weight independent set of $G$. If $G$ is $P_k$-free then the algorithm runs in $n^{O(k^2 \log^3 n)}$ time. 
\end{theorem}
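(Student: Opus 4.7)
The plan is to design a recursive branching algorithm for weighted independent set. At each recursive call the algorithm receives an induced subgraph $G'$ of $G$ (hence still $P_k$-free) together with the vertex weights $w$, and returns the maximum weight of an independent set in $G'$. The core operation is to guess a small vertex set $X \subseteq V(G')$, guess which subset $Y \subseteq X$ belongs to the optimum, delete $N[Y]$, and recurse on the connected components of $G' \setminus N[Y]$. The payoff comes from choosing $X$ so that, for the correct guess of $Y$, each remaining component contains significantly less of the optimum's weight than $G'$ did, forcing the recursion to bottom out quickly.

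The engine is a structural lemma I would aim to prove: for every $P_k$-free graph $G'$ with vertex weights $w$ and every independent set $I$ in $G'$, there exists a set $X \subseteq V(G')$ of size $O(k^2 \log n)$ and a subset $Y \subseteq X$ with $Y \subseteq I$ such that the weight of $I$ inside any single connected component of $G' \setminus N[Y]$ is at most a $\bigl(1 - 1/\mathrm{polylog}(n)\bigr)$ fraction of $w(I)$. The starting point is that every connected $P_k$-free graph has diameter at most $k-1$, so a BFS from any vertex produces at most $k$ layers $L_0, \ldots, L_{k-1}$; by pigeonhole some layer $L_i$ carries an $\Omega(1/k)$ fraction of $w(I)$, and since $L_i$ separates the earlier layers from the later ones, picking a small dominator inside $L_i \cap I$ already yields a weak form of balancing. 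Iterating this step, re-rooting the BFS inside the heaviest surviving component at each iteration and greedily accumulating vertices into $X$, amplifies the balancing until every remaining piece has lost a $1/\mathrm{polylog}(n)$ share of $w(I)$, using $O(k^2 \log n)$ vertices in total.

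Given the structural lemma, the algorithm enumerates all candidate sets $X$ of size $O(k^2 \log n)$ (there are $n^{O(k^2 \log n)}$ of them) together with all $2^{O(k^2 \log n)}$ subsets $Y \subseteq X$; for each choice it recurses on the connected components of $G' \setminus N[Y]$ and returns the best total. Because the heaviest residual piece (measured against the unknown optimum $I^*$) shrinks by a $\bigl(1 - 1/\mathrm{polylog}(n)\bigr)$ factor at every level of the recursion, the recursion depth before the residual weight collapses is polylogarithmic in $n$. Solving the resulting recurrence, in which the guessing multiplies the work by $n^{O(k^2 \log n)}$ at each call while the subproblems split across disjoint components, then yields total running time $n^{O(k^2 \log^3 n)}$.

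The main obstacle is the structural lemma itself: showing that $P_k$-free graphs admit small vertex sets whose deletion nearly balances any prescribed independent set. Previous polynomial-time algorithms for $P_5$- and $P_6$-free graphs leverage tight structural decompositions that, as Grzesik et al.\ note, provably fail for $k \ge 8$, so a genuinely new decomposition is required. The approach above relies only on the coarse BFS geometry imposed by the bounded diameter of connected $P_k$-free graphs; this forces one to settle for $\bigl(1 - 1/\mathrm{polylog}(n)\bigr)$-balancing rather than constant balancing, and quantifying this trade-off tightly is precisely what the logarithmic factors in the exponent $O(k^2 \log^3 n)$ reflect.
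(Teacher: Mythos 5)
Your proposed structural lemma is false, and the counterexample is elementary. Let $G$ consist of a clique $C=\{c_1,\dots,c_n\}$ and an independent set $I=\{i_1,\dots,i_n\}$ with $i_j$ adjacent to $c_j$ only. This is a split graph, hence $P_5$-free, and (with unit weights) $I$ is the unique maximum independent set, of weight $n$. For any $Y\subseteq I$ of size $O(k^2\log n)=O(\log n)$, removing $N[Y]$ deletes $|Y|$ clique vertices and $|Y|$ vertices of $I$; what remains is a clique on $n-|Y|$ vertices with a pendant matching to the remaining $n-|Y|$ vertices of $I$. This graph is connected and contains $I\setminus Y$ of weight $n-O(\log n)$ in a single component, which is $(1-o(1))\,w(I)$ rather than $(1-1/\mathrm{polylog}(n))\,w(I)$. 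The BFS-layer heuristic you sketch has the same problem: deleting $N[Y]$ for $Y\subseteq L_i\cap I$ does not delete $L_i$, and there is no reason a small piece of $L_i\cap I$ should dominate the heavy layer or disconnect it. More generally, the very notion of a small set $Y$ inside the optimum whose closed neighborhood balances the optimum's weight is exactly the kind of claim one cannot hope for in $P_k$-free graphs that contain large cliques, and the difficulty of ``guessing the intersection of a separator with the optimum'' is explicitly identified in the paper as the obstacle that its method is designed to avoid.

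The paper takes a different route. It works with separators that balance the \emph{number of vertices}, not the weight of the unknown optimum: by the Gy\'arf\'as path argument, any $P_k$-free graph has a set $X$ with $|X|\le O(k)$ such that $N[X]$ is a $|V(G)|/2$-balanced separator. It then never guesses $I^*\cap N[X]$. Instead it accumulates a \emph{multi-family} ${\cal F}$ of such balanced separators, defines ``level sets'' $L({\cal F},i)$ of vertices lying in at least $i$ members of ${\cal F}$, and branches on a single vertex whenever some vertex has at least $N/2^i$ neighbors in level $i$. When no such vertex exists, it adds another balanced separator to ${\cal F}$. The key lemma is a packing bound: a connected $P_k$-free graph cannot carry more than $O(k\log N)$ balanced separators in which every vertex appears at most $\log N$ times, because the shortest path between two ``typical'' vertices has fewer than $k$ vertices, each meeting at most $\log N$ separators, yet on average more than half the separators must separate that pair. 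This controls $|{\cal F}|$, and a carefully chosen measure then drives the recurrence to $n^{O(k^2\log^3 n)}$. In short, the paper sidesteps the need for your false lemma by (a) balancing vertex count rather than solution weight, and (b) replacing the ``guess the separator's intersection with the optimum'' step by per-vertex branching guided by the level sets of a growing separator family.
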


Theorem~\ref{thm:main} implies that unless ${\sf NP} \subseteq {\sf QP}$, {\sc Independent Set} on $P_k$-free graphs is not {\sf NP}-complete for any $k$. This is the first conclusive evidence against {\sf NP}-completeness for any $k \geq 7$. The running time of the algorithm of Theorem~\ref{thm:main} matches that of Chudnovsky et al.~\cite{ChudnovskyPPT20}, but computes optimal solutions instead of $(1-\epsilon)$-approximate ones. It is also worth mentioning that our algorithm and analysis is substantially simpler than the quasi-polynomial time algorithm of Lokshtanov et al.~\cite{LokshtanovPL18} for the special case of $P_6$-free graphs.
We have been unsuccessful in generalizing Theorem~\ref{thm:main} to a quasi-polynomial time algorithms for $H$-free graphs where $H$ is a subdivision of a claw. However, the techniques used to prove Theorem~\ref{thm:main} can be used to show that such an algorithm would automatically generalize to all classes of ${\cal H}$-free graphs for which {\sf NP}-hardness is not already known. More concretely, for a graph $H$ let O$_H$ be an oracle that takes as input an $H$-free graph $G$ and outputs the weight of a maximum weight independent set in $G$. Further, let  ${\cal CC}(H)$ denote the set of connected components of $H$. Our second result is the following. 

\begin{theorem}\label{thm:main 2}
There exists an algorithm that given as input a graph $H$, a graph $G$, and weight function $w : V(G) \rightarrow \mathbb{N}$ as well as access to oracles $O(H_i)$ for all $H_i \in {\cal CC}(H)$, outputs the weight of a maximum weight independent set of $G$. If $G$ is $H$-free then the algorithm uses at most $n^{O(|H|^2|{\cal CC}(H)|\log^3 (n))}$ operations and oracle calls on induced subgraphs of $G$. 
\end{theorem}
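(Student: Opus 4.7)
My plan is to induct on $c := |\mathcal{CC}(H)|$, writing $H = H_1 \sqcup \cdots \sqcup H_c$. The base case $c = 1$ is immediate: the algorithm calls $O(H_1)$ on $G$ directly. For the inductive step, we first test in $n^{O(|H|)}$ time whether $G$ happens to be $H_i$-free for some $i$; if so, we invoke $O(H_i)$ and return. Otherwise, by brute force in the same time, we find an induced copy $X_1 \subseteq V(G)$ of $H_1$ and set $W := N_G[X_1]$. The structural fact driving the induction is that $G[V(G) \setminus W]$ is $(H \setminus H_1)$-free: any induced copy of $H_2 \sqcup \cdots \sqcup H_c$ inside $G \setminus W$ would, together with $X_1$ and the absence of edges between $X_1$ and $V(G) \setminus W$ (guaranteed by $W \supseteq N(X_1)$), yield an induced copy of the full $H$ in $G$, a contradiction.

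Now split a maximum-weight independent set $S$ by whether $S \cap W = \emptyset$. In the easy branch, $S \subseteq V(G) \setminus W$ and we recursively call the algorithm on $(G[V(G) \setminus W],\, H \setminus H_1)$, invoking the inductive hypothesis at $c - 1$. In the hard branch $S \cap W \neq \emptyset$, we must search for an optimum that uses $W$. Here the plan is to reuse the balanced-branching framework underlying Theorem~\ref{thm:main}: at each level we branch on a carefully chosen "interface" between $S \cap W$ and the rest of $S$, producing $n^{O(|H|^2)}$ subinstances per level and $O(\log^3 n)$ levels, so that after one full pass of this framework we are left with subinstances to which the inductive hypothesis (at $c - 1$) applies. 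Tallying, each level of the outer induction costs a factor of $n^{O(|H|^2 \log^3 n)}$ and there are $c$ such levels, yielding the claimed $n^{O(|H|^2 c \log^3 n)}$ bound on operations and oracle calls.

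The main obstacle is the hard branch. In Theorem~\ref{thm:main} the balanced separators are extracted from a long induced path, and the argument relies on the rich local structure that $P_k$-freeness imposes around such a path; here we must show that an induced copy of an arbitrary connected component $H_1$ is a rich enough "anchor" to play the same role. Concretely, once $X_1$ is fixed, one has to verify that $W$ admits a balanced partition whose enumeration costs only $n^{O(|H|^2)}$ per level, and that each resulting subproblem either becomes $(H \setminus H_1)$-free — triggering the inductive hypothesis — or makes measurable progress, so that the recursion closes in $O(\log^3 n)$ levels. The $|H|^2$ factor in the exponent arises from tracking $|X_1| \leq |H|$-sized objects in both the enumeration of interfaces and the splitting step, exactly parallel to how the $k^2$ factor arises in Theorem~\ref{thm:main}.
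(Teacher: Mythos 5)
Your structural observation is correct and is exactly the kernel of the paper's argument: if $X_1 \subseteq V(G)$ induces $H_1$, then $G - N[X_1]$ is $(H - H_1)$-free, since any induced copy of $H - H_1$ living there would combine with $X_1$ (to which it is non-adjacent) to give an induced $H$. But the rest of the proposal has a genuine gap in the hard branch, and it is a gap in the key idea, not merely in the details.

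Your ``easy branch'' (recurse on $G - N[X_1]$ with $H - H_1$) is fine. But in the ``hard branch'' the graph is still only $H$-free, not $(H - H_1)$-free, and you never explain a mechanism that would ever hand control to the inductive hypothesis at $c - 1$. You propose to ``reuse the balanced-branching framework underlying Theorem~\ref{thm:main}'', but that framework is powered by the Gy\'arf\'as-path balanced separators, which exist precisely because of $P_k$-freeness; an $H$-free graph for general $H$ has no analogous small balanced separator, so there is nothing to branch on and no disconnection-based progress measure. Branching on vertices with many neighbors in $W = N[X_1]$ also does not terminate in a useful state: after such branching $W$ may become tiny, but the remaining graph is still $H$-free with no evident way to drop a component of $H$. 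The phrase ``after one full pass of this framework we are left with subinstances to which the inductive hypothesis (at $c-1$) applies'' is precisely the claim that needs a proof and is not supplied.

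The paper does something quite different, and not an induction on $c$ at all. It runs a single recursive algorithm that maintains a multi-family ${\cal F}$ of neighborhoods $N[X_j]$ of induced copies of $H_{j \bmod c}$ (cycling through the components), branches on any vertex with many neighbors into a high-level set of ${\cal F}$, and otherwise adds a fresh $N[X]$ for an induced copy of $H_{|{\cal F}| \bmod c}$; when no such copy exists the graph is $H_i$-free for the current $i$ and the oracle $O_{H_i}$ is invoked. The bound on $|{\cal F}|$, and hence on the recursion depth, comes from Lemma~\ref{claw sequence}: if you accumulate $c \cdot |H| \cdot \log N$ such neighborhoods with the low-overlap property (each vertex is in at most $\log N$ of them, which holds because otherwise it would be branchable), a greedy counting argument extracts one $X_i$ per component of $H$ that are pairwise non-adjacent, producing an induced $H$ and contradicting $H$-freeness. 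This iterated packing lemma is the engine that replaces the balanced-separator argument; it is what lets an induced $H_i$ ``play the same role'' as a Gy\'arf\'as path, and it is exactly what your proposal identifies as the obstacle but does not resolve.
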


Theorem~\ref{thm:main 2} has two immediate consequences. First, coupled with Theorem~\ref{thm:main} and the polynomial time algorithm for {\sc Weighted Independent Set} on fork-free graphs, Theorem~\ref{thm:main 2} yields a quasi-polynomial time algorithm for {\sc Weighted Independent Set} on $T_k$-free graphs, where $T_k$ is the graph with $k$ connected components the first of which is a $P_k$ and each of the remaining $k-1$ are isomorphic to a fork. Second, Theorem~\ref{thm:main 2} implies that if {\sc Weighted Independent Set} has a quasi-polynomial time algorithm on $H$-free graphs for every subdivided claw $H$, then {\sc Weighted Independent Set} also has a quasi-polynomial time algorithm on all ${\cal H}$-free classes of graphs, for finite sets ${\cal H}$, for which {\sf NP}-hardness does not follow from Alekseev's result. Or, stated more poetically, the buck stops at the (subdivided) claw.

%provides the first conclusive evidence that {\sc Independent Set} is not {\sf NP}-complete on $P_k$-free graphs for any $k$. In particular it implies that unless ${\sf NP} \subseteq {\sf QP}$, {\sc Independent Set} on $P_k$-free graphs is not {\sf NP}-complete for any $k$.

\paragraph{Methods.}
The starting point for our algorithm is the $2^{O(\sqrt{n\log n})}$ time algorithm for $P_k$-free graphs of Basc\'{o} et al.~\cite{BacsoLMPTL19}. The algorithm of Basc\'{o} et al.~\cite{BacsoLMPTL19} is simple enough that we can give a quite detailed overview here.  It combines two methods - ``{\em degree reduction}'' and ``{\em balanced separation}''. 
 
The ``degree reduction'' approach can be summarised as follows. As long as the input graph $G$ contains a vertex $v$ of sufficiently high degree (degree $\geq d$) then {\em branch on} $v$. That is, find the best solution avoiding $v$ by a recursive call on $G - v$, and the best solution containing $v$ by adding $v$ to the solution obtained from a recursive call on $G - N[v]$. Output the best of these two solutions. A simple recurrence analysis shows that this reduces the problem to solving $2^{O(\frac{n \log n}{d})}$ instances in which no vertex has degree at least $d$. Basc\'{o} et al.~\cite{BacsoLMPTL19} set $d = \sqrt{n\log n}$ and obtain $2^{O(\sqrt{n \log n})}$ instances with maximum degree $\sqrt{n\log n}$.

The ``balanced separation'' technique is based on the classic ``Gy\'{a}rf\'{a}s path'' argument~\cite{gyarfas1987problems} for proving that P$_k$-free graphs are $\chi$-bounded. A simple lemma (Lemma 2 of~Basc\'{o} et al.~\cite{BacsoLMPTL19}), whose proof spans less than a page, shows that in every $P_k$ free graph $G$ there exists a vertex set $X_1$ of size at most $k-1$, such that every connected component of $G - N[X_1]$ has at most $n/2$ vertices. 
Basc\'{o} et al.~\cite{BacsoLMPTL19} apply this result to instances output by the degree reduction procedure above. In such instances, $|N[X_1]| \leq O(\sqrt{n\log n})$, assuming $k$ is a constant. Then, after guessing the intersection of the optimal solution with $N[X_1]$ (there are at most $2^{|N[X_1]|} \leq 2^{\sqrt{n\log n}}$ such guesses) the connected components of $G-N[X_1]$ become independent sub-instances of size at most $n/2$, on which the algorithm may be called recursively. 
Thus, solving a single instance on $n$ vertices reduces to solving $2^{O(\sqrt{n\log n})}$ instances on at most $n/2$ vertices. Analyzing the corresponding recurrence shows that the total running time of the algorithm is upper bounded by  $2^{O(\sqrt{n\log n})}$. 

If we wish to improve the running time from $2^{O(\sqrt{n\log n})}$ to quasi-polynomial, we may only apply degree reduction with $d = \Omega(\frac{n}{\log^{O(1)} n})$, and we can not afford to guess the intersection of the balanced separator $N[X_1]$ with an optimal solution. At this point we apply a slight generalization of degree reduction, to degree reduction relative to a vertex set $S$. Here we branch on vertices $v$ that have at least $d'$ neighbors in $S$ (the vertex $v$ itself does not have to be in $S$). A simple recursion analysis shows that this will reduce a single instance to $n^{|S|/d'}$ instances where every vertex has at most $d'$ neighbors in $S$. We apply degree reduction on the balanced separator $N[X_1]$ with $d' = |N[X_1]|/c$ for some constant $c$ (possibly depending on $k$). Thus, the initial degree reduction, followed by the degree reduction on $N[X_1]$, reduces the task of solving a single instance $G$ to that of solving the problem on $2^{\log^{O(1)} n}$ instances in which every vertex has degree at most $n / \log^{O(1)} n$ and furthermore has at most $|N[X_1]|/c$ neighbors in the set $N[X_1]$. Here we are working with induced subgraphs of the original graph $G$, so when we say $N[X_1]$ we really mean what remains of the set $N[X_1]$ (with the neighborhood taken in the graph $G$) in the subgraph of $G$ that is currently being considered. 

The route above is perhaps the most natural one to try to obtain a quasi-polynomial time algorithm. Indeed, it is also the engine in the quasi-polynomial time algorithm of Lokshtanov et al.~\cite{LokshtanovPL18} for $P_6$-free graphs. However it is not at all clear how to deal with the instances output by this degree reduction. For $P_6$-free graphs, Lokshtanov et al.~\cite{LokshtanovPL18} (essentially) show that if the balanced separator $N[X_1]$ is chosen very carefully, then the degree reduction procedure never gets stuck: as long as $N[X_1]$ is non-empty some vertex is a neighbor to a constant fraction of $N[X_1]$. Thus the algorithm will make quasi-polynomially many calls on instances where the balanced separator $N[X_1]$ has been reduced to the empty set, in which case each connected component of the graph is substantially smaller than the original graph. This leads to a recurrence that solves to quasi-polynomial time. We are not able to prove an analogous statement for $P_k$-free graphs for higher values of $k$, and so we are faced with the problem of how to deal with the degree-reduced instances described above.

The key insight of our algorithm is the following: {\em if we re-apply the ``Gy\'{a}rf\'{a}s path'' argument of  Basc\'{o} et al.~\cite{BacsoLMPTL19} on the degree-reduced instances to obtain a new balanced separator $N[X_2]$, then $N[X_2]$ can not have large intersection with $N[X_1]$}. This is because $N[X_2]$ is the neighbor set of a constant size set ($X_2$) and no vertex in the degree-reduced instance has many neighbors in $N[X_1]$. We now apply the degree reduction procedure again, this time on $N[X_2]$. If this reduction procedure completely reduces $X_1$ or $X_2$ to the empty set, or disconnects the graph into connected components so that the largest one has at most $0.9n$ vertices, then we have won, because the connected components of our instances are substantially smaller than on the original graph. If the procedure gets stuck then we obtain yet another balanced separator $X_3$, observe that $X_3$ has small intersection with $X_2$ and $X_1$, and do degree-reduction on $X_3$. And this keeps going, we keep adding new balanced separators into the mix until the degree-reduction procedure sufficiently disconnects the graph (i.e the largest connected component of the instances becomes sufficiently smaller than the original graph. The hard part of the analysis is to prove 
that the graph does become substantially disconnected by the time at most $O(\log n)$ separators have been added to the instance.
%\todo{tie up}

The actual final form of the algorithm is slightly different from what we describe above. Indeed, based on the ideas in the previous paragraph we can get an algorithm with running time $O(2^{n^{\epsilon}})$ for every $\epsilon > 0$, however to obtain quasi-polynomial time we need to be slightly more careful. The main difference is that we do not do degree reduction on each individual separator $N[X_i]$. Instead we define {\em level sets}. Level $i$ is the set of all vertices that appear in at least $i$ of the separators $N[X_1], \ldots, N[X_t]$ that we have constructed so far. We will maintain that throughout the course of the algorithm the size of level $i$ drops exponentially with $i$. Thus there will only be $O(\log n)$ levels, and we can afford to run degree reduction so that for each level, no vertex sees more than a $(\frac{1}{k\log n})^{O(1)}$ fraction of that level. Then, when we add a new separator, because it is the neighbor set of only a constant number of vertices, each level will increase by at most a factor of $1 + (\frac{1}{k\log n})^{O(1)}$ of the size of the previous level. Thus, such a process may continue to depth $(k\log n)^{O(1)}$ while maintaining the invariant that the size of the level $i$ drops exponentially with $i$. 

If recursion depth $\Omega(k\log n)$ is reached without sufficiently disconnecting the graph (i.e the largest connected component $C$ of the graph still has size at least $N/2$, where $N$ is the number of vertices in the original graph) this means that we have found $\Omega(k\log n)$ balanced separators for the graph such that no vertex is contained in more than $O(\log n)$ of them. A simple counting argument then shows that the average distance between pairs of vertices in the component $C$ has to be at least $\frac{k\log n}{\log n} \geq k$, contradicting that $G$ is $P_k$-free. This means that after recursion depth $O(k\log n)$, the graph has already been disconnected! At this point running the algorithm from scratch on each of the connected components yields at most $n$ instances of size at most $n/2$ which solves to quasi-polynomial time. 

Our algorithm for Theorem~\ref{thm:main 2} follows the same template as the algorithm for Theorem~\ref{thm:main}. The key difference is that instead of growing a sequence of balanced separators we grow a sequence of (neighborhoods of) induced copies in $G$ of connected components of $H$. Again the sequence has the property that the sets in the sequence do not overlap too much, so if we can grow the sequence to length $\Omega(|H|^{O(1)} \log n)$ then we can find an induced copy of $H$ in $G$.

%\smallskip
%\noindent{\bf Organization of the paper.} In Section~\ref{sec:prelim} we set up notation, in Section~\ref{sec:p5alg} we prove Theorem~\ref{thm:main}, while in Section~\ref{sec:disconnected} we prove Theorem~\ref{thm:main 2}. We conclude with some final remarks and open problems in Section~\ref{sec:conclusion}.

%Then, if the largest connected component of the graph still have size close to $n$, it follows by a single counting argument that each of the $\Omega(k\log n)$ balanced separators intersect with every path between every pair $a$,$b$ of vertices of this component, but no vertex is contained in more than $O(\log n)$ balanced separators, 

%In light of the algorithm of Basc\'{o} et al.~\cite{BacsoLMPTL19}, 

%obtain sub-exponential time algorithms. However it is not at all clear how to deal with the instances output by this degree reduction. It appears that the branching process is stuck, and no further progress can be made. Indeed, a similar kind of branching strategy is precisely the engine behind the quasi-polynomial time algorithm of Lokshtanov et al.~\cite{} for $P_6$-free graphs

\section{Preliminaries}\label{sec:prelim}
%!tex root=main.tex
%\todo{rename fair instance}
All graphs in this paper are assumed to be simple, undirected graphs. We denote the edge set of a graph $G$ by $E(G)$ and the vertex set of a graph by $V(G)$. If $v$ $\in$ $V(G)$, then we use $N[v]$ to denote the closed neighborhood of v, i.e. the set of all neighbors of $v$ together with $v$ itself. We use $N(v)$ to denote the set $N[v] - \{v$\}. If $X$ $\subseteq$ $V(G)$, then $N[X]$ = $\bigcup_{x \in X}N[x]$ and $N(X)$ = $N[X] - X$. We use ${\cal CC}(G)$ to denote the set of connected components of $G$. If $G_1$, $G_2$,..., $G_m$ are graphs, then we use $G_1 + G_2 +...+ G_m$ to denote the graph that that has vertex set $V(G_1) \cup V(G_2) \cup...\cup V(G_m)$, and edge set $E(G_1) \cup E(G_2) \cup...\cup E(G_m)$.

Given a weight function $w : V(G) \rightarrow \mathbb{N}$ the weight of a vertex set $S$ is defined as $w(S) = \sum_{v \in S} w(v)$. An {\em independent set} in $G$ is a vertex set $S$ such that no pair of vertices in $S$ have an edge between them. We define {\sf mwis}$(G)$ to be the weight of the maximum weight independent set in $G$. The $length$ of a path is the number of vertices in the path and we denote by $P_k$ the path of length $k$. If $X$ $\subseteq$ $V(G)$ then we will use $G(X)$ to denote the the graph induced by the vertex set $X$, and if it is clear from the context we will use $G-X$ to denote the graph $G(V(G)-X)$.

%\begin{definition}\label{balanced separator}
Given a positive number $k$ and a graph $G$, we call a set $S$ $\subset$ $V(G)$ a $c$-$balanced$ $separator$ if no connected component of $G - S$ has over $c$ vertices.
%\end{definition}
%
%\begin{definition}\label{multi-set}
A {\em vertex multi-family} ${\cal F}$ is a collection of vertex sets that allows for multiple instances of its vertex sets. If ${\cal F}$ = \{$S_1, S_2, \ldots , S_n\}$ and $X$ is a set of vertices, then ${\cal F} - X$ is the vertex multi-family \{$S_1-X, S_2-X, \ldots , S_n-X\}$. For two vertex multi-families ${\cal A}$ and ${\cal B}$ their {\em union} is denoted by ${\cal A} \cup {\cal B}$ and is defined by the vertex multi-family that contains all elements of ${\cal A}$ and of ${\cal B}$. The multiplicity of an element $X$ in ${\cal A} \cup {\cal B}$ is its multiplicity in  ${\cal A}$ plus its multiplicity in ${\cal B}$.  We will use $\log(x)$ to denote the function $\lceil \log_2(x) \rceil$ throughout this paper.
%\end{definition}

\section{Quasi-Polynomial Time Algorithm for P$_k$-Free Graphs}\label{sec:p5alg}
%!tex root=main.tex
%In this section we will provide a quasi-polynomial time algorithm for families $P_k$-free graphs. In fact, the algorithm we provide is slightly more general in that on input of a graph $G$, if $k$ is the length of the longest induced path of $G$, then the algorithm will run in $n^{O(k^5log(n)^4)}$. We will first need to prove some lemmas.
%will provide a quasi-polynomial time algorithm for {\sc Maximum Weight Independent Set} on $P_k$-free graphs. Specifically we will 

%\begin{theorem}\label{thm:main}
%There exists an algorithm that given a graph $G$ and weight function $w : V(G) \rightarrow \mathbb{N}$ outputs the weight of a maximum weight independent set of $G$. If $G$ is $P_k$-free then the algorithm runs in $n^{O(k^2 \log^3 n)}$ time. 
%\end{theorem}

%quasi-polynomial time algorithm for families $P_k$-free graphs. In fact, the algorithm we provide is slightly more general in that on input of a graph $G$, if $k$ is the length of the longest induced path of $G$, then the algorithm will run in $n^{O(k^5log(n)^4)}$. We will first need to prove some lemmas.

In this section we prove Theorem~\ref{thm:main}. We will make use of the following balanced separator lemma from Basco et al.~\cite{BacsoLMPTL19}. 

\begin{lemma}\label{weak gyarfas}
\cite{BacsoLMPTL19} There exists an algorithm that given a graph $G$ runs in polynomial time and outputs an induced path $P$ in $G$ such that $N(V(P))$ is a $\frac{|V(G)|}{2}$-balanced separator of $G$.
\end{lemma}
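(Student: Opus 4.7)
The plan is to prove, by strong induction on $|V(G)|$, the following stronger statement which implies the lemma: for every connected graph $G$ on $n$ vertices and every vertex $w \in V(G)$, there is an induced path $P$ starting at $w$ such that every connected component of $G - N[V(P)]$ has at most $n/2$ vertices. The lemma then follows by applying this to the unique connected component of $G$ of size strictly greater than $n/2$, if one exists; otherwise the empty path already satisfies the conclusion.

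For the inductive step, I would first examine the connected components $A_1, \ldots, A_k$ of $G - N[w]$. If every $|A_i| \leq n/2$, then $P := (w)$ already works. Otherwise, two disjoint subsets of $V(G)$ cannot both exceed $n/2$ vertices, so there is a unique heavy component $A_j$ with $|A_j| > n/2$, and the remaining $A_i$ satisfy $|A_i| \leq n - |A_j| - 1 < n/2$. Since $G$ is connected and $A_j$ is a component of $G - N[w]$, the set $N(A_j)$ is nonempty and contained in $N(w)$; pick any $u \in N(A_j)$. The induced subgraph $H := G[A_j \cup \{u\}]$ is connected on $|A_j|+1 \leq n-1$ vertices, so by induction there is an induced path $P' = (u, u_1, \ldots, u_s)$ in $H$ starting at $u$ such that every component of $H - N_H[V(P')]$ has size at most $(|A_j|+1)/2 \leq n/2$. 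I output $P := (w, u, u_1, \ldots, u_s)$.

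Correctness then splits into two verifications. First, $P$ is an induced path of $G$: $w$ is adjacent to $u$; $w$ is not adjacent to any later $u_i$ because these lie in $A_j \subseteq G - N[w]$; and the induced-ness of $P'$ in $H$ transfers to $G$ since $H$ is an induced subgraph. Second, every component of $G - N[V(P)]$ has size at most $n/2$: such a component cannot span multiple $A_i$ because distinct $A_i$ have no edges between them in $G - N[w]$, so I may analyze each $A_i$ separately. For $i \neq j$ the size is already bounded by $|A_i| < n/2$. For $i = j$, the key identity $N_G[V(P')] \cap A_j = N_H[V(P')] \cap A_j$ (valid because $V(P') \subseteq V(H)$ and $H$ is induced in $G$) lets me identify the components of $A_j - N_G[V(P)]$ with those of $H - N_H[V(P')]$ restricted to $A_j$, each of size at most $n/2$ by the inductive hypothesis.

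The main obstacle is this bookkeeping around induced neighborhoods, specifically showing that the neighborhoods of $V(P')$ taken in $H$ and in $G$ agree on $A_j$, and that no component of $G - N[V(P)]$ leaks across two distinct $A_i$'s. Once those checks are in place, the construction unwinds directly into a recursive polynomial-time algorithm: each call computes $N[w]$ and the connected components of $G - N[w]$ in time $O(n+m)$, then makes at most one recursive call on a strictly smaller induced subgraph, giving total running time polynomial in $|V(G)| + |E(G)|$.
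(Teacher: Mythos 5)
The paper does not prove Lemma~\ref{weak gyarfas} itself (it cites Basc\'{o} et al.), so I am comparing your proposal against the standard Gy\'{a}rf\'{a}s-path argument that the citation refers to. Your recursion is indeed the right construction, but your stated ``stronger statement'' does \emph{not} imply the lemma, and the recursion as you have set it up does not produce a path satisfying the lemma.

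The lemma asks for $N(V(P))$, the \emph{open} neighborhood, to be a $|V(G)|/2$-balanced separator. Since $V(P)\cap N(V(P))=\emptyset$, the vertex set $V(P)$ survives the deletion and, being an induced path with all its external neighbors deleted, is itself a connected component of $G-N(V(P))$. The other components of $G-N(V(P))$ are exactly the components of $G-N[V(P)]$. So proving ``every component of $G-N[V(P)]$ has at most $n/2$ vertices'' is only half the job: you also need $|V(P)|\le n/2$, and you never address this. Worse, your recursion can violate it. Because you shrink the threshold along with the graph (the recursive call on $H=G[A_j\cup\{u\}]$ demands components of size at most $|V(H)|/2$, not $n/2$), the recursion keeps extending the path much longer than necessary. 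Concretely, on the path graph $P_n$ ($n=10$, $w=v_1$) your stopping rule $|A_j|\le |V(H)|/2$ first fires when the residual graph has $4$ vertices, yielding $P=(v_1,\dots,v_7)$ with $|V(P)|=7>n/2=5$. Then $N(V(P))=\{v_8\}$, and $G-N(V(P))$ has the component $\{v_1,\dots,v_7\}$ of size $7>n/2$, so $N(V(P))$ is \emph{not} a balanced separator, even though your ``stronger'' statement (about $G-N[V(P)]$) holds.

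The fix is to keep the threshold fixed at $n/2$ with respect to the \emph{original} graph throughout the iteration, rather than halving it relative to the current subgraph. That is, extend the path only while some component of $G-N[V(P)]$ exceeds $n/2$ (original $n$), and stop as soon as all such components have size at most $n/2$. With this stopping rule the path $P=(v_1,\dots,v_t)$ is entirely disjoint from the last ``big'' component $C_{t-1}$ of $G-N[\{v_1,\dots,v_{t-1}\}]$ (every $v_i$ with $i\le t-1$ is outside $C_{t-1}$ by definition, and $v_t\in N(v_{t-1})\subseteq N[\{v_1,\dots,v_{t-1}\}]$ is also excluded), so $|V(P)|\le n-|C_{t-1}|<n/2$. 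If you want to keep a recursive formulation, pass the fixed threshold $s=n/2$ down to the recursive calls as an extra parameter and strengthen the inductive claim to include $|V(P)|\le n-s$; the bookkeeping you already did for the $N_H$ versus $N_G$ identification and the non-leaking across $A_i$'s then carries through unchanged.
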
%\todo{check consistency with how it is written in cited paper}

We begin by proving a slight strengthening of Lemma~\ref{weak gyarfas}.

\begin{lemma}\label{strong gyarfas}
There exists an algorithm that takes as input a graph $G$, and a positive integer $i$ such that $2^i < |V(G)|$, runs in polynomial time and outputs a set $X$ such that $N[X]$ is a $\frac{|V(G)|}{2^i}$-balanced separator in $G$. Furthermore, if $G$ is $P_k$-free then $|X| \leq 2^{i+1} \cdot k$.
\end{lemma}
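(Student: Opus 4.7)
The plan is to apply Lemma~\ref{weak gyarfas} recursively. I define a procedure $\textsf{Sep}(G, t)$ which returns $\emptyset$ if $|V(G)| \leq t$, and otherwise applies Lemma~\ref{weak gyarfas} to obtain a set $X_1 \subseteq V(G)$ with $|X_1| \leq k-1$ (using $P_k$-freeness) such that every component of $G - N_G[X_1]$ has size at most $|V(G)|/2$, then recursively calls $\textsf{Sep}(G(C), t)$ on every component $C$ of $G - N_G[X_1]$ with $|C| > t$, and returns the union of $X_1$ and all recursively produced sets. To prove Lemma~\ref{strong gyarfas} I will invoke this procedure with $t = |V(G)|/2^i$.

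Correctness---that $N_G[X]$ is a $t$-balanced separator of $G$---will follow by a straightforward induction on $|V(G)|$. The key observation is that distinct components $C, C'$ of $G - N_G[X_1]$ have no edges between them in $G$, so for each big component $C$ the recursively computed set $X_C \subseteq V(C)$ satisfies $N_G[X_C] \cap V(C) = N_{G(C)}[X_C]$; the inductive hypothesis applied to $G(C)$ then bounds the component sizes inside $C$ by $t$, while the small components of $G - N_G[X_1]$ are already of size at most $t$.

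The size bound is the delicate part. The naive inductive claim $|X| \leq 2k\,|V(G)|/t$ fails to close because each recursive level contributes an extra additive $k-1$ that cannot be absorbed. Instead I will prove by induction on $|V(G)|$ the strictly tighter bound
\[
|X| \;\leq\; \frac{2k\,|V(G)|}{t} - k.
\]
Unfolding one step of the recurrence gives $|X| \leq (k-1) + \sum_{C:\,|C|>t}\bigl(2k|C|/t - k\bigr) \leq (k-1) + 2k|V(G)|/t - km$, where $m$ is the number of big components recursed into. A short case analysis on $m$ then closes the induction: for $m \geq 2$ the $-km$ absorbs the $k-1$; for $m = 0$ the claim reduces to $k-1 \leq 2k|V(G)|/t - k$, which holds since $|V(G)| > t$; and for $m = 1$ I use the extra slack $|C_1| \leq |V(G)|/2$ guaranteed by Lemma~\ref{weak gyarfas}, together with the fact that $|V(G)| > 2t$ whenever any component exceeds $t$, to verify that $2k|C_1|/t - 1 \leq 2k|V(G)|/t - k$. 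Substituting $t = |V(G)|/2^i$ gives $|X| \leq 2^{i+1}k - k \leq 2^{i+1}k$, as required.

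I expect the main obstacle to be setting up this induction tightly: the subtracted $-k$ in the hypothesis is essential, and the $m = 1$ case is somewhat subtle because it only closes by exploiting the half-size guarantee on the unique big child. Polynomial running time is immediate, since Lemma~\ref{weak gyarfas} itself runs in polynomial time and every recursive call is made on a component whose size is at most half of its parent, so the recursion tree has depth $O(\log |V(G)|)$ and only polynomially many nodes.
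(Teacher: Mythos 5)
Your proposal is correct, and it is built on the same engine as the paper: iterated application of Lemma~\ref{weak gyarfas} (the Gy\'{a}rf\'{a}s path lemma) to halve component sizes. The difference is in how the recursion is organized and, more noticeably, in the size analysis. The paper inducts on $i$: it first computes an $\frac{n}{2^{i-1}}$-balanced separator $N[X']$, then halves each remaining component of size $> n/2^i$ exactly once. The size bound then falls out of a one-line counting argument: at most $2^i$ disjoint components can each have more than $n/2^i$ vertices, so the $i$-th step adds at most $2^i\cdot k$ vertices, and together with the inductive bound $|X'|\leq 2^i k$ this gives $|X|\leq 2^{i+1}k$ directly. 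You instead run a single recursive procedure $\textsf{Sep}(G,t)$ that recurses into big children, and the size bound is carried by a strengthened inductive hypothesis $|X|\leq 2k|V(G)|/t - k$ with a three-way case split on the number $m$ of big children (where $m=1$ needs the half-size guarantee from Lemma~\ref{weak gyarfas} to close). Your amortized argument is sound --- the cases all check out, and you correctly identify that the naive hypothesis without the $-k$ slack does not close --- but it is substantially more delicate than the paper's ``count the disjoint big components'' argument. If you are redoing this, note that the clean counting observation (there are at most $2^i$ disjoint sets of size $> n/2^i$) is the thing to look for; it replaces your entire potential-function bookkeeping.
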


\begin{proof}
Let $G$ and $i$ be as in the statement of the lemma, the proof is by induction on $i$. For $i = 1$ the algorithm calls the algorithm of Lemma~\ref{weak gyarfas} and obtains a path $P$. It then returns $X = V(P)$. Lemma~\ref{weak gyarfas} guarantees that in this case $X$ satisfies the statement of this lemma. For $i > 1$ the algorithm first calls itself recursively on the input $(G, i-1)$ and obtains a set $X'$ such that $N[X']$ is a $\frac{|V(G)|}{2^{i-1}}$-balanced separator in $G$, and furthermore, if $G$ is $P_k$-free then $|X'| \leq 2^{i} \cdot k$. For each connected component $C_j$ of $G-N[X']$ such that $|V(C_j)| > \frac{|V(G)|}{2^i}$ the algorithm calls itself recursively on $(C_j, 1)$ and obtains a set $X_j$ such that $N[X_j]$ is a $\frac{|V(C_j)|}{2}$-balanced separator of $C_j$. If $G$ is $P_k$-free it holds that $|X_j| \leq k$. The algorithms sets $X$ as  
$X = X' \cup (\bigcup_{j} X_j)$
where the union is taken over all $j$ such that  $|V(C_j)| > \frac{|V(G)|}{2^i}$. Clearly the construction of $X$ ensures that $N[X]$ is a $\frac{|V(G)|}{2^i}$-balanced separator of $G$. Furthermore if $G$ is $P_k$-free then  $|X| \leq |X'| + t \cdot k$ where $t$ is the number of connected components of $G-X'$ whose size is more than $\frac{|V(G)|}{2^i}$. Since these components are disjoint we have that $t \leq 2^i$. Therefore $|X| \leq 2^i \cdot k + 2^i \cdot k = 2^{i+1} \cdot k$ as claimed. 

To see that the run time is polynomial it suffices to show the number of times the algorithm makes a call to the algorithm of Lemma \ref{weak gyarfas} is polynomial. To see this polynomial bound, note that on input ($G, i$) the algorithm makes at most $2^i$ calls to the algorithm of Lemma \ref{weak gyarfas} plus the number calls it makes to the algorithm of Lemma \ref{weak gyarfas} on input ($G, i-1$). Since 2$^{i} \leq |V(G)| = n$, the recurrence shows the algorithm makes at most $\Sigma_{i=0}^{\log(n)} n/2^i$ = $O(n)$ calls to the algorithm of Lemma \ref{weak gyarfas}.
\end{proof}

To describe the algorithm of Theorem~\ref{thm:main} we first need to define the notion of {\em level sets} relative to a vertex multi-family ${\cal F}$. 

\begin{definition}\label{level sets}
Given a graph $G$ and a vertex multi-family ${\cal F}$ consisting of vertex sets of $G$, for positive integers $i$, the {\em $i$\textsuperscript{th} level relative to}  ${\cal F}$ is denoted by $L({\cal F}, i)$ and defined as follows
$$L({\cal F}, i) = \{v \in V(G) ~:~ |\{S \in {\cal F} ~:~ v \in S\}| \geq i\}$$
%is a vertex set containing all vertices of $G$ that are contained in at least $i$ sets in ${\cal F}$.
\end{definition}
In other words $L({\cal F}, i)$ is a vertex set containing all vertices of $G$ that are contained in at least $i$ sets in ${\cal F}$. Our algorithm will also make use of a number $N$, this number will be approximately equal to the number of vertices in the input graph $G$. 

\begin{definition}\label{branchableVertex}
The $i$\textsuperscript{th} {\em branch threshold} is denoted by $\Delta_i$ and is defined as $\Delta_i = N/2^i$. Given a multi-family ${\cal F}$, a vertex $v \in V(G)$ is a {\em branchable vertex} if there exists an $i \geq 1$ such that $|N[v] \cap L({\cal F}, i)| \geq \Delta_i$.
\end{definition}

In the following $G$ is always a graph, $w$ is a weight function $w : V(G) \rightarrow \mathbb{N}$, $N$ is an integer, and ${\cal F}$ is a multi-family of subsets of $V(G)$. We now describe the main subroutine ALG$_1$ in the algorithm of Theorem~\ref{thm:main}. The algorithm takes as input $G$, $w$, $N$ and ${\cal F}$ and (as we will prove) outputs the weight of a maximum weight independent set in $G$. The algorithm of Theorem~\ref{thm:main} will call ALG$_1$ with parameters G, $N=|V(G)|$, w, and ${\cal F} = \emptyset$.
ALG$_1$ is a recursive branching algorithm with only four rules. First, if $G$ has at most one vertex, then return $V(G)$. Second, if the largest component of $G$ has at most $|N|/2$ vertices then solve the problem recursively on each component and return the sum. Third, if there exists a branchable vertex $v$, then branch on $v$ (i.e solve the problem with $v$ forced in to the independent set, and $v$ forced out). Finally, if none of the previous rules apply then {\em add} a new balanced separator $X$ (obtained by Lemma~\ref{strong gyarfas}) to ${\cal F}$. In other words, make a recursive call on the instance $(G, w, N, {\cal F} \cup \{N[X]\})$.

ALG$_1$ is very similar to well known exact exponential time branching algorithms for {\sc Independent Set}~\cite{FominK10}. The key differences are that we use the multi-family ${\cal F}$ of balanced separators to guide which vertex to branch on, that when no rules apply we add a separator to the family ${\cal F}$ (at a glance this appears to make no progress at all, but it increases the size of the level sets, making more vertices branchable), and that we wait with recursing on connected components until the size of the largest component becomes less than $N/2$ (this is primarily to simplify the analysis). 

%\todo{filler text here}
\medskip
\noindent \textbf{ALG$_1$}%\todo{make sure this does not line break}
\begin{algorithmic}[1]
\STATE \textbf{Input:} $G$, $w$, $N$, ${\cal F}$.
%A natural number $k$, a graph $G$, a sequence of vertex sets $S_1$, $S_2$,..., $S_t$, and a natural number $n$.
\STATE \textbf{Output:} {\sf mwis}$(G)$.

\IF{ $|V(G)| \leq 1$}  
\RETURN $w(V(G))$
\ELSIF{($\max_{C \in {\cal CC}(G)} |V(C)| )\leq N/2$} 
\RETURN $\sum_{C \in {\cal CC}(G)} \mbox{ALG}_1(C, w, |V(C)|, \emptyset)$
\ELSIF{exists branchable vertex $v$}
\RETURN $\max\left(\mbox{ALG}_1(G - v, w, N, {\cal F} - \{v\}), \mbox{ALG}_1(G - N[v], w, N, {\cal F} - N[v]) + w(v) \right)$
\ENDIF
\STATE \textbf{obtain} $X$ by applying Lemma~\ref{strong gyarfas} on $G$ with $i = 2$
\RETURN ALG$_1(G, w, N, {\cal F} \cup \{N[X]\})$
\end{algorithmic}
\medskip

We will distinguish between the three different kinds of recursive calls that ALG$_1$ can make. If the {\bf else if} condition on line 5 holds, then the algorithm makes the recursive calls on line 6. In this case we say that ALG$_1$ {\em recurses on connected components}. If the {\bf else if} condition on line 7 holds, then the algorithm makes the recursive calls on line 8. In this case we say that ALG$_1$ {\em branches on a branchable vertex}. Otherwise the algorithm makes the recursive call on line 10. In this case we say that ALG$_1$ {\em adds a balanced separator}. We will frequently need to refer to parts of the execution of the algorithm. For disambiguation, we collect the terminology here. 
%We also define $k$-{\em fair} instances. These are the ones that are encountered during the execution of the algorithm on $(G,w,|V(G)|,\emptyset)$ for some $P_k$-free graph $G$ and weight function $w$.

%

%\begin{definition}
%\begin{itemize}\setlength\itemsep{-.7mm}
%\item 
An {\em instance} is a four-tuple $(G, w, N, {\cal F})$.
%\item 
A {\em run} of the algorithm refers to the entire execution of the algorithm on an instance.
%\item 
A {\em call} $\mbox{ALG}_1(G, w, N, {\cal F})$ refers to the computation done in the root node of the recursion tree of the run  $\mbox{ALG}_1(G, w, N, {\cal F})$. We remark that parameters $G, w, N$, and ${\cal F}$ never change during the call $\mbox{ALG}_1$($G, w, N, {\cal F})$.
%\item 
%A run or call of the algorithm may {\em execute} a run or call on $(G', w, N', {\cal F}')$. 
When a run or a call ALG$_1(G, w, N, {\cal F})$ recursively calls ALG$_1$ on the instance $(G', w, N', {\cal F}')$ we say the run or the call {\em executes} a run or a call on $(G', w, N', {\cal F}')$.  This will sometimes be referred to as {\em makes a recursive call} $\mbox{ALG}_1(G', w, N', {\cal F}')$.
%\end{itemize}
%A call or run of ALG$_1$ is called a $k$-$fair$ $call (run)$ if it is executed by a run of $ALG_1(G,w,|V(G)|,\emptyset)$ for some $P_k$-free graph $G$ and weight function $w$. An instance $(G, w, N, {\cal F})$ is called a {\em $k$-fair} instance if $ALG_1(G, w, N, {\cal F})$ is a fair run. 
A run of $\mbox{ALG}_1(G,w,N,{\cal F})$ is called a $k$-$fair$ $run$ if $G$ is a $P_k$-free graph, $N = |V(G)|$, ${\cal F} = \emptyset$, and $w$ is a weight function. A call $\mbox{ALG}_1(G, w, N, {\cal F})$ is called a $k$-$fair$ $call$ if it is executed during the course of a $k$-fair run. An instance $(G, w, N, {\cal F})$ is called a {\em $k$-fair} instance if $\mbox{ALG}_1(G, w, N, {\cal F})$ is a k-fair call. 
%is called a $k$-$fair$ $call (run)$ if it is exe
%if its initial input is: a $P_k$-free graph $G$, a weight function $w$, $N$ = $|V(G)|$, and ${\cal F} = \emptyset$. 
%An instance ALG$_1$($G, w, N, {\cal F})$ is called a $k$-$fair$ $instance$ if a recursive call of ALG$_1$ can be made with the parameters ($G, w, N, {\cal F})$ during a $k$-fair run. 
%\end{definition}

%\begin{definition}
%We use the term an $instance$ $ALG_1(G, w, N, {\cal F}$) \todo{consider a better word} to refer to the computation that ALG$_1$ does in a single node of the recursion tree where the input associated with this node is $G, w, N,$ and ${\cal F}$. We use the term a $call$ $ALG_1(G', w, N', {\cal F}'$) to refer to a recursive call an instance makes were the recursive call has parameters $G', w, N',$ and ${\cal F}$. Note that the parameters $G, w, N$, and ${\cal F}$ never change during the instance ALG$_1$($G, w, N, {\cal F})$.
%\end{definition}
 
\begin{lemma}\label{termination}
ALG$_1(G,w,N,{\cal F})$ terminates on every input. 
\end{lemma}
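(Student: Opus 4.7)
The plan is to exhibit a well-founded measure on instances that strictly decreases along every recursive call. I would use the lexicographically-ordered triple
$$\mu(G, N, \mathcal{F}) = \Bigl(N,\ |V(G)|,\ \max\bigl(0,\ \lceil \log_2 N \rceil \cdot |V(G)| - \sum_{S \in \mathcal{F}} |S|\bigr)\Bigr) \in \mathbb{N}^3,$$
which is well-founded under the lexicographic order, and argue that each of the three recursive rules strictly decreases $\mu$.

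The two ``simple'' rules are handled by the first two coordinates. When ALG$_1$ recurses on connected components (line 6), it does so with $N' = |V(C)| \leq N/2 < N$, strictly decreasing the first coordinate. When ALG$_1$ branches on a branchable vertex (line 8), it calls itself on $(G - v, w, N, \mathcal{F} - \{v\})$ and $(G - N[v], w, N, \mathcal{F} - N[v])$; the value of $N$ is unchanged while $|V(G)|$ strictly shrinks, so the first coordinate is preserved and the second strictly decreases. Along the way I would also check that the invariant $|V(G)| \leq N$ is maintained by all three rules, starting from the initial call where $N = |V(G)|$.

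The crux of the argument is showing that the ``add separator'' rule (line 10) strictly decreases the third coordinate. For this I would first verify that whenever line 10 is reached, $\sum_{S \in \mathcal{F}} |S| \leq (\lceil \log_2 N \rceil - 1) \cdot |V(G)|$, so the third coordinate is at least $|V(G)| \geq 1$ before the addition. This bound follows from the absence of a branchable vertex: if some $v$ lies in $L(\mathcal{F}, i)$, then $v \in N[v] \cap L(\mathcal{F}, i)$, and whenever $2^i \geq N$ this forces $|N[v] \cap L(\mathcal{F}, i)| \geq 1 \geq \Delta_i$, contradicting non-branchability. Hence $L(\mathcal{F}, i) = \emptyset$ for all $i$ with $2^i \geq N$, and summing the level sizes yields the bound. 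Next I would verify that the added set $N[X]$ is nonempty so the sum strictly grows. Using the invariant $|V(G)| \leq N$ together with the precondition that the largest component of $G$ exceeds $N/2 \geq |V(G)|/2 > |V(G)|/4$, the conclusion of Lemma~\ref{strong gyarfas} with $i = 2$ that $N[X]$ is a $|V(G)|/4$-balanced separator forces $N[X] \neq \emptyset$. Consequently $\sum_{S \in \mathcal{F}} |S|$ grows by at least one, and the third coordinate drops by at least one.

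The main obstacle I anticipate is the boundary case $|V(G)| \leq 4$, where the hypothesis $2^i < |V(G)|$ of Lemma~\ref{strong gyarfas} with $i = 2$ is not satisfied, so strictly speaking line 10 cannot be executed as written. However, the conditions required to reach line 10, combined with the invariant $|V(G)| \leq N$, force both $|V(G)|$ and $N$ to be small constants in this regime, reducing the situation to finitely many configurations of $(G, \mathcal{F})$ that can be inspected by hand. Putting these pieces together, $\mu$ strictly decreases on every recursive call, so the recursion tree has no infinite root-to-leaf path and ALG$_1$ always terminates.
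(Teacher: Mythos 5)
Your proof is correct and takes essentially the same approach as the paper: both arguments rest on the single key fact that any vertex lying in at least $\log(N)$ of the sets of ${\cal F}$ is automatically branchable (because $\Delta_{\log(N)} \leq 1$), so the separator-adding rule --- the only rule that does not shrink the pair $(N,|V(G)|)$ --- cannot fire too many times consecutively. You encode this as the third coordinate of a decreasing lexicographic measure, whereas the paper bounds the number of consecutive separator additions by $|V(G)|\cdot\log(N)$ via pigeonhole and then inducts on $|V(G)|+N$; these are the same argument in slightly different dress. One small caveat: the boundary case $|V(G)| \leq 4$ you flag is a genuine loose end (the paper's proof silently ignores it too), but your proposed resolution is not quite sound --- since ${\cal F}$ is a multi-family it does not range over finitely many values even when $|V(G)|$ and the total mass $\sum_{S\in{\cal F}}|S|$ are bounded, because it can accumulate empty sets. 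The cleaner fix is to observe that when $|V(G)|\leq 2^i$, Lemma~\ref{strong gyarfas} may trivially return $X=V(G)$; at line~10 one has $|V(G)|\geq 2$, so $N[X]$ is still nonempty and your third coordinate still strictly drops.
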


\begin{proof}
Consider a run of ALG$_1$ with initial input $G, w, N,$ and ${\cal F}$. Whenever the algorithm makes a recursive call ALG$_1(G',w,N',{\cal F}')$ we have that $|V(G')| \leq |V(G)|$ and $N' \leq N$. Furthermore, whenever the algorithm recurses on connected components or branches on a branchable vertex, then it executes ALG$_1(G',w,N',{\cal F}')$ with either $|V(G')| < |V(G)|$ or $N' < N$. Finally, ALG$_1$ cannot add a balanced separator for over $|V(G)| \cdot \log(N)$ successive recursive calls since then a call ALG$_1(G,w,N,{\cal F}'')$ with ${\cal F''} = |V(G)| \cdot \log(N)$ would add a balanced separator. However, since each new balanced separator must be non-empty (since otherwise the algorithm would have recursed on connected components) we have that $L({\cal F''}, \log(N)) \neq \emptyset$, and so the call ALG$_1(G,w,N,{\cal F}'')$ would branch on a vertex. This contradicts that the call added a balanced separator, and proves that ALG$_1$ cannot add a balanced separator for over $|V(G)| \cdot \log(N)$ successive recursive calls. It follows by induction on $|V(G)| + N$ that ALG$_1$ always terminates.
%The statement now follows by to make a recursive call by adding a balanced separator, but there will be a branchable vertex as $L({\cal F''}, \log(N))$ must be non empty. It follows by induction on $|V(G)| + N$ that ALG$_1$ always terminates. 
\end{proof}

%Consider an run of ALG$_1$ on input $G,w,N, and {\cal F})$. We will prove by induction on $|V(G)| + N$ that this run will terminate. If $|V(G)|$ = 0 or 1 then clearly the run terminates.  If the algorithm ever makes a recursive call ALG(by recursing on connected components or branching on a branchable vertex, then the s

%Note that the size of ${\cal F}$ always increases by one if none of the three if/else if conditions are true and that if the size of ${\cal F}$ exceeds $|V(G)|^2$ there must exist a branchable vertex. This implies that ALG$_1$ must eventually satisfy one of the first three if/else if condition, where ALG$_1$ will then be called using a graph with fewer vertices than the initial input graph and/or a smaller value of $N$. Since ALG$_1$ clearly terminates when its input is a graph with one vertex, induction on the number of vertices of $G$ plus $N$ shows that this algorithm always terminates.

%Lemma \ref{termination} gives us that the recursion tree of any run of ALG$_1$ is finite. We will use this result without further mention.

\begin{lemma}\label{mis return}
A run ALG$_1(G, w, N, {\cal F})$ always returns the weight of a maximum weight independent set of $G$ under the weight function $w$.
\end{lemma}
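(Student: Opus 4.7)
The plan is to prove the statement by induction on the depth of the recursion tree rooted at the call $\mbox{ALG}_1(G,w,N,{\cal F})$. By Lemma~\ref{termination} this tree is finite, so the induction is well-founded. The base case is when the recursion tree has depth $0$, i.e.\ when line 4 fires. In that case $|V(G)| \leq 1$, so the only independent sets of $G$ are $\emptyset$ and (if $V(G)$ is nonempty) $V(G)$ itself; since $w$ has codomain $\mathbb{N}$ the maximum weight independent set has weight $w(V(G))$, which is precisely what line 4 returns.

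For the inductive step, assume the claim holds for every recursive call made by $\mbox{ALG}_1(G,w,N,{\cal F})$ and perform a case analysis on which rule fires. If the call \emph{recurses on connected components} (lines 5--6), then every independent set of $G$ is the disjoint union of its intersections with the connected components of $G$, and these intersections are independent in the respective components. Conversely, the union of independent sets chosen in each component is independent in $G$. Thus $\mbox{{\sf mwis}}(G) = \sum_{C \in {\cal CC}(G)} \mbox{{\sf mwis}}(C)$, and the inductive hypothesis applied to each recursive call gives the desired equality. If the call \emph{branches on a branchable vertex} $v$ (lines 7--8), split maximum weight independent sets of $G$ according to whether they contain $v$: those that avoid $v$ are precisely the independent sets of $G-v$, while those that contain $v$ consist of $v$ together with an independent set of $G - N[v]$ (since every neighbor of $v$ is excluded). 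Hence $\mbox{{\sf mwis}}(G) = \max\bigl(\mbox{{\sf mwis}}(G-v),\, w(v) + \mbox{{\sf mwis}}(G - N[v])\bigr)$, and applying the inductive hypothesis to the two recursive calls yields the correct value.

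Finally, if the call \emph{adds a balanced separator} (lines 9--10), the recursive call is $\mbox{ALG}_1(G, w, N, {\cal F} \cup \{N[X]\})$. The parameters $G$ and $w$ are unchanged, so $\mbox{{\sf mwis}}(G)$ in the recursive instance is identical to that of the original instance. Since this is a child in the recursion tree, by the inductive hypothesis the recursive call returns $\mbox{{\sf mwis}}(G)$, and the parent call returns the same value. This completes all three inductive cases.

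There is really no significant obstacle here: the algorithm operates entirely within the standard branching framework for {\sc Independent Set}, and the multi-family ${\cal F}$ serves only to guide the choice of branching vertex rather than to affect the set of candidate solutions. The one mildly subtle point is that the ``add a balanced separator'' recursive call does not make progress in $|V(G)|$ or $N$, which is why we need to induct on recursion-tree depth (guaranteed to be finite by Lemma~\ref{termination}) rather than on $|V(G)| + N$ directly.
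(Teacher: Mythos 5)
Your proof is correct and follows the same strategy as the paper: induction on the (finite, by Lemma~\ref{termination}) height of the recursion tree, with the base case $|V(G)| \leq 1$. You simply spell out the three case analyses that the paper compresses into ``it is clear from the algorithm.''
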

\begin{proof}
Consider a run of ALG$_1$ with initial input $G, w, N,$ and ${\cal F}$. It is clear from the algorithm that if each run ALG$_1$($G', w, N', {\cal F}$) that is executed by the call ALG$_1$($G, w, N, {\cal F}$) returns the weight of a maximum weight independent set of $G'$ with weight function $w$, then so would the run ALG$_1$($G, w, N,$ ${\cal F}$). By Lemma \ref{termination} the height of the recursion tree is bounded, and the result is trivially true for the base case of $|V(G)| \leq 1$, so the result follows by induction on the height of the recursion tree of the run ALG$_1$($G, w, N, {\cal F}$). 
\end{proof}

%We have now proved that  ALG$_1$ always terminates and that it always outputs the correct answer. The remainder of the section is devoted to the running time analysis.

%shows that ALG$_1$ will always return some number. Observe that if an instance of ALG$_1$ enters one of the two else if statement and all of the calls of ALG$_1$ made in the else if statement return the weight of a maximum wight independent set of the graphs they are called on, then this instance of ALG$_1$ also returns the weight of a maximum weight independent set for the graph it was called on. It follows by induction on the number of vertices of $G$ plus $N$ that ALG$_1$ always returns the weight of a maximum weight independent set. 

%\begin{definition}\label{fair run}
%A run of ALG$_1$ is called a $k$-$fair$ $run$ if its initial input is: a $P_k$-free graph $G$, a weight function $w$, $N$ = $|V(G)|$, and ${\cal F} = \emptyset$. An instance ALG$_1$($G, w, N, {\cal F})$ is called a $k$-$fair$ $instance$ if a recursive call of ALG$_1$ can be made with the parameters ($G, w, N, {\cal F})$ during a $k$-fair run. 

% , and we will denote this call by ALG$_1(G, w, N, {\cal F})$.

%call of ALG$_1$ is called a $k$-$fair$ $instance$ if it can be made as a call during a k-fair run.
%\end{definition}
We have now proved that  ALG$_1$ always terminates and that it always outputs the correct answer. The remainder of the section is devoted to the running time analysis.
We will now prove some lemmas to help us bound the run time of ALG$_1$ on $k$-fair runs. First, in Observation~\ref{fair run balanced separator} we will prove that ${\cal F}$ remains a multi-family of balanced separators of $G$ throughout the execution of the algorithm. In Observation~\ref{log(N) + 1 is empty} we will show that no vertex appears in many (more than $\log N$) sets in ${\cal F}$. This will ensure that ${\cal F}$ can never grow too large, because, as we will show in Lemma \ref{F size}, a connected $P_k$-free graph can not contain a large fractional packing of balanced separators. 

\begin{observation}\label{fair run balanced separator}
Let ($G$, $w$, $N$, ${\cal F}$) be a $k$-fair instance. % every connected component of $G$ has at most $N/2$ vertices, or every set  $S \in {\cal F}$ is a $N/4$-balanced separator of $G$.
Then every set  $S \in {\cal F}$ is a $\frac{N}{4}$-balanced separator of $G$.
%and let $C$ be the largest connected component of $G$. Then $|C| \leq N/2$ or 
%In a k-fair instance with inputs $G$, $w$, $N$, and ${\cal F}$, every vertex set $S \in {\cal F}$ is either a $|V(C)|$/2-balanced separator for $C$, where $C$ is the largest connected component of $G$, or $|C|$ $\leq$ $N$/2.
\end{observation}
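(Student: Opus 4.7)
The plan is to prove the statement by induction on the depth of the instance $(G, w, N, {\cal F})$ in the recursion tree of the $k$-fair run that produces it. The base case is the root of that run, where by definition ${\cal F} = \emptyset$, so the claim is vacuous. For the inductive step, I would examine separately the three ways in which a child instance $(G', w, N', {\cal F}')$ is generated from a parent instance $(G, w, N, {\cal F})$ for which the claim already holds.

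First, if the parent \emph{recurses on connected components} (line~6), the child has ${\cal F}' = \emptyset$ and the statement is again vacuous. Second, if the parent \emph{branches on a branchable vertex} $v$ (line~8), then $N' = N$, $G' = G - Y$, and ${\cal F}' = {\cal F} - Y$ for some $Y \in \{\{v\}, N[v]\}$. For any $S \in {\cal F}$, the inductive hypothesis says that every connected component of $G - S$ has at most $N/4$ vertices. The key observation is that $(G - Y) - (S - Y) = G - (S \cup Y)$, and each connected component of $G - (S \cup Y)$ is contained in some connected component of $G - S$, so has at most $N/4$ vertices; hence $S - Y$ is an $N/4$-balanced separator of $G - Y$. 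Third, if the parent \emph{adds a balanced separator} (line~10), then $G' = G$, $N' = N$, and ${\cal F}' = {\cal F} \cup \{N[X]\}$ where $X$ is obtained from Lemma~\ref{strong gyarfas} applied to $G$ with $i=2$. The sets already in ${\cal F}$ remain $N/4$-balanced separators of $G$ by the inductive hypothesis, and Lemma~\ref{strong gyarfas} guarantees that $N[X]$ is an $\frac{|V(G)|}{4}$-balanced separator of $G$.

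The only mild subtlety is that Lemma~\ref{strong gyarfas} yields a separator whose balance parameter is $\frac{|V(G)|}{4}$, not $\frac{N}{4}$. To close this gap I would carry along the auxiliary invariant that $|V(G)| \leq N$ holds for every instance reached during a $k$-fair run. This is also a straightforward induction on depth: at the root of the fair run we have $N = |V(G)|$ by the definition of $k$-fair; the rule on line~6 resets $N'$ to exactly $|V(C)| = |V(G')|$; the rules on lines~8 and~10 keep $N$ fixed while either shrinking $G$ or leaving it unchanged. Given this invariant, an $\frac{|V(G)|}{4}$-balanced separator is automatically an $N/4$-balanced separator, completing the inductive step.

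There is no real obstacle here; the statement is essentially a bookkeeping invariant, and the argument amounts to verifying that (i) removing vertices only shrinks the components of $G - S$, (ii) Lemma~\ref{strong gyarfas} delivers the promised balance in terms of $|V(G)|$, and (iii) $|V(G)| \leq N$ holds throughout a $k$-fair run. If anything, the step most worth writing carefully is the bookkeeping in case two, where one must notice that replacing $S$ by $S - Y$ in the smaller graph $G - Y$ is exactly what one needs, as the residual component structure can only get finer under vertex deletion.
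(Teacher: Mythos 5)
Your proof is correct and follows the same route as the paper: induction on the depth of the call in the recursion tree, with a case split over which recursive rule produced the current instance, observing that deleting vertices only refines the components of $G - S$ and that Lemma~\ref{strong gyarfas} (with $i=2$) supplies a $\frac{|V(G)|}{4}$-balanced separator. You are slightly more explicit than the paper in spelling out the auxiliary invariant $|V(G)| \leq N$, which the paper leaves implicit when it asserts $X$ is "generated in such a way that it is guaranteed to be an $\frac{N}{4}$-balanced separator."
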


\begin{proof}
Consider a $k$-fair instance $(G, w, N, {\cal F})$. If ${\cal F} = \emptyset$ then the result is trivially true, so assume ${\cal F} \neq \emptyset$. It follows ALG$_1$ executes ALG$_1(G, w, N, {\cal F})$ during a $k$-fair call ALG$_1(G', w, N, {\cal F'})$ by branching on a branchable vertex or ALG$_1$ executes ALG$_1(G, w, N, {\cal F})$ during a $k$-fair call ALG$_1(G, w, N, {\cal F''})$ by adding a balanced separator, $X$. In the first case, if all sets of ${\cal F'}$ are $\frac{N}{4}$-balanced separators for $G'$, then since $G$ = $G'-S$ for some vertex set $S$, and ${\cal F}$ = ${\cal F'}-S$, all sets of ${\cal F}$ are $\frac{N}{4}$-balanced separators for $G$. In the second case, $X$ is generated in such a way that it is guaranteed to be an $\frac{N}{4}$-balanced separator for $G$, so if all sets of ${\cal F''}$ are $\frac{N}{4}$-balanced separators for $G$, then all sets of ${\cal F}$ are $\frac{N}{4}$-balanced separators for $G$. The statement of the observation now follows by induction on the depth of the call ALG$_1(G, w, N, {\cal F})$ in the recursion tree.
\end{proof}

%Consider a $k$-fair instance $(G, w, N, {\cal F})$ and suppose that ALG$_1(G, w, N, {\cal F})$ makes a recursive call on the $k$-fair instance $G', w, N', {\cal F}')$. If this call is made 

%Let $S \in {\cal F}$, and assume that $|C|$ $>$ $N$/2. Then since $S$ was chosen to be at least an $N/4$-balanced separator for a supergraph of $G$ and therefor $C$, it follows that $S$ is an $N/4$-balanced separator for $C$. By the assumption of the size of $C$, it we have that $S$ is a $|V(C)|$/2-balanced separator for $C$.

\begin{observation}\label{log(N) + 1 is empty}
For every $k$-fair instance $(G, w, N, {\cal F})$, we have that $L({\cal F},\log(N)+1) = \emptyset$. 
\end{observation}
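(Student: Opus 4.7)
The plan is to prove this by induction on the depth of the call ALG$_1(G, w, N, {\cal F})$ inside the recursion tree of the enclosing $k$-fair run. The base case is immediate: the root call of a $k$-fair run has ${\cal F} = \emptyset$ by definition, so $L({\cal F}, j) = \emptyset$ for every $j \geq 1$.

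For the inductive step I would examine the three ways in which a parent $k$-fair call $(G', w, N', {\cal F}')$ can execute the current call $(G, w, N, {\cal F})$. If the parent recurses on connected components (line 6), then by inspection ${\cal F} = \emptyset$ and the conclusion is trivial. If the parent branches on a branchable vertex $u$ (line 8), then $N = N'$ and each set of ${\cal F}$ is obtained from the corresponding set of ${\cal F}'$ by deleting a vertex set $Y \in \{\{u\}, N[u]\}$, so any vertex $v$ lies in at most as many sets of ${\cal F}$ as of ${\cal F}'$; thus $L({\cal F}, j) \subseteq L({\cal F}', j)$ for all $j$, and the induction hypothesis closes this case.

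The interesting case is when the parent adds a balanced separator (line 10), so that $N = N'$ and ${\cal F} = {\cal F}' \cup \{N[X]\}$. Suppose for contradiction that some $v \in L({\cal F}, \log(N)+1)$. Then $v$ appears in at least $\log(N)+1$ sets of ${\cal F}' \cup \{N[X]\}$, hence in at least $\log(N)$ sets of ${\cal F}'$, i.e.\ $v \in L({\cal F}', \log(N))$. The key observation, and really the whole content of the lemma, is that because $\log(N) = \lceil \log_2 N \rceil$ we have $2^{\log(N)} \geq N$ and therefore $\Delta_{\log(N)} = N / 2^{\log(N)} \leq 1$. Since $v \in N[v]$, this gives $|N[v] \cap L({\cal F}', \log(N))| \geq 1 \geq \Delta_{\log(N)}$, so $v$ is a branchable vertex in the parent instance. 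But the parent only reaches the separator-adding line when no branchable vertex exists, a contradiction.

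The main obstacle is essentially bookkeeping: one has to be careful that adding a set to ${\cal F}$ can raise the level of a vertex by only one, and that the ceiling convention $\log(N) = \lceil \log_2 N \rceil$ really is what forces $\Delta_{\log(N)} \leq 1$. This single inequality is what makes the invariant self-enforcing, since it guarantees that any vertex which would push some level past $\log(N)$ is already branchable and thus blocks the separator-adding step from producing such an instance.
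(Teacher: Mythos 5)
Your proof is correct and follows essentially the same inductive route as the paper's, including the same treatment of all three recursive cases. The one place you go beyond the paper is in spelling out \emph{why} every vertex of $L({\cal F}',\log(N))$ is branchable (namely $\Delta_{\log(N)} = N/2^{\log(N)} \leq 1$ together with $v \in N[v]$), a step the paper states without the intermediate inequality.
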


\begin{proof}
Consider a $k$-fair call ALG$_1(G, w, N, {\cal F})$. We will prove the statement by induction on the depth the call ALG$_1(G, w, N, {\cal F})$ in the recursion tree of a run ALG$_1(G^*,w,|V(G^*)|,\emptyset)$ which executes ALG$_1(G, w, N, {\cal F})$.

If ${\cal F} = \emptyset$ then the result is trivially true. Suppose now that ${\cal F} \neq \emptyset$, it follows ALG$_1$ executes ALG$_1(G, w, N, {\cal F})$ during a $k$-fair call ALG$_1(G', w, N, {\cal F'})$ by branching on a branchable vertex 
%or during a $k$-fair call ALG$_1(G, w, N, {\cal F''})$ 
or by adding a balanced separator $X$. In the first case ${\cal F}$ =  ${\cal F'}-S$ for some vertex set $S$. By the induction hypothesis $L({\cal F'},\log(N)+1) = \emptyset$ and hence $L({\cal F},\log(N)+1) = \emptyset$. 
In the second case, ALG$_1(G, w, N, {\cal F'})$ does not branch on a branchable vertex, so we have that $L({\cal F'},\log(N)) = \emptyset$ since every vertex in $L({\cal F'},\log(N))$ is branchable. It follows that $L({\cal F},\log(N)+1) =  L({\cal F'} \cup X,\log(N)+1) = \emptyset$.
\end{proof}

\begin{lemma}\label{F size}
For every $k$-fair instance $(G, w, N, {\cal F})$ it holds that $|{\cal F}| \leq 10k \cdot \log(N)$.
\end{lemma}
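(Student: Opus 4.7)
The plan is to prove $|\mathcal{F}| \leq 10k\log(N)$ by a double-counting argument combining the two structural facts already in hand---every $S \in \mathcal{F}$ is an $\frac{N}{4}$-balanced separator of $G$ (Observation~\ref{fair run balanced separator}) and no vertex of $G$ lies in more than $\log(N)$ sets of $\mathcal{F}$ (Observation~\ref{log(N) + 1 is empty})---with the fact that in a $P_k$-free connected graph every shortest path is induced, hence has at most $k-1$ vertices. If $\mathcal{F} = \emptyset$ the bound is immediate, so I assume $|\mathcal{F}| \geq 1$.

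First I would reduce to the instance on which the count is cleanest. Tracing back in the recursion tree, let $(G^*, w, N, \mathcal{F}^*)$ be the most recent ancestor call that executed the add-separator branch. Since branching preserves $|\mathcal{F}|$ and component recursion resets $\mathcal{F}$ to $\emptyset$, we have $|\mathcal{F}^*| = |\mathcal{F}| - 1$, and it suffices to show $|\mathcal{F}^*| \leq 10k\log(N) - 1$. Because that ancestor reached line 10, the preceding \textbf{if}-branches did not fire, so in particular the largest connected component $C$ of $G^*$ satisfies $|C| > N/2$. The call $(G^*, w, N, \mathcal{F}^*)$ is itself $k$-fair, so both observations apply, and $C$ is $P_k$-free as an induced subgraph of $G^*$.

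Next comes the double count over pairs in $\binom{C}{2}$. For each pair $\{u,v\}$ fix a shortest $u$--$v$ path $\pi_{uv}$ in $C$; being induced it has at most $k-1$ vertices. Call $S \in \mathcal{F}^*$ a \emph{separator} of $\{u,v\}$ if $u$ and $v$ lie in different components of $C - S$; in that case $V(\pi_{uv}) \cap S \neq \emptyset$. On one hand, each pair is separated by at most $(k-1)\log(N)$ members of $\mathcal{F}^*$, since each vertex of $\pi_{uv}$ lies in at most $\log(N)$ sets by Observation~\ref{log(N) + 1 is empty}. On the other hand, for a fixed $S$ the components of $C - S$ have sizes $c_i \leq N/4$, and noting $|S \cap C| \geq 1$ (otherwise $C$ would lie entirely in one component of $G - S$ of size $>N/4$, contradicting the $\frac{N}{4}$-balanced property),
\[
\binom{|C|}{2} - \sum_i \binom{c_i}{2} \;\geq\; \frac{|C|(|C|-1)}{2} - \frac{N(|C|-1)}{8} \;=\; \frac{(|C|-1)(4|C|-N)}{8} \;\geq\; \frac{N^2}{16},
\]
using $\sum_i c_i^2 \leq (N/4)\sum_i c_i \leq (N/4)(|C|-1)$ and $|C| \geq (N+1)/2$. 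Summing the pair-separator count both ways gives $|\mathcal{F}^*|\cdot\frac{N^2}{16} \leq \binom{|C|}{2}(k-1)\log(N) \leq \frac{N^2(k-1)\log(N)}{2}$, so $|\mathcal{F}^*| \leq 8(k-1)\log(N)$ and therefore $|\mathcal{F}| \leq 8(k-1)\log(N)+1 \leq 10k\log(N)$.

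The main obstacle I expect is producing the $\Omega(N^2)$ lower bound on the number of pairs separated by a single $S$. The trivial estimate $\binom{|C|}{2} - \sum\binom{c_i}{2} \geq 0$ is useless, and a careless arithmetic grouping lets the leading $N^2/2$ terms cancel exactly; the argument only goes through by using $|C| > N/2$, $c_i \leq N/4$, and $|S \cap C| \geq 1$ in concert so that the factor $(|C|-1)(4|C|-N)$ survives as $\Omega(N^2)$. The remaining ingredients---that $k$-fairness is preserved along the recursion tree and that shortest paths in a $P_k$-free graph are induced---are routine.
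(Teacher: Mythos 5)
Your proof is correct and follows essentially the same approach as the paper: trace back to the last add-separator call, use that the large component $C$ still has more than $N/2$ vertices, that each $S\in\mathcal{F}$ is an $N/4$-balanced separator (Observation~\ref{fair run balanced separator}), that each vertex lies in at most $\log N$ sets (Observation~\ref{log(N) + 1 is empty}), and that shortest paths in $C$ have fewer than $k$ vertices by $P_k$-freeness. The only difference is cosmetic: the paper phrases the count as a probabilistic averaging argument over a uniformly random pair $(x,y)$ followed by a pigeonhole step on a specific pair $(a,b)$, whereas you carry out the same estimate as an explicit double-count over (pair, separator) incidences with slightly different (but sufficient) constants.
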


\begin{proof}
%Consider a $k$-fair instance ALG$_1(G, w, N, {\cal F})$. We will prove the result by induction on the depth of the call ALG$_1(G, w, N, {\cal F})$ in the recursion tree. If ${\cal F} = \emptyset$ then the result is trivially true, which takes care of the case that ALG$_1(G, w, N, {\cal F})$ is the initial call, and we may assume ${\cal F} \neq \emptyset$. By the inductive hypothesis, it follows that if ALG$_1$ makes the call ALG$_1(G, w, N, {\cal F})$ during a call ALG$_1(G', w, N, {\cal F})$ by branching on a branchable vertex then ${\cal F} \leq 10k \cdot \log(N)$, so we assume ALG$_1$ makes the call ALG$_1(G, w, N, {\cal F})$ during a call ALG$_1(G, w, N, {\cal F'})$ by adding a balanced separator, $X$.
Consider a $k$-fair instance $(G, w, N, {\cal F})$. We will prove the result by induction on the depth of the call ALG$_1(G, w, N, {\cal F})$ in the recursion tree of a run ALG$_1(G^*,w,|V(G^*)|,\emptyset)$ which executes ALG$_1(G, w, N, {\cal F})$. 

In the base case ${\cal F} = \emptyset$, and the claim of the lemma holds trivially, so assume ${\cal F} \neq \emptyset$. Thus the call ALG$_1(G, w, N, {\cal F})$ is executed by a $k$-fair call ALG$_1(G', w, N', {\cal F}')$. By the induction hypothesis $|{\cal F}'| \leq 10k \cdot \log(N)$  ($N = N'$ since ${\cal F} \neq \emptyset$). Thus, unless ALG$_1(G', w, N', {\cal F}')$ recurses by adding a balanced separator we have that $|{\cal F}| \leq 10k \cdot \log(N)$ as well. So assume that ALG$_1(G', w, N', {\cal F}')$ adds a balanced separator $X$ and that therefore $G'=G$, $N'=N$ and ${\cal F} = {\cal F}' \cup \{X\}$. We prove that $|{\cal F}'| < 10k \cdot \log(N)$, then the result follows since  $|{\cal F}| = |{\cal F}'| + 1$.

%If the instance ALG$_1(G, w, N, {\cal F})$ is the intial instance of a $k$-fair run then ${\cal F} = \emptyset$, which takes care of the base case. 
%Now, assume that assume $|{\cal F}| > 10k \cdot \log(N)$. By the inductive hypothesis, the only way for $|{\cal F}| > 10k \cdot \log(N)$ to be true is for ALG$_1$ to make the call ALG$_1(G, w, N, {\cal F})$ during a $k$-fair instance ALG$_1(G, w, N, {\cal F'})$ by adding a balanced separator, $X$. 

%Suppose for contradiction that $|{\cal F}| > 10k \cdot \log(N)$.

Suppose for contradiction that $|{\cal F}'| \geq 10k \cdot \log(N)$, we will now produce an induced path of length $k$ in $G$, contradicting that $G$ is $P_k$-free.
The call $\mbox{ALG}_1(G', w, N', {\cal F'}) = \mbox{ALG}_1(G, w, N, {\cal F'})$ added a balanced separator, and so the size of the largest connected component, $C$, in $G$ is greater than $\frac{N}{2}$. This, together with Observation \ref{fair run balanced separator} then gives that every set $S \in {\cal F}$ is a $\frac{|V(C)|}{2}$-balanced separator for $C$. Consider the following random process. Uniformly at random, select vertices $x$ and $y$ in $C$. For all $S \in {\cal F}$, let $X_S$ denote the random variable that is 1 if $x$ and $y$ are not in the same connected component of $C-S$ and $0$ otherwise. Since $S$ is a $\frac{|V(C)|}{2}$-balanced separator for $C$, the probability that $x$ and $y$ are in the same connected component of $C-S$ is at most $\frac{1}{2}$. Thus $X_S = 1$ with probability at least $\frac{1}{2}$. We denote by ${\cal F}_{x,y}$ all sets $S \in {\cal F}$ such that $x$ and $y$ are not in the same component of $C-S$, again including multiplicity. By linearity of expectation we have that
$$E[|{\cal F}_{x,y}|] = \sum_{S \in {\cal F}} E[X_S] \geq |{\cal F}|/2 > 5k \cdot \log(N).$$ 

It follows there exists vertices $a$ and $b$ in $C$ such that $|{\cal F}_{a,b}| >  5k \cdot \log(N)$. Let $P$ be a shortest path connecting $a$ and $b$ in $C$. Since $G$ is $P_k$-free, $P$ has at most $k-1$ vertices. By Observation~\ref{log(N) + 1 is empty}, each of these vertices is contained in at most $\log(N)$ sets in ${\cal F}_{a,b}$. But then there exists a set $S \in {\cal F}_{a,b}$ disjoint from $V(P)$ contradicting that $a$ and $b$ are not in the same component of $C - S$.
%
%We denote by ${\cal F}_{a,b}$ all sets $S \in {\cal F}$ such that $a$ and $b$ are not in the same component of $C-S$, again including multiplicity. Now, if $a$ and $b$ are the end vertices of an an induced path of length less than $k$, then for each $S$, at least one of the vertices in this path must be in $S$ since $a$ and $b$ are not in the same component of $C-S$. However, by Observation \ref{log(N) + 1 is empty} no vertex can lie in more than $\log(N)$ of the at least $5k \cdot \log(N)$ sets of ${\cal F}_{a,b}$, and therefore this cannot happen. Hence, $a$ and $b$ have no path of length less than $k$ between them. Because they lie in the same component this implies they have an induced path of length at least $k$ between them yielding the desired contradiction.
\end{proof}

%It follows ALG$_1$ makes the call ALG$_1(G, w, N, {\cal F})$ during a call ALG$_1(G', w, N, {\cal F})$ by branching on a branchable vertex or ALG$_1$ makes the call ALG$_1(G, w, N, {\cal F})$ during a call ALG$_1(G, w, N, {\cal F'})$ by adding a balanced separator, $X$. In the first case, by the inductive hypothesis, we have that ${\cal F} \leq 10k \cdot \log(N)$. 

%We will show that the size of the largest connected component of $G$ is less than or equal to $N/2$. This will imply ALG$_1$ must satisfy one of the conditions of line 3 or 5, and therefore the size of ${\cal F}$ will not grow larger than $10k \cdot\log(N)$. Assume that neither of the conditions are satisfied, and let $C$ be the connected component of $G$ with over $N$/2 vertices. This implies that ${\cal F}$ cannot contain any empty sets, since otherwise the largest component of $G$ would have size less than $N$/2. We will show that $C$ must have an induced path of length $k$, contradicting the assumption that this is a k-fair instance.

\medskip

%enters the first if else statement  here the size of ${\cal F}$ is exactly 10$klog(N)$. We will show that the size of the ${\cal F}$ must contain an empty set 

%Assume, to the contrary, that the size of ${\cal F}$ is at least 10$klog(N$) in some k-fair instance with input $G$, $w$, $N$, and ${\cal F}$. We will show this implies the existence of an induced path of length $k$ in $G$, contradicting the assumption that this a k-fair instance. If any vertex set $S$ $\in$ ${\cal F}$ is empty, then since $S$ is a 4-balanced separator

The following observation shows that the level sets do not grow a lot in each successive recursive call, and that they therefore never get very large. Note in particular that the size of level set $i$ drops exponentially with $i$.

\begin{observation}\label{level set size}
For every $k$-fair call ALG$_1(G, w, N, {\cal F})$ that adds a balanced separator $X$ and every $i$,
$$|L({\cal F}\cup X,i)| \leq \Delta_{i-1} \cdot 8k + |L({\cal F},i)|.$$ 
Furthermore, for every $k$-fair instance $(G', w, N', {\cal F}')$,
$$|L({\cal F}',i)| \leq \Delta_{i-1} \cdot 8k \cdot |{\cal F}'|.$$
\end{observation}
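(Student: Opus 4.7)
The plan is to prove the two bounds in sequence: the first directly from the structure of the ``add a balanced separator'' step, and the second by induction on the depth of the call $\mbox{ALG}_1(G',w,N',{\cal F}')$ in the recursion tree of the enclosing $k$-fair run.

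For the first inequality, fix a $k$-fair call that adds the separator $X$ and let $X'$ be the set returned by Lemma~\ref{strong gyarfas} applied with parameter $2$, so that $X = N[X']$ and $|X'|\leq 2^{2+1}\cdot k = 8k$ (using $P_k$-freeness, which is preserved under taking induced subgraphs). A vertex $v$ lies in $L({\cal F}\cup X,i)\setminus L({\cal F},i)$ precisely when appending $X$ bumps $v$'s multiplicity from $i-1$ to $i$, i.e.\ when $v\in X$ and $v\in L({\cal F},i-1)$. Hence
$$|L({\cal F}\cup X,i)| \leq |L({\cal F},i)| + |X\cap L({\cal F},i-1)|,$$
and it suffices to show $|X\cap L({\cal F},i-1)| \leq 8k\cdot\Delta_{i-1}$. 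The crucial observation is that the ``add a balanced separator'' step is executed only when the call has no branchable vertex, so $|N[u]\cap L({\cal F},j)| < \Delta_j$ for every $u\in V(G)$ and every $j\geq 1$. Writing $X = \bigcup_{u\in X'}N[u]$ and applying the union bound gives, for $i\geq 2$,
$$|X\cap L({\cal F},i-1)| \leq \sum_{u\in X'}|N[u]\cap L({\cal F},i-1)| \leq |X'|\cdot\Delta_{i-1} \leq 8k\cdot\Delta_{i-1}.$$
The degenerate case $i=1$ must be handled separately because the non-branchable condition says nothing about $L({\cal F},0)=V(G)$; here the volumetric bound $|X|\leq|V(G)|\leq N = \Delta_0 \leq 8k\cdot\Delta_0$ suffices.

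For the second inequality I would induct on the depth of the call $\mbox{ALG}_1(G',w,N',{\cal F}')$ in the recursion tree of the $k$-fair run that executes it. The base case ${\cal F}' = \emptyset$ is immediate since $|L({\cal F}',i)|=0$. For the inductive step, a $k$-fair call with ${\cal F}'\neq\emptyset$ cannot be spawned by recursion on connected components (which would reset ${\cal F}'$ to $\emptyset$), so it is executed by a parent call $\mbox{ALG}_1(G'',w,N'',{\cal F}'')$ that either branches on a branchable vertex or adds a balanced separator. In the branching case, $N''=N'$, $|{\cal F}''|=|{\cal F}'|$, and ${\cal F}' = {\cal F}''-S$ for some set $S$; removing $S$ from every member of the multi-family only decreases multiplicities, so $L({\cal F}',i)\subseteq L({\cal F}'',i)$ and the induction hypothesis transfers immediately. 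In the separator case, $N''=N'$, $|{\cal F}'|=|{\cal F}''|+1$, and the first inequality combined with the induction hypothesis applied to the parent yields
$$|L({\cal F}',i)| \leq 8k\Delta_{i-1} + |L({\cal F}'',i)| \leq 8k\Delta_{i-1} + 8k\Delta_{i-1}\cdot|{\cal F}''| = 8k\Delta_{i-1}\cdot|{\cal F}'|.$$

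The main nuisance is the bookkeeping around the boundary case $i=1$, where the non-branchable hypothesis supplies no direct bound on $|N[u]\cap L({\cal F},0)|$ and must be replaced by the trivial $|X|\leq N$ estimate; beyond that, the argument is essentially a union bound over the at-most-$8k$ vertices in $X'$, followed by a depth induction whose two nontrivial cases match the two nontrivial branches of $\mbox{ALG}_1$.
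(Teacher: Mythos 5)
Your proof is correct and follows essentially the same approach as the paper: a union bound over the at-most-$8k$ vertices of the set $X'$ from Lemma~\ref{strong gyarfas} (exploiting the non-branchability condition $|N[u]\cap L({\cal F},j)|<\Delta_j$) for the first inequality, and induction on recursion depth for the second. You are a bit more careful than the paper in separately handling the $i=1$ boundary case and in spelling out the branching/separator cases of the depth induction, but the argument is the same.
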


\begin{proof}
Consider a $k$-fair call ALG$_1(G, w, N, {\cal F})$ that adds a balanced separator $X$. Let $X_j$ denote the set of vertices in $L({\cal F},j) \cap X$, then we can see that $|L({\cal F}\cup \{X\},j)| \leq L({\cal F},j) + |X_{j-1}|.$ Since the call ALG$_1(G, w, N, {\cal F})$ adds a balanced separator, $X$, there are no branchable vertices. So, we have that for all $v \in G$, $|N[v] \cap L({\cal F}, j)| \leq \Delta_j$. Furthermore, by Lemma \ref{strong gyarfas}, since $X$ is generated as an $\frac{N}{4}$-balanced separator and therefore a $\frac{|G|}{4}$-balanced separator for $G$, $X$ is the neighborhood of at most $8k$ vertices, hence $|X_{j-1}|$ $\leq$ $\Delta_{j-1} \cdot 8k$ and the result $|L({\cal F}\cup \{X\},i)| \leq \Delta_{i-1} \cdot 8k + |L({\cal F},i)|$ follows.

The second statement follows by combining induction, the first part of this Observation, and the fact that if the call ALG$_1(G, w, N, {\cal F})$ executes ALG$_1(G', w, N', {\cal F}')$, then $|{\cal F}| < |{\cal F}'|$ if and only if the call ALG$_1(G, w, N, {\cal F})$ adds a balanced separator.

%We will prove the result by induction on the depth the call ALG$_1(G, w, N, {\cal F})$ in the recursion tree of a $k$-fair run. If ${\cal F} = \emptyset$ then the result is trivially true, which takes care of the case that ALG$_1(G, w, N, {\cal F})$ is the initial call, and we may assume ${\cal F} \neq \emptyset$.  By the inductive hypothesis, the only way for there to exist an $i$ such that $|L({\cal F},i)| > \Delta_{i-1}8k|{\cal F}|$ is for ALG$_1$ to make the call ALG$_1(G, w, N, {\cal F})$ during a $k$-fair instance ALG$_1(G, w, N, {\cal F'})$ by adding a balanced separator, $X$, so we will only consider this case.

\end{proof}

For $k$-fair instances $(G, w, N,{\cal F})$ we define a measure:
\begin{equation*}
 \label{eq2}
\begin{split}
\mu_k(G, w, N,{\cal F})  &= 400k^2 \cdot \log^2(N) \cdot (N+|V(G)|) + \sum_i \left(|L({\cal F},i)| \cdot \frac{N}{\Delta_{i-1}}\right)\\
& \quad + 16k \cdot N \cdot \log(N) \cdot (10k \cdot \log(N)-|{\cal F}|)\\
\end{split}
\end{equation*}
%
%
%where n = $|V(G)|$, $t = |{\cal F}|$, and the summation is taken from $i = 1$ to $\log(N)$.
%We define $\mu_k(N, n, t, \{\ell_1, \ldots, \ell_t\}) =$.
%\end{definition}

If $(G, w, N,{\cal F})$ is not a $k$-fair instance, then $\mu_k(G, w, N,{\cal F})$ is undefined. Note that $\mu_k(G, w, N,{\cal F})$ must always be an integer, and that it is independent of the weight function $w$. We will say that two instances $(G, w, N,{\cal F})$ and $(G', w', N',{\cal F}')$ are {\em essentially different} if $G' \neq G$, $N' \neq N$ or ${\cal F}' \neq {\cal F}$. 

\begin{lemma}\label{finite measure}
For every positive integer $\mu$, the number of essentially different $k$-fair instances $(G, w, N,{\cal F})$ such that $\mu_k(G, w, N,{\cal F})$ = $\mu$ is finite. In addition, for every $k$-fair instance it holds that  $\mu_k(G, w, N,{\cal F}) \geq 0$.
\end{lemma}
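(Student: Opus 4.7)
The plan is to handle the two statements separately, establishing non-negativity first (as we will reuse it implicitly when bounding the parameters for the finiteness claim) and then deducing finiteness from an upper bound on $N$ and $|V(G)|$.

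For non-negativity, I would inspect the three summands of $\mu_k$ individually. The first summand $400k^2 \log^2(N)(N+|V(G)|)$ and the second summand $\sum_i |L(\mathcal{F},i)| \cdot \frac{N}{\Delta_{i-1}}$ are both manifestly non-negative because every factor is non-negative. The only non-trivial summand is the third, $16k \cdot N \log(N) \cdot (10k \log(N) - |\mathcal{F}|)$, which is non-negative exactly when $|\mathcal{F}| \leq 10k \log(N)$. But this is precisely the content of Lemma~\ref{F size} applied to the $k$-fair instance $(G,w,N,\mathcal{F})$, so the third summand is also non-negative.

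For finiteness, I would first isolate a bound on $N$ and $|V(G)|$ from the equation $\mu_k(G,w,N,\mathcal{F})=\mu$. The degenerate case $N = 1$ can be dispatched immediately: then $\log(N) = 0$ forces $\mathcal{F} = \emptyset$ (by Observation~\ref{log(N) + 1 is empty} or directly from Lemma~\ref{F size}), every level set is empty, and $|V(G)| \leq N = 1$, so only finitely many (indeed constantly many) essentially different instances arise. For $N \geq 2$, we have $\log(N) \geq 1$, and by non-negativity of the other two summands,
\[
400 k^2 \log^2(N)\bigl(N+|V(G)|\bigr) \;\leq\; \mu_k(G,w,N,\mathcal{F}) \;=\; \mu.
\]
This gives $N+|V(G)| \leq \mu/(400k^2)$, which simultaneously bounds $N$ and $|V(G)|$ by an explicit function of $\mu$ and $k$.

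With $|V(G)|$ bounded, there are only finitely many possibilities for the graph $G$ (as a labelled graph on a subset of a fixed vertex universe of the appropriate size), and only finitely many possibilities for the integer $N$. It remains to bound the number of possible multi-families $\mathcal{F}$. By Lemma~\ref{F size} we have $|\mathcal{F}| \leq 10k\log(N)$, so $\mathcal{F}$ is a multi-family of length at most $10k\log(N)$ drawn from the power set $2^{V(G)}$, which itself is finite since $V(G)$ is finite. Hence there are only finitely many essentially different $k$-fair instances achieving a given value $\mu_k = \mu$. The only subtlety in the plan is making sure the degenerate range $N \in \{0,1\}$ is handled cleanly, since there the first summand collapses; everything else is a direct consequence of the previously established Lemma~\ref{F size}.
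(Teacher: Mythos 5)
Your proof is correct and follows essentially the same route as the paper: both arguments invoke Lemma~\ref{F size} to make the third summand of $\mu_k$ non-negative, treat $N=1$ as a separate (trivially finite) case, and for $N\geq 2$ use the first summand to bound $|V(G)|$ (and hence $N$ and $\mathcal{F}$) in terms of $\mu$. The only cosmetic difference is that you extract the explicit bound $N+|V(G)|\leq \mu/(400k^2)$ from the first term directly, whereas the paper states $\mu_k \geq |V(G)|$, but this is the same underlying estimate.
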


\begin{proof}
Consider a $k$-fair instance $(G, w, N, {\cal F})$ with $\mu_k(G, w, N,{\cal F})$ = $\mu$. Clearly, there are only a finite number of such instances with $N = 1$. We will show that if $N \geq 2$ then $|V(G)| \leq \mu$. If $|V(G)| \leq \mu$ then $|G|$, $|N|$, and $|{\cal F}|$ are all bounded in terms of $\mu$, and the first part of the lemma follows.

By Lemma \ref{F size} we have that $|{\cal F}|$ is at most 10$k \cdot \log(N)$. It follows that the terms $400k^2 \cdot \log^2(N) \cdot (N+|V(G)|)$, $\Sigma_i(|L({\cal F},i)| \cdot \frac{N}{\Delta_{i-1}})$, and $16k \cdot N \cdot \log(N) \cdot (10k \cdot \log(N)-|{\cal F}|)$ are all non negative. Hence $\mu_k(G, w, N,{\cal F})$ $\geq$ $|V(G)|$. This also proves that $\mu_k(G, w, N,{\cal F})$ $\geq$ 0.
\end{proof}

\begin{lemma}\label{max measure size}
For every $k$-fair instance $(G, w, N,{\cal F})$ it holds that $\mu_k(G, w, N,{\cal F}) \leq 1050k^2 \cdot N \cdot \log^2(N)$
\end{lemma}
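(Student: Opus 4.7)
The plan is to bound each of the three summands in the definition of $\mu_k$ separately and then add them up. Throughout, fix a $k$-fair instance $(G, w, N, {\cal F})$.

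First I would observe that $|V(G)| \leq N$ in every $k$-fair instance. This is immediate by induction on the recursion depth: at the root of a $k$-fair run we have $N = |V(G)|$, and each of the three recursive-call types either decreases $|V(G)|$ while keeping $N$ fixed (branching on a branchable vertex), keeps both parameters equal (adding a balanced separator), or resets $N$ to $|V(C)| = |V(G')|$ (recursing on a component $C$). Hence the first summand is at most $400k^2\log^2(N)\cdot 2N = 800k^2 N\log^2(N)$.

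Next I would bound the third summand using Lemma~\ref{F size}, which gives $|{\cal F}| \leq 10k\log(N)$, so $10k\log(N) - |{\cal F}| \leq 10k\log(N)$. Plugging this in yields at most $16k\cdot N\cdot \log(N)\cdot 10k\log(N) = 160k^2 N\log^2(N)$.

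For the middle summand $\sum_i |L({\cal F},i)|\cdot \tfrac{N}{\Delta_{i-1}}$, I would combine the two observations about level sets. By Observation~\ref{log(N) + 1 is empty} we have $L({\cal F}, i) = \emptyset$ for all $i \geq \log(N)+1$, so the sum has at most $\log(N)$ nonzero terms. For each such $i$, the second statement of Observation~\ref{level set size} gives $|L({\cal F}, i)| \leq \Delta_{i-1}\cdot 8k\cdot |{\cal F}|$, whence
\[
|L({\cal F}, i)|\cdot \frac{N}{\Delta_{i-1}} \;\leq\; 8k\cdot |{\cal F}|\cdot N \;\leq\; 8k\cdot 10k\log(N)\cdot N \;=\; 80k^2 N\log(N),
\]
using Lemma~\ref{F size} again. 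Summing over the at most $\log(N)$ nonzero indices gives at most $80k^2 N\log^2(N)$.

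Adding the three bounds yields $(800 + 160 + 80)k^2 N\log^2(N) = 1040k^2 N\log^2(N) \leq 1050k^2 N\log^2(N)$, as claimed. No step looks like a real obstacle: the main ingredients (Lemma~\ref{F size}, Observation~\ref{log(N) + 1 is empty}, and Observation~\ref{level set size}) are already proved, and the only conceptual step is the short inductive argument that $|V(G)| \leq N$ on $k$-fair instances.
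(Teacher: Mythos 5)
Your proof is correct and follows essentially the same route as the paper: you bound each of the three summands separately (using $|V(G)| \le N$, Lemma~\ref{F size}, and Observations~\ref{level set size} and~\ref{log(N) + 1 is empty}) and add them to get $1040k^2 N \log^2 N \le 1050k^2 N \log^2 N$. The only difference is presentational — you make explicit the short inductive justification that $|V(G)| \le N$ on $k$-fair instances and the use of Observation~\ref{log(N) + 1 is empty} to limit the middle sum to $\log N$ nonzero terms, both of which the paper leaves implicit.
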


\begin{proof}
Consider a $k$-fair instance $(G, w, N,{\cal F})$. By Observation \ref{level set size} and Lemma \ref{F size}, we have that  $|L({\cal F},i)| \cdot \frac{N}{\Delta_{i-1}} < 8k \cdot N \cdot |{\cal F}| < 80k^2 \cdot N \cdot \log(N)$. Therefore, $\Sigma_i(|L({\cal F},i)| \cdot \frac{N}{\Delta_{i-1}})$ $<  80k^2 \cdot N \cdot \log^2(N)$. Also, since $N \geq |V(G)|$, we can see that 
%$$\mu_k(G, w, N,{\cal F}) = 400k^2 \log^2(N)(N+n) + \Sigma_i(|L({\cal F},i)|\frac{N}{\Delta_{i-1}}) + 16kN\log(N)(10k\log(N)-t)$$
%$$ < 800k^2N\log^2(N) +  80k^2N\log^2(N) + 160k^2N\log^2(N)$$
%$$ < 1050k^2log^2(N)N$$
%
\begin{equation*}
 \label{eq2}
\begin{split}
\mu_k(G, w, N,{\cal F}) &= 400k^2 \cdot \log^2(N) \cdot (N+|V(G)|) + \Sigma_i(|L({\cal F},i)| \cdot \frac{N}{\Delta_{i-1}}) \\
& \quad + 16k \cdot N \cdot \log(N) \cdot (10k \cdot \log(N)-|{\cal F}|) \\
& < 800k^2 \cdot N \cdot \log^2(N) +  80k^2 \cdot N \cdot \log^2(N) + 160k^2 \cdot N \cdot \log^2(N)\\ 
& < 1050k^2 \cdot N \cdot \log^2(N)
\end{split}
\end{equation*}

\end{proof}

%\begin{definition}\label{T definition}
We define $T_k(G, w, N, {\cal F})$ to be the running time of a $k$-fair run of ALG$_1$ starting with the inputs $(G, w, N, {\cal F})$. We also define $$T_k(\mu) = max_{G, N,{\cal F} ~:~ \mu_k(G, w, N,{\cal F}) \leq \mu} \mbox{T}_k(G, w, N,{\cal F}).$$
%\end{definition}

When we analyze run time we assume that arithmetic (addition, subtraction, comparisons) on weights of vertices and vertex sets is constant time. Thus, both the running time of $ALG_1$ and the measure of an instance  $(G, w, N, {\cal F})$ are independent of the weight function $w$. Thus, by Lemma \ref{finite measure}, $T_k(\mu)$ is well defined. 

%\todo{We assume that the weight function, $w$, can be evaluated in constant time correct? so w should not appear in $max_{G, N,{\cal F}}$.}

%\begin{figure}[ht]
%\begin{center}
%  \includegraphics[scale=.35]{flow chart}
%  \caption{Overview of proof of Lemma~\ref{T recurrence 1}. 
  %Either ALG$_1$ immediately returns a value or it follows this flow chart. If the algorithm recurses on connected components, then the measure of each recursive call is at least 100$k^2N\log(N)$. By Lemma \ref{max measure size}, the size of the measure, $\mu$, of this instance is at most 1050$k^2N\log(N)$, so each recursive call has measure at most .95$\mu$. If the ALG$_1$ branches on a branchable vertex then it makes one call that has measure at most $\mu-1$ and another call where the measure decreases by at least $N$, which is at most $\mu[1-1/(1050k^2 \cdot \log^2(\mu))]$ by Lemma \ref{max measure size} and since $\mu \geq N$. Lastly, if ALG$_1$ adds a balanced separator, then it makes a single call that deceases the measure by at least 8$kN\log(N)$ which is at most $\mu[1-1/(200k \cdot \log(\mu))]. $\label{flow chart}
%}
%\end{center}
%\end{figure}

\begin{lemma}\label{T recurrence 1}
$T_k(\mu)$ satisfies the following recurrence:
%$$T(\mu) \leq n^{O(1)} + \max(1,2,3)$$  
%
\begin{equation*}
T_k(\mu) \leq \mu^{O(1)} + max
\begin{cases}
\mu T_k(.95\mu) \\
T_k(\mu - 1) + T_k(\mu[1-1/(2100k^2 \cdot \log^2(\mu))])\\
T_k(\mu[1-1/(200k \cdot \log(\mu))])\\
\end{cases}
\end{equation*}
\end{lemma}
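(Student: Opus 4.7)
The plan is to case on which of the three branching rules ALG$_1$ applies at the root of the call tree---recursing on connected components, branching on a branchable vertex, or adding a balanced separator---and in each case bound the measures of the recursive subinstances in terms of $\mu$. The non-recursive work at the root (computing connected components, checking for a branchable vertex, invoking Lemma~\ref{strong gyarfas}) is polynomial in $|V(G)|+N+|{\cal F}|$, which is $\mu^{O(1)}$ by Lemma~\ref{finite measure}, accounting for the additive $\mu^{O(1)}$ term in all three cases of the recurrence.

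For recursion on connected components, the number of subcalls is at most $|V(G)| \leq \mu$, matching the $\mu$ factor in the first case. For each such component $C$, the new instance has the form $(C, w, |V(C)|, \emptyset)$, whose measure unfolds to exactly $960 k^2 |V(C)| \log^2(|V(C)|)$. The old measure is at least $400 k^2 \log^2(N)(N + |V(C)|)$ (via $|V(C)| \leq |V(G)|$), and combining this with $|V(C)| \leq N/2$ pins the worst case near $|V(C)|=N/2$, where the ratio evaluates to at most $0.8$, safely below $0.95$.

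For branching on a branchable vertex $v$, the instance $(G-v, w, N, {\cal F}-\{v\})$ has $|V(G)|$ reduced by one, which alone drops the first term of $\mu_k$ by at least $400 k^2 \log^2(N) \geq 1$, giving measure at most $\mu-1$. The instance $(G-N[v], w, N, {\cal F}-N[v])$ additionally exploits Definition~\ref{branchableVertex}: some level $i$ satisfies $|N[v] \cap L({\cal F}, i)| \geq \Delta_i$, so the $i$-th summand of the second term of $\mu_k$ drops by at least $\Delta_i \cdot (N/\Delta_{i-1}) = N/2$. Then Lemma~\ref{max measure size} gives $\mu \leq 1050 k^2 N \log^2(N) \leq 1050 k^2 N \log^2(\mu)$, which rearranges to $N/2 \geq \mu/(2100 k^2 \log^2(\mu))$ and yields the required bound $\mu(1-1/(2100 k^2 \log^2(\mu)))$.

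The subtlest case is when ALG$_1$ adds a balanced separator $X$: the third term of $\mu_k$ decreases by exactly $16kN\log(N)$ as $|{\cal F}|$ grows by one, but the second term $\sum_i |L({\cal F},i)| \cdot N/\Delta_{i-1}$ can rise. The main obstacle is to show the rise is at most half of this gain. This rests on Observation~\ref{level set size}, which exploits both that $X$ is the neighborhood of only $8k$ vertices (Lemma~\ref{strong gyarfas}) and that no branchable vertex exists (so each vertex contributes at most $\Delta_i$ to level $i$); the result is a level-$i$ growth of at most $8k \cdot \Delta_{i-1}$ and hence a summand increase of at most $8kN$. Observation~\ref{log(N) + 1 is empty} bounds the number of non-trivial levels by $\log(N)$, so the second term grows by at most $8kN\log(N)$. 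The resulting net decrease $\geq 8kN\log(N)$ exceeds $\mu/(200k\log(\mu))$ via Lemma~\ref{max measure size}, establishing the third case of the recurrence.
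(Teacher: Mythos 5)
Your proof is correct and takes essentially the same approach as the paper: case on which of the three recursion rules fires at the root, bound the resulting measure drop, and invoke Lemma~\ref{max measure size} to convert an additive drop (of the form $N/2$ or $8kN\log N$) into a multiplicative one. The only cosmetic divergence is in the connected-components case, where you evaluate the new measure exactly as $960k^2|V(C)|\log^2(|V(C)|)$ and bound the ratio directly (getting $0.8$), whereas the paper manipulates the inequality $N' \leq N/2$ to land exactly on $0.95\mu$; both are valid.
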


\begin{proof}
Let $(G, w, N,{\cal F})$ be a $k$-fair instance such that $\mu_k(G, w, N,\emptyset) = \mu$ and $T_k(\mu)$ is the run time of ALG$_1(G, w, N,{\cal F})$. If the call ALG$_1(G, w, N,{\cal F})$ recurses on connected components, then it makes at most $|V(G)|$ recursive calls on instances of the form $(G', w, N',\emptyset)$, where $|V(G')| \leq |V(G)|$ and $N' \leq \frac{N}{2}$. It follows that for each of these recursive calls we have  
%$\mu_k(G', w, N',\emptyset)$ $=$ $400k^2\log^2(N')(N'+|V(G')|)$ + $160k^2\log(N')N'$ $<$ $400k^2\log^2(N)(N/2+|V(G)|)$ + $80k^2\log(N)N$ $<$ $400k^2\log^2(N)(N + |V(G)|)$ - 100$k^2\log^2(N)N$ $<$ .75$\mu$.

\begin{align*}
 \label{eq1}
%\begin{split}
\mu_k(G', w, N',\emptyset) & = 400k^2 \cdot \log^2(N') \cdot (N'+|V(G')|) + 160k^2 \cdot N' \cdot \log^2(N') \\
& \leq 400k^2 \cdot \log^2(N) \cdot (\frac{N}{2}+|V(G)|) + 80k^2 \cdot N \cdot \log^2(N) \\
& \leq 400k^2 \cdot \log^2(N) \cdot (N + |V(G)|) - 100k^2 \cdot N \cdot \log^2(N) \\ 
& \leq \mu - 100k^2 \cdot N \cdot \log^2(N) \\
& \leq .95\mu \mbox{~~~~~(by\ Lemma~\ref{max measure size})}
%\end{split}
\end{align*}

Therefore, if the instance ALG$_1(G, w, N,{\cal F})$ recurses on connected components, we must have that $T_k(\mu) \leq |V(G)| \cdot T_k(.95\mu) \leq \mu \cdot T_k(.95\mu)$.

If the call ALG$_1(G, w, N,{\cal F})$ branches on a branchable vertex, $v$, then it makes two recursive calls, one execution ALG$_1(G-\{v\}, w, N, {\cal F}-\{v\})$, where the instance $(G-\{v\}, w, N, {\cal F}-\{v\})$ has measure $\mu_k(G-\{v\}, w, N,{\cal F}-\{v\}) \leq \mu - 1$, and the other execution is ALG$_1(G-N[v], w, N, {\cal F}-N[v])$. Note that for a branchable vertex, $v$, we have that 
$$\sum_i(|L({\cal F}-N[v],i)| \cdot \frac{N}{\Delta_{i-1}} \leq \sum_i(|L({\cal F},i)| \cdot \frac{N}{\Delta_{i-1}}) - \frac{N}{2},$$
since for at least one level $i$ we have that $|N[v] \cap L({\cal F},i)| \geq \Delta_i$ and $\frac{\Delta_i}{\Delta_{i-1}}$ = 1/2. It follows that 
%
%= $400k^2\log^2(N)(N+|V(G)-N[V]|)$ + $\Sigma_i(|L({\cal F}-N[v],i)|\frac{N}{\Delta_{i-1}})$ + $16kN\log(N)(10k\log(N)-t)$ $<$ $400k^2 \log^2(N)(N+n)$ + $\Sigma_i(|L({\cal F},i)|\frac{N}{\Delta_{i-1}})$ + $16kN\log(N)(10k\log(N)-t) - N $ = $\mu - N$ $\leq$ $\mu[1-1/(1000k^2\log^2(N))]$ $\leq$ (by Lemma \ref{max measure size}) $\mu[1-1/(1000k^2\log^2(\mu))]$.
%
\begin{align*}
\begin{split}
\mu_k(G-N[v], w, N,{\cal F}-N[v])  & = 400k^2 \cdot \log^2(N) \cdot (N+|V(G)-N[V]|) + \sum_i\left(|L({\cal F}-N[v],i)| \cdot \frac{N}{\Delta_{i-1}}\right)  \\
&  \quad + 16k \cdot N \cdot \log(N) \cdot (10k \cdot \log(N)-|{\cal F}|)  \\
& \leq 400k^2 \log^2(N) \cdot (N+|V(G)|) + \sum_i \left(|L({\cal F},i)| \cdot \frac{N}{\Delta_{i-1}}\right) \\
& \quad + 16k \cdot N \cdot \log(N) \cdot (10k \cdot \log(N)-|{\cal F}|) - \frac{N}{2} \\
& \leq \mu - \frac{N}{2}  \\ 
& \leq \mu\left(1-\frac{1}{2100k^2 \cdot \log^2(N)}\right)  \mbox{~~~~~(by\ Lemma\ \ref{max measure size})} \\
& \leq \mu\left(1-\frac{1}{2100k^2 \cdot \log^2(\mu)}\right) 
\end{split}
\end{align*}
Therefore, if the call ALG$_1(G, w, N,{\cal F})$ branches on a branchable vertex, then we have that $T_k(\mu)$ $\leq$ $T_k(\mu - 1) + T_k(\mu[1-\frac{1}{2100k^2 \cdot \log^2(\mu)}])$.

Finally, if the call ALG$_1(G, w, N,{\cal F})$ adds a balanced separator, $X$, then it makes a single recursive call ALG$_1(G, w, N,{\cal F}\cup X)$. By Observation \ref{level set size} and Lemma \ref{max measure size} we obtain the following.
$$\mu_k(G, w, N,{\cal F}\cup X) < \mu + 8k \cdot N \cdot \log(n) - 16k \cdot N \cdot \log(N) < \mu\left([1-\frac{1}{200k \cdot \log(\mu)}]\right)$$ 
Therefore, if the call ALG$_1(G, w, N,{\cal F})$ adds a balanced separator, then $T_k(\mu)$ $\leq$ $T_k(\mu[1-\frac{1}{200k \cdot \log(\mu)}])$.

The result now follows from the observation that ALG$_1(G, w, N,{\cal F})$ only does $|V(G)|^{O(1)}$ = $\mu^{O(1)}$  work in any given call and always recurses on connected components, branches on a branchable vertex, adds a balanced separator, or returns without making further recursive calls.
\end{proof}

Since $T_k(\mu)$ is a non negative, non decreasing function, by adding the three possibilities in the $\max$ of  Lemma \ref{T recurrence 1} we immediately obtain the following simplified recurrence. 
%on the right hand side with addition, we 

%adding the three options of the $\max$ in Lemma~\ref{}

\begin{corollary}\label{T bound}
$T_k(\mu) \leq \mu^{O(1)} + \mu T_k(.95\mu) + T_k(\mu - 1) + T_k(\mu[1-\frac{1}{2100k^2 \cdot \log^2(\mu)}]) + T_k(\mu[1-\frac{1}{200k \cdot \log(\mu)}]) < T_k(\mu - 1) + \mu^{O(1)} +  3\mu \cdot T_k(\mu[1-\frac{1}{2100k^2 \cdot \log^2(\mu)}])$
\end{corollary}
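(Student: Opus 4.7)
The plan is to derive the corollary directly from Lemma~\ref{T recurrence 1} by replacing the maximum with a sum and then consolidating the four resulting recursive terms. Two basic facts about $T_k$ do all the work: first, $T_k(\mu) \geq 0$ since it measures running time; second, $T_k$ is monotonically nondecreasing in $\mu$, which is immediate from the definition $T_k(\mu) = \max_{(G,w,N,\mathcal{F}) : \mu_k(G,w,N,\mathcal{F}) \leq \mu} T_k(G,w,N,\mathcal{F})$.

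For the first inequality, nonnegativity of the three branches in Lemma~\ref{T recurrence 1} lets me upper-bound their maximum by their sum, which produces exactly the expression in the middle of the corollary.

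For the second inequality, I would show that the branch $T_k(\mu[1-1/(200k\log\mu)])$ and the branch $\mu\,T_k(0.95\mu)$ can both be absorbed into $3\mu\cdot T_k(\mu[1-1/(2100k^2\log^2\mu)])$, leaving the middle branch $T_k(\mu-1)+T_k(\mu[1-1/(2100k^2\log^2\mu)])$ and the $\mu^{O(1)}$ additive term essentially untouched. The two numeric comparisons needed are $1-1/(200k\log\mu) \leq 1-1/(2100k^2\log^2\mu)$ and $0.95 \leq 1-1/(2100k^2\log^2\mu)$. The first reduces to $2100k^2\log^2\mu \geq 200k\log\mu$, i.e.\ $10.5\,k\log\mu \geq 1$, which holds for all $k\geq 1$ and $\mu\geq 2$. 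The second reduces to $2100k^2\log^2\mu \geq 20$, which again holds for all but very small $\mu$. In both cases, monotonicity of $T_k$ then pushes the offending term up to $T_k(\mu[1-1/(2100k^2\log^2\mu)])$; there are at most $\mu+1\leq 2\mu$ additional copies created this way, and adding the single copy already present in the middle branch yields the claimed $3\mu$ coefficient.

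I do not expect any genuine obstacle: the argument is purely algebraic manipulation using monotonicity and nonnegativity of $T_k$. The only minor bookkeeping point is that the constant inequality $0.95 \leq 1-1/(2100k^2\log^2\mu)$ can fail on a finite range of small $\mu$, but for such $\mu$ the value $T_k(\mu)$ is itself bounded by a constant and is harmlessly absorbed into the $\mu^{O(1)}$ additive term.
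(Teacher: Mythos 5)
Your proposal is correct and follows essentially the same route as the paper: the paper's entire justification is the one-line remark that since $T_k$ is nonnegative and nondecreasing, one can replace the $\max$ of Lemma~\ref{T recurrence 1} by a sum and then consolidate terms, which is exactly what you do. Your extra bookkeeping (checking that $0.95$ and $1-\frac{1}{200k\log\mu}$ are each at most $1-\frac{1}{2100k^2\log^2\mu}$, then counting $\mu+2 \leq 3\mu$ copies of $T_k(\mu[1-\frac{1}{2100k^2\log^2\mu}])$) simply spells out what the paper compresses into "immediately obtain."
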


%\begin{proof}
%This follows from Lemma \ref{T recurrence 1} and the facts that $T_k(\mu)$ is a non negative, non decreasing function.
%\end{proof}

\begin{lemma}\label{T asymptotics}
$T_k(\mu) = \mu^{O(k^2 \cdot \log^3(\mu))}$
\end{lemma}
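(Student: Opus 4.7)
The plan is to solve the recurrence from Corollary~\ref{T bound} by unrolling in two stages.

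First, I would fold in the $T_k(\mu-1)$ branch. Writing the inequality from the corollary as
$$T_k(\mu) - T_k(\mu-1) \leq \mu^{O(1)} + 3\mu \cdot T_k(f(\mu)), \qquad f(\mu) := \mu\!\left(1 - \tfrac{1}{2100k^2 \log^2 \mu}\right),$$
and telescoping from $1$ to $\mu$ gives
$$T_k(\mu) \leq T_k(0) + \sum_{j=1}^{\mu}\left[j^{O(1)} + 3j \cdot T_k(f(j))\right].$$
Using that $T_k$ is non-decreasing (which also guarantees $T_k(f(j)) \leq T_k(f(\mu))$ whenever $f(j) \leq f(\mu)$) together with the fact that $f$ is monotone non-decreasing for $j$ above some $k$-dependent threshold, one reduces to the single-step recurrence
$$T_k(\mu) \leq \mu^{O(1)} + 3\mu^2 \cdot T_k(f(\mu)).$$

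Second, I would iterate this cleaner recurrence. Setting $\mu_0 = \mu$ and $\mu_{i+1} = f(\mu_i)$, the natural $d$-fold unrolling yields
$$T_k(\mu) \leq (3\mu^2)^d \cdot T_k(\mu_d) + \mu^{O(1)} \cdot \sum_{i=0}^{d-1}(3\mu^2)^i = O\!\left((3\mu^2)^d\right)\cdot \left(T_k(\mu_d) + \mu^{O(1)}\right).$$

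Third, I would bound the depth $d$ needed to drive $\mu_d$ down to a $k$-dependent constant. Since $\mu_i \leq \mu$ for all $i$, each step contracts by at least a factor $1 - 1/(2100k^2 \log^2 \mu)$, so
$$\mu_d \leq \mu \!\left(1 - \tfrac{1}{2100k^2 \log^2 \mu}\right)^{\!d} \leq \mu \cdot \exp\!\left(-d/(2100k^2 \log^2 \mu)\right).$$
Choosing $d = \Theta(k^2 \log^3 \mu)$ makes $\mu_d$ smaller than a $k$-dependent threshold $\mu^\ast$, at which point $T_k(\mu_d) \leq T_k(\mu^\ast) = O_k(1)$ via Lemma~\ref{finite measure}. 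Substituting yields $T_k(\mu) \leq \mu^{O(d)} = \mu^{O(k^2 \log^3 \mu)}$, as required.

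The main technical subtlety is in Stage~1: justifying $\sum_{j=1}^\mu 3j \cdot T_k(f(j)) \leq 3\mu^2 \cdot T_k(f(\mu)) + O_k(1)$ requires checking that $f$ is monotone non-decreasing above some $k$-dependent threshold (a short calculus exercise using that $\tfrac{d}{dj}\!\left[j/(2100k^2 \log^2 j)\right]$ is $< 1$ for large $j$), and showing that the finitely many small-$j$ contributions are bounded by a $k$-dependent constant that is absorbed into the $\mu^{O(1)}$ term via Lemma~\ref{finite measure}. Stages~2 and~3 are then routine geometric-series and log-iteration estimates.
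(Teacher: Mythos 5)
Your proof is correct and solves the recurrence in a genuinely different way from the paper. You agree with the paper on Stage~1 (telescoping the $T_k(\mu-1)$ term from Corollary~\ref{T bound} to reach $T_k(\mu) \leq \mu^{O(1)} + 3\mu^2\, T_k(f(\mu))$ with $f(\mu) = \mu[1-\tfrac{1}{2100k^2\log^2\mu}]$). From there the paper closes by strong induction on $\mu$: it posits the bound $\mu^{ck^2\log^3\mu}$, plugs the inductive hypothesis into the cleaned-up recurrence, and uses $(1-x)\le e^{-x}$ once to verify that the posited bound is self-consistent for $c$ large enough. You instead unroll the cleaned-up recurrence explicitly for $d = \Theta(k^2\log^3\mu)$ steps, tracking the product $\prod_i 3\mu_i^2 \le (3\mu^2)^d$ and the geometric error sum, and argue that $\mu_d$ drops below a fixed threshold. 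The two arguments are equivalent in strength here; yours is more constructive (it derives the exponent rather than verifying a guess), while the paper's is more compact since the single $(1-x)\le e^{-x}$ step absorbs all of the iteration bookkeeping. One small remark on your flagged ``main technical subtlety'': you do not need $f$ to be monotone, nor any calculus, nor a $k$-dependent threshold. For any $2\le j\le\mu$ one has directly
\[
f(j) \;=\; j\Bigl(1-\tfrac{1}{2100k^2\log^2 j}\Bigr) \;\le\; j\Bigl(1-\tfrac{1}{2100k^2\log^2\mu}\Bigr) \;\le\; \mu\Bigl(1-\tfrac{1}{2100k^2\log^2\mu}\Bigr) \;=\; f(\mu),
\]
using only that $\log$ is non-decreasing and $j\le\mu$; then monotonicity of $T_k$ gives $T_k(f(j))\le T_k(f(\mu))$, and the only degenerate case is $j=1$ (where $\log j = 0$), which is just the base case. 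The same observation is what lets you assert in Stage~3 that each step contracts by at least the factor $1-\tfrac{1}{2100k^2\log^2\mu}$, so it is worth stating it cleanly once.
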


\begin{proof}
The proof is by induction on $\mu$. The base case is established by Lemma \ref{finite measure}. By Corollary \ref{T bound} we have the inequality $T_k(\mu) \leq T_k(\mu - 1) +  \mu^{O(1)} + 3\mu T_k(\mu[1-\frac{1}{2100k^2 \cdot \log^2(\mu)}])$ and repeatedly applying the inequality to the first term on the right hand side, gives $T_k(\mu) \leq \mu^{O(1)} + 3\mu^2 \cdot T_k(\mu[1-\frac{1}{2100k^2 \cdot \log^2(\mu)}])$. By the inductive hypothesis then, there is some $c$ such that

\begin{equation*}
 \label{eq1}
\begin{split}
T_k(\mu) & \leq \mu^{O(1)} + 3\mu^2 \cdot (\mu[1-\frac{1}{2100k^2 \cdot \log^2(\mu)}])^{ck^2 \cdot \log^3(\mu)}\\
&  = \mu^{O(1)} + 3\mu^2 \cdot \mu^{ck^2 \cdot \log^3(\mu)} \cdot [1-\frac{1}{2100k^2 \cdot \log^2(\mu)}]^{ck^2 \cdot \log^3(\mu)}\\
& \leq \mu^{O(1)} + 3\mu^2 \cdot \mu^{ck^2 \cdot \log^3(\mu)} \cdot e^{-\frac{ck^2 \cdot \log^3(\mu)}{2100k^2 \cdot \log^2(\mu)}} \mbox{~~~~ (~since~} 1-x \leq e^{-x} \mbox{~)}\\
& \leq  \mu^{O(1)} + 3\mu^2 \cdot \mu^{ck^2 \cdot \log^3(\mu)} \cdot e^{-\frac{c\log(\mu)}{2100}} \\
& \leq \mu^{ck^2 \cdot \log^3(\mu)} \mbox{~~~~~~( for sufficiently large $c$ )}\\ 
\end{split}
\end{equation*}

%Where the inequality from line 2 to line 3 uses the fact that $(1-x) \leq e^{-x}$.

\end{proof}

We are now ready to prove Theorem \ref{thm:main}.

\begin{proof}[Proof of Theorem~\ref{thm:main}]
The algorithm returns the answer of  ALG$_1(G, |V(G)|, w, \emptyset)$. By Lemma~\ref{termination} ALG$_1$ terminates, by  Lemma~\ref{mis return} ALG$_1$ returns return the weight of a maximum weighted independent set.
For the running time, observe that $(G, w, N, \emptyset)$ is a $k$-fair instance and let $\mu = \mu_k(G, w, N, \emptyset)$. By Lemma~\ref{max measure size} we have that $\mu < 1050k^2 \cdot N \cdot \log^2(N) = n^{O(1)}$. Hence, by Lemma~\ref{T asymptotics} it follows that $T(G, w, N, \emptyset) \leq T(\mu) = \mu^{O(k^2 \cdot \log^3(\mu))} = n^{O(k^2 \cdot \log^3(n))}$. 
%Consider a $k$-fair run with initial input $G, w, N = |V(G)|,$ and $\emptyset$, and let $\mu = \mu_k(G, w, N, \emptyset)$. Then by Lemma \ref{max measure size} we have that $\mu < 1050k^2 \cdot N \cdot \log^2(N)$ and furthermore $k \leq N$ and $N = |V(G)| = n$ it follows that $T(G, w, N, \emptyset) \leq T(\mu) = \mu^{O(k^2 \cdot \log^3(\mu))} = n^{O(k^2 \cdot \log^3(n))}$. 
\end{proof}

\section{Disconnected Forbidden Induced Subgraphs}\label{sec:disconnected}
%!tex root=main.tex
Let $H$ be a graph. We denote by O$_H$ an oracle that takes an $H$-free graph $G$ as input and outputs the weight of a maximum weight independent set in $G$. In this section we present a quasi-polynomial time algorithm for {\sc Maximum Weight Independent Set} in $H$-free graphs, assuming we have access to the oracles O$_C$ for all $C \in {\cal CC}(H)$. Specifically we will prove Theorem~\ref{thm:main 2}. 

In the following, $H$ = $H_0$ + $H_1$ +...+ $H_{c-1}$ is a graph, $G$ is a graph, $w$ is a weight function on the vertices of $G$, $N$ is a positive integer, and ${\cal F}$ is a vertex multi-family of subsets of $V(G)$. We now present the algorithm $\mbox{ALG}_2$ of Theorem~\ref{thm:main 2}. The algorithm is very similar to the algorithm $\mbox{ALG}_1$ for $P_k$ free graphs, the main difference is that instead of packing balanced separators in the family ${\cal F}$, the algorithm ``packs'' (neighborhoods of) copies of induced $H_i$'s. 

\medskip
\noindent \textbf{ALG$_2$}
\begin{algorithmic}[1]
\STATE \textbf{input:} $H, G, w, N, {\cal F}$.
\STATE \textbf{output:} {\sf mwis}$(G)$.
\STATE $i = |{\cal F}| \mod c$
\IF{exists branchable vertex $v$}
\RETURN $\max\left(\mbox{ALG}_2(H, G - v, w, N, {\cal F} - \{v\}), \mbox{ALG}_2(H, G - N[v], w, N, {\cal F} - N[v]) + w(v) \right)$
\ELSIF{exists induced $H_i$}
\STATE \textbf{obtain} $X \gets$ induced $H_i$ in $G$ 
\RETURN ALG$_2(H, G, w, N, {\cal F} \cup \{N[X]\})$
\ENDIF
\RETURN $O_{H_i}(G)$
\end{algorithmic}
\medskip

The proof of correctness and running time analysis for $\mbox{ALG}_2$ closely follows that of $\mbox{ALG}_1$. The main difference is in the proof of why the family ${\cal F}$ can not grow beyond size $\log N$ (Lemmata~\ref{claw sequence} and~\ref{F size 2}). The other parts are just minor modifications of corresponding results from Section~\ref{sec:p5alg}.

We will distinguish between the two different kinds of recursive calls that ALG$_2$ can make. If the {\bf if} condition of line 4 holds, then the algorithm makes the recursive calls on line 5. In this case we say that ALG$_2$ {\em branches on a branchable vertex}. If the {\bf else if} condition of line 6 holds, then the algorithm makes the recursive call in line 8. In this case we say that ALG$_2$ {\em adds a neighborhood}. We define instances, runs, calls, execution and making a recursive call similarly as for $\mbox{ALG}_1$.
Just as for $\mbox{ALG}_1$, a run of $\mbox{ALG}_2(H, G,w,N,{\cal F})$ is called a $fair$ $run$ if $G$ is an $H$-free graph, $N = |V(G)|$, ${\cal F} = \emptyset$, and $w$ is a weight function. A call $\mbox{ALG}_2(H, G, w, N, {\cal F})$ is called a $fair$ $call$ if it is executed during the course of a fair run. An instance $(H, G, w, N, {\cal F})$ is a {\em fair instance} if $\mbox{ALG}_2(H, G, w, N, {\cal F})$ is a fair call.

\begin{lemma}\label{termination 2}
$ALG_2(H, G, w, N, {\cal F})$ terminates on every input.
\end{lemma}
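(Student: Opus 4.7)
The plan is to mirror the termination argument for $\mbox{ALG}_1$ given in Lemma~\ref{termination}, simplified by the fact that $\mbox{ALG}_2$ has no analogue of the ``recurse on connected components'' rule, so $N$ remains fixed throughout a single run. Two observations drive the proof: (i) whenever $\mbox{ALG}_2$ branches on a branchable vertex $v$, both recursive calls are made on graphs $G - v$ and $G - N[v]$ that have strictly fewer vertices than $G$ (since $v \in N[v]$), while $N$ is unchanged; and (ii) whenever $\mbox{ALG}_2$ adds a neighborhood, the graph $G$ and the parameter $N$ are unchanged but $|{\cal F}|$ strictly increases by one. Termination thus reduces to bounding the length of any chain of successive \emph{add-neighborhood} recursive calls.

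To bound such a chain, I would appeal to the same pigeonhole argument as in Lemma~\ref{termination}. Each set $N[X]$ added to ${\cal F}$ is non-empty, since the induced copy $X$ of $H_i$ contains at least one vertex. Consequently, once $|{\cal F}| \geq |V(G)| \cdot \log(N)$, we have $\sum_{S \in {\cal F}} |S| \geq |V(G)| \cdot \log(N)$, and averaging produces a vertex $v$ lying in at least $\log(N)$ sets of ${\cal F}$, i.e., $v \in L({\cal F}, \log(N))$. Since $\log(N) = \lceil \log_2 N \rceil$ we have $\Delta_{\log(N)} = N/2^{\log(N)} \leq 1$, and because $v \in N[v] \cap L({\cal F}, \log(N))$, the vertex $v$ is branchable. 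This contradicts the hypothesis that the algorithm added a neighborhood at that point (the branching rule on line 4 would have fired first). Hence any chain of add-neighborhood calls has length at most $|V(G)| \cdot \log(N)$.

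Combining these facts, termination follows by induction on the lexicographic pair $\bigl(|V(G)|,\ |V(G)| \cdot \log(N) - |{\cal F}|\bigr)$: a branching step strictly decreases the first coordinate, an add-neighborhood step fixes the first coordinate and strictly decreases the second, and the pigeonhole bound above guarantees that the second coordinate stays nonnegative along any chain of add-neighborhood calls. The base case, in which neither a branchable vertex nor an induced $H_i$ exists, is handled by the oracle return on line 10. The only point that requires care — and the closest thing to an ``obstacle'' — is the verification that $\Delta_{\log(N)} \leq 1$ so that mere membership in $L({\cal F}, \log(N))$ is already enough to certify branchability; this follows directly from the convention $\log(x) = \lceil \log_2 x \rceil$ adopted in Section~\ref{sec:prelim}.
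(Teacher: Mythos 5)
Your proof is correct and follows essentially the same route as the paper's: it bounds the length of any chain of consecutive add-neighborhood calls by $|V(G)| \cdot \log(N)$ via a pigeonhole argument (forcing some vertex into $L({\cal F}, \log(N))$, hence branchable, since $\Delta_{\log(N)} \leq 1$), and then concludes by induction with $|V(G)|$ as the primary measure. Your explicit verification that $\Delta_{\log(N)} \leq 1$ under the convention $\log(x) = \lceil \log_2 x \rceil$, and your recasting of the induction as a lexicographic pair, are just slightly more detailed presentations of exactly what the paper does.
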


\begin{proof}
Consider a run of ALG$_2$ with initial input $H, G, w, N,$ and ${\cal F}$. Whenever the algorithm makes a recursive call it does so with a call ALG$_2(H, G', w, N, {\cal F}')$ where $|V(G')| \leq |V(G)|$. Furthermore, whenever the algorithm branches on a branchable vertex, then it recurses with a call ALG$_2(H, G', w, N,{\cal F})$ where $|V(G')| < |V(G)|$. Furthermore ALG$_2$ can not add a neighborhood in over $|V(G)| \cdot \log(N)$ successive recursive calls. Suppose it does, then a call ALG$_2(H, G, w, N, {\cal F}'')$ with ${\cal F''} = |V(G)| \cdot \log(N)$ adds a neighborhood, but $L({\cal F''}, \log(N)) \neq \emptyset$ and thus there exists a branchable vertex, contradicting that ALG$_2(H, G, w, N, {\cal F}'')$ with ${\cal F''} = |V(G)| \cdot \log(N)$ adds a neighborhood. It follows by induction on $|V(G)|$ that ALG$_2$ always terminates.
\end{proof}

\begin{lemma}\label{mis return 2}
A run ALG$_2(H, G, w, N, {\cal F})$ returns the weight of a maximum weight independent set of $G$.
\end{lemma}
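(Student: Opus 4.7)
The plan is to mirror the proof of Lemma~\ref{mis return} for ALG$_1$. By Lemma~\ref{termination 2} the recursion tree of ALG$_2(H, G, w, N, {\cal F})$ has finite height, so I would proceed by induction on the height of the recursion tree, with the induction hypothesis that every recursive sub-call ALG$_2(H, G', w, N, {\cal F}')$ returns $\mbox{\sf mwis}(G')$.

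For the inductive step I would examine the three possible behaviors of the call ALG$_2(H, G, w, N, {\cal F})$. First, if the call branches on a branchable vertex $v$, the standard independent set branching argument applies: a maximum weight independent set of $G$ either avoids $v$, in which case it is an MWIS of $G - v$, or contains $v$, in which case its remaining vertices form an MWIS of $G - N[v]$. Thus the maximum of $\mbox{\sf mwis}(G - v)$ and $w(v) + \mbox{\sf mwis}(G - N[v])$ equals $\mbox{\sf mwis}(G)$, and the IH guarantees that the two recursive calls return exactly these values. The modifications to ${\cal F}$ (subtracting $\{v\}$ or $N[v]$) do not affect the returned weight — they are bookkeeping that keeps ${\cal F}$ a multi-family of subsets of the current vertex set. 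Second, if the call adds a neighborhood, the recursion is on $(H, G, w, N, {\cal F} \cup \{N[X]\})$ with the very same graph and weights, so the IH immediately yields that the returned value is $\mbox{\sf mwis}(G)$.

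The remaining case is the base case, in which neither the branching condition on line~4 nor the induced-$H_i$ condition on line~6 holds, and the call returns $O_{H_i}(G)$. The key observation is that because the ``else if'' on line~6 failed, $G$ contains no induced copy of $H_i$, i.e.\ $G$ is $H_i$-free. By the definition of the oracle $O_{H_i}$ (which is guaranteed to output the MWIS weight on any $H_i$-free input), the returned value is exactly $\mbox{\sf mwis}(G)$.

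I do not foresee a real obstacle here: the only subtle point, and the one I would state explicitly, is the verification that the oracle $O_{H_i}$ is called only on $H_i$-free graphs, which is immediate from the order of the conditionals in ALG$_2$. Everything else is a direct transcription of the argument used for Lemma~\ref{mis return}.
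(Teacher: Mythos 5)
Your proof is correct and takes the same inductive route as the paper (induction on the height of the recursion tree, granted by Lemma~\ref{termination 2}). In fact you are a bit more careful than the paper's own write-up: the paper's proof refers to a base case of $|V(G)| \leq 1$, which is a leftover from the proof for ALG$_1$ and does not literally describe ALG$_2$'s termination; the actual base case is the oracle call on line~10, and you correctly identify it and verify the crucial point that $O_{H_i}$ is only ever invoked on an $H_i$-free graph because the \textbf{else if} test on line~6 failed. The rest of your case analysis (branching, adding a neighborhood) matches the paper's implicit argument.
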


\begin{proof}
Consider an run of ALG$_2$ with initial input $(H, G, w, N,{\cal F})$. It is clear from the algorithm that if each run ALG$_2$($H, G', w, N, {\cal F}'$) that is executed by the call ALG$_2$($H, G, w, N, {\cal F}$) returns the weight of a maximum weight independent set of $G'$ under the weight function $w$, then so would the run ALG$_2(H, G, w, N, {\cal F}$). By Lemma \ref{termination 2} the height of the recursion tree is bounded, and the result is trivially true for the base case of $|V(G)| \leq 1$ so the result follows by induction on the depth of the recursion tree.
\end{proof}

%\begin{definition}\label{algorithm 2 fair run}
%A run of Algorithm 2 is called $fair$ $run$ if its initial input is: a graph $H$, an $H$-free graph $G$, a weight function $w$ on the vertices of $G$, $N = |V(G)|$, and ${\cal F} = \emptyset$. An instance ALG$_2$($H, G, w, N, {\cal F}$) is called a $fair$ $instance$ if a recursive call of ALG$_2$ can can be made with the parameters ($H, G, w, N, {\cal F}$) during a fair run of ALG$_2$. Note that the parameters $H, G, w, N,$ and ${\cal F}$ never change during the instance ALG$_2$($H, G, w, N, {\cal F}$).
%\end{definition}

%We will now prove some lemmas to help us bound the run time of ALG$_2$ on fair runs.

\begin{observation}\label{log(N) + 1 is empty 2}
For ever fair instance $(H, G, w, N, {\cal F})$, we have that $L({\cal F},\log(N)+1) = \emptyset$. 
\end{observation}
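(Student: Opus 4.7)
The plan is to mimic the proof of Observation~\ref{log(N) + 1 is empty} essentially verbatim, since the relevant structural facts about how $\mathcal F$ changes from one fair call to the next are the same for $\mbox{ALG}_2$ as they were for $\mbox{ALG}_1$. Specifically, I will induct on the depth of the call $\mbox{ALG}_2(H, G, w, N, {\cal F})$ in the recursion tree of the fair run $\mbox{ALG}_2(H, G^*, w, |V(G^*)|, \emptyset)$ that executes it.

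For the base case, if $\mathcal F = \emptyset$ then trivially no vertex lies in any set of $\mathcal F$, so $L({\cal F}, \log(N) + 1) = \emptyset$. For the inductive step, assume $\mathcal F \ne \emptyset$, so the call is executed by some fair call $\mbox{ALG}_2(H, G', w, N, {\cal F}')$ that either branches on a branchable vertex or adds a neighborhood. If it branches on a branchable vertex $v$, then $\mathcal F$ is obtained from $\mathcal F'$ by removing a vertex set (either $\{v\}$ or $N[v]$), so every vertex's membership count can only decrease and $L({\cal F}, \log(N) + 1) \subseteq L({\cal F}', \log(N) + 1)$, which is empty by the inductive hypothesis.

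If instead $\mbox{ALG}_2(H, G, w, N, {\cal F}')$ adds a neighborhood, then by the structure of $\mbox{ALG}_2$ the {\bf if} condition on line~4 must have failed, i.e.\ $G$ contains no branchable vertex relative to $\mathcal F'$. I will argue that this forces $L({\cal F}', \log(N)) = \emptyset$: indeed, for $i = \log(N)$ the branch threshold $\Delta_i = N/2^{\log(N)} \leq 1$, so any vertex $v \in L({\cal F}', \log(N))$ would satisfy $|N[v] \cap L({\cal F}', \log(N))| \geq 1 \geq \Delta_{\log(N)}$ (because $v \in N[v]$), making $v$ branchable and yielding a contradiction. Hence $L({\cal F}', \log(N)) = \emptyset$. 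Since $\mathcal F = \mathcal F' \cup \{N[X]\}$ for the induced $H_i$ that was found on line~7, any vertex lying in $\log(N) + 1$ sets of $\mathcal F$ must lie in at least $\log(N)$ sets of $\mathcal F'$, so $L({\cal F}, \log(N) + 1) \subseteq L({\cal F}', \log(N)) = \emptyset$, completing the induction.

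There is no real obstacle here; the only subtlety is keeping the rounding convention for $\log$ in mind (the paper sets $\log(x) = \lceil \log_2 x \rceil$, so $\Delta_{\log N} \leq 1$ and the ``every vertex in the top level is branchable'' step goes through cleanly). The entire argument is independent of what kind of set $N[X]$ is (balanced separator vs.\ closed neighborhood of an induced $H_i$), which is why the proof transfers from Section~\ref{sec:p5alg} with essentially no change.
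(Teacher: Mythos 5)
Your proof is correct and follows essentially the same inductive argument as the paper, inducting on recursion-tree depth and splitting on whether the parent call branched on a vertex or added a neighborhood. The one thing you add is the explicit justification (via $\Delta_{\log(N)} \leq 1$ and $v \in N[v]$) for why every vertex in $L({\cal F}',\log(N))$ is branchable, a step the paper asserts without elaboration.
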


\begin{proof}
Consider a fair call ALG$_2(H, G, w, N, {\cal F})$. We will prove the result by induction on the depth of the call ALG$_2(H, G, w, N, {\cal F})$ in the recursion tree of a run ALG$_2(H, G^*,w,|V(G^*)|,\emptyset)$ which executes  ALG$_2(H, G, w, N, {\cal F})$.

If ${\cal F} = \emptyset$ then the result is trivially true. Suppose ${\cal F} \neq \emptyset$, it follows that ALG$_2$ executes ALG$_2(H, G, w, N, {\cal F})$ during a fair call ALG$_2(H, G', w, N, {\cal F'})$ by branching on a branchable vertex or by adding a neighborhood, $N[X]$. In the first case, it is clear that since ${\cal F}$ =  ${\cal F'}-S$ for some vertex set $S$, if $L({\cal F'},\log(N)+1) = \emptyset$, then $L({\cal F},\log(N)+1) = \emptyset$ in ALG$_2(H, G, w, N, {\cal F})$. In the second case, since the instance ALG$_2(H, G, w, N, {\cal F'})$ does not branch on a branchable vertex, we have that $L({\cal F'},\log(N)) = \emptyset$ since every vertex in $L({\cal F'},\log(N))$ is branchable. It follows that $L({\cal F},\log(N)+1) = L({\cal F'} \cup \{N[X]\},\log(N)+1) = \emptyset$.
\end{proof}

\begin{lemma}\label{claw sequence}
Let $G$ be a graph, $N$ an integer greater than 1, and let $H = H_0 + H_1 +...+ H_{c-1}$ be a graph. If there exists a sequence of subsets of $V(G)$, $\{X_m\}$ =  $X_0, X_1,..., X_{c \cdot |H| \cdot \log(N)-1}$ such that for all $i$, $X_i \subset V(G)$, the subgraph induced by $X_i$ is isomorphic to $H_{i\ (mod\ c)}$, and for all $v \in X_i$ we have that $\{v\} \cap N[X_j] \neq \emptyset$ for at most $\log(N)$ $X_j$'s where $j < i$, then there exists a subset $I \subseteq \{0,1,2,\ldots, c \cdot |H| \cdot \log(N)-1\}$ such that $X_I = \bigcup_{i \in I} X_i$ forms an induced $H$ in $G$.
\end{lemma}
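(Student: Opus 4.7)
The plan is to prove, by induction on $c$, a mild generalization of the stated lemma that drops the cyclic-residue hypothesis and instead only requires that each residue class be represented at least $|H| \cdot \log(N)$ times in the sequence; the stated lemma then follows as an immediate corollary since its hypothesis gives exactly $|H| \cdot \log(N)$ members of each residue class. Writing $L = \log(N)$ for brevity, the generalization reads: if $X_0,\ldots,X_{T-1}$ is any sequence such that each $X_i$ induces some $H_{r_i}$ with $r_i \in \{0,\ldots,c-1\}$, the sequence hypothesis ``each $v \in X_i$ lies in $N[X_j]$ for at most $L$ indices $j<i$'' holds, and each residue has at least $|H| \cdot L$ members, then one can choose one $X_i$ per residue so that their union is an induced copy of $H$ in $G$.

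The base case $c = 1$ is trivial because $H = H_0$ and any $X_i$ already induces $H$. For the inductive step I will set $r^* = r_{T-1}$ and designate $X_{T-1}$ as the chosen representative of $H_{r^*}$. Summing the sequence hypothesis over the at most $|H_{r^*}|$ vertices of $X_{T-1}$ shows that the ``bad set'' $B := \{j < T-1 : X_{T-1} \cap N[X_j] \neq \emptyset\}$ has size at most $|H_{r^*}| \cdot L$. Removing the index $T-1$, the indices in $B$, and all remaining indices of residue $r^*$ produces a subsequence $S'$ (in the original order) that contains only residues in $R' := \{0,\ldots,c-1\} \setminus \{r^*\}$. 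The sequence hypothesis is inherited by $S'$ since ``earlier in $S'$'' is a subset of ``earlier in the original''. For each $r \in R'$, the number of residue-$r$ entries in $S'$ is at least $|H| \cdot L - |B| \geq (|H| - |H_{r^*}|) \cdot L$, which is exactly $|H - H_{r^*}| \cdot L$ and meets the threshold to invoke the inductive hypothesis on the graph $H' := H - H_{r^*}$ with its $c-1$ components.

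The inductive hypothesis then produces a set $I' \subseteq S'$ such that $X_{I'}$ is an induced copy of $H'$ in $G$. Because every $j \in I'$ was excluded from $B$, each $X_j$ is disjoint from $X_{T-1}$ and has no edges to $X_{T-1}$; consequently $I := I' \cup \{T-1\}$ yields an $X_I$ that is an induced copy of $H$. The main delicate point is the bookkeeping in the inductive step: one must verify that the $|H_{r^*}| \cdot L$ entries lost to the bad set is exactly the amount needed to reduce the per-residue threshold from $|H| \cdot L$ down to $|H - H_{r^*}| \cdot L$, so that the induction closes cleanly; compatibility of the selected components with $X_{T-1}$ is automatic from the removal of $B$, so no additional verification is needed after the recursive call.
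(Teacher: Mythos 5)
Your proof is correct, and it takes a genuinely cleaner route than the paper's. The paper performs the induction while preserving the cyclic-residue structure of the sequence: it partitions the indices into \emph{blocks} of $c$ consecutive positions, and whenever a bad index $j$ is found (one whose $N[X_j]$ hits $X_{T-1}$), it deletes the \emph{entire} block containing $j$, along with every index of residue $c-1$, then re-indexes the survivors so that they again follow a periodic residue pattern and the lemma as stated can be re-applied with $c-1$ in place of $c$. This block-level deletion makes the paper's surviving-element count slightly awkward (the paper writes $(c-1)\cdot|H'|\cdot\log(N)-1$, nominally one short of what the $c-1$ case requires, though a more careful count shows the bound actually holds with no subtraction). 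By contrast, you first weaken the hypothesis — dropping the periodicity and asking only that each residue class appear at least $|H|\cdot\log(N)$ times — which is exactly what makes per-index (rather than per-block) deletion possible, and then the arithmetic lands precisely on the $|H'|\cdot\log(N)$-per-residue threshold needed to close the induction. The core ideas are the same in both arguments: anchor on the last set $X_{T-1}$, sum the per-vertex hypothesis over the $\le|H_{r^*}|$ vertices of $X_{T-1}$ to bound the bad set by $|H_{r^*}|\cdot\log(N)$, note that the sequence hypothesis passes to any order-preserving subsequence, and observe that avoidance of the bad set makes $X_{T-1}$ disjoint from and anticomplete to every surviving $X_j$. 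Your generalization trades a slightly stronger induction hypothesis for a cleaner, re-indexing-free bookkeeping, and the original lemma does follow immediately since its hypothesis gives exactly $|H|\cdot\log(N)$ sets of each residue.
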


\begin{proof}
Let $G$ and $H$ be graphs, $N$ an integer greater than 1, and $X_0, X_1,..., X_{c \cdot |H| \cdot \log(N)-1}$ a sequence of sets of vertices with the properties given in the statement of the lemma. Given an $X_j$, set $i$ = $j - (j$ $(mod$ $c)$. We will refer to the segment $X_i$, $X_{i+1}$,..., $X_{i+c-1}$ as $X_j$'s block. 

The proof is by induction on $c$. If $c = 1$ then the statement is trivially true. Assume now that $c > 1$ and that the statement is true for all smaller values. There are at most $|H_{c-1}| \cdot \log(N)$ $X_j$'s such that some vertex of $X_{c \cdot |H| \cdot \log(N)-1}$ belongs to $X_j$, $j \neq c \cdot |H| \cdot \log(N)-1$. Remove from the sequence each such $X_j$ along with all other vertex sets in $X_j$'s block, as well as all $X_t$'s such that $c-1 \equiv t\ (mod\ c)$. After these deletions, re-name the sets $X_j$ in the updated sequence so that the index $j$ of each set $X_j$ is equal to the position of $X_j$ in the sequence (starting with $X_0$). 

Let $H' = H - H_{c-1}$. There are at least $\log(N) \cdot (c \cdot |H| - c \cdot |H_{c-1}| - |H|  + |H_{c-1}|) - 1$ = $\log(N) \cdot (c-1) \cdot |H'| - 1$ remaining vertex sets in the updated sequence, and this new sequence along with $H'$ and $G$ satisfies the condition of the inductive hypothesis. It follows that there exists a set $X_I'$ such that $G[X_I'] = H'$ and $X_I'$ is the union of sets in the (updated) sequence. Since $X_{c \cdot |H| \cdot \log(N)-1}$ does not belong to the neighborhood of any of the vertex sets in the new sequence, $X_{c \cdot |H| \cdot \log(N)-1}$ is disjoint from $N[X_I']$, and hence $X_I = X_I' \cup X_{c \cdot |H| \cdot \log(N)-1}$ induces $H$ in $G$, completing the proof.
\end{proof}

\begin{lemma}\label{F size 2}
For every fair instance $(H, G, w, N, {\cal F})$ with $H = H_0 + H_1 +...+ H_{c-1}$, it holds that $|{\cal F}| < c \cdot |H| \cdot \log(N)$
\end{lemma}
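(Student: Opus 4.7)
The plan is to argue by contradiction: suppose some fair instance $(H,G,w,N,{\cal F})$ has $|{\cal F}|\geq c\cdot |H|\cdot\log(N)$, and extract from the history of the run an explicit sequence of induced $H_i$-copies that meets the hypotheses of Lemma~\ref{claw sequence}. Applied to the root graph $G^\star$ of the fair run, this yields an induced $H$ in $G^\star$, contradicting that $G^\star$ is $H$-free.

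First I would trace ancestors. Each of the $|{\cal F}|$ sets was inserted into the family at a unique ancestor call that executed line~8 of $\mbox{ALG}_2$; let $X_0, X_1, \ldots, X_{|{\cal F}|-1}$ be the induced $H_i$-copies chosen at those calls, in the order they were added, and let $(H,G_j,w,N,{\cal F}_j)$ be the fair instance at which $X_j$ was added (so $|{\cal F}_j|=j$ and $G_j[X_j]\cong H_{j\bmod c}$). Because recursive calls descend only to induced subgraphs with non-growing vertex set, $X_j\subseteq V(G_j)\subseteq V(G^\star)$ and hence $G^\star[X_j]=G_j[X_j]\cong H_{j\bmod c}$, which takes care of the first hypothesis of Lemma~\ref{claw sequence}.

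Next I would verify the overlap hypothesis: for every $v\in X_i$, $v$ lies in $N_{G^\star}[X_j]$ for at most $\log(N)$ indices $j<i$. At the call where $X_i$ was added the algorithm neither branched nor already terminated, so no branchable vertex existed; by Observation~\ref{log(N) + 1 is empty 2} this gives $L({\cal F}_i,\log(N)+1)=\emptyset$, so $v\in X_i\subseteq V(G_i)$ lies in at most $\log(N)$ sets of ${\cal F}_i$. The $i$ sets of ${\cal F}_i$ are in bijection with $X_0,\ldots,X_{i-1}$: since the family is updated only by the operations ${\cal F}-\{u\}$ and ${\cal F}-N[u]$, the set in ${\cal F}_i$ coming from $X_j$ is exactly $N_{G_j}[X_j]\cap V(G_i)$. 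Because $v\in V(G_i)\subseteq V(G_j)$ and $X_j\subseteq V(G_j)$, and because $G_j$ is an induced subgraph of $G^\star$, we have $v\in N_{G_j}[X_j]\iff v\in N_{G^\star}[X_j]$; so $v$ lying in at most $\log(N)$ sets of ${\cal F}_i$ gives $v\in N_{G^\star}[X_j]$ for at most $\log(N)$ indices $j<i$.

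Finally I would apply Lemma~\ref{claw sequence} to $G^\star$ and the length-$c\cdot|H|\cdot\log(N)$ prefix of $X_0,X_1,\ldots$, obtaining an induced $H$ in $G^\star$ and the desired contradiction. The only real obstacle is the bookkeeping in the previous paragraph: one must cleanly identify each element of ${\cal F}_i$ with the ancestor $N[X_j]$ that spawned it and check that, after restricting to vertices still present in $V(G_i)$, membership in that element is literally membership in $N_{G^\star}[X_j]$. This reduces to the purely set-theoretic nature of the operation ${\cal F}-X$ together with the fact that neighborhoods are preserved under taking induced subgraphs, both of which are immediate once stated carefully.
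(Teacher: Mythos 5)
Your proof takes essentially the same route as the paper's: assume $|{\cal F}|\geq c\cdot|H|\cdot\log(N)$, trace back through the run to recover the sequence of $H_i$-copies $X_0,X_1,\ldots$ added at the neighborhood-adding ancestor calls, invoke Observation~\ref{log(N) + 1 is empty 2} to verify the overlap hypothesis, and conclude via Lemma~\ref{claw sequence} that the root graph contains an induced $H$, contradicting $H$-freeness. The only difference is that you make explicit the bookkeeping that the paper leaves implicit --- namely that each element of ${\cal F}_i$ is literally $N_{G_j}[X_j]\cap V(G_i)$, and that for $v\in V(G_i)$ membership in $N_{G_j}[X_j]$ coincides with membership in $N_{G^\star}[X_j]$ because both $v$ and $X_j$ live in the induced subgraph $G_j$ of $G^\star$. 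This is a welcome clarification rather than a different argument.
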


\begin{proof}
Let the fair instance $(H, G, w, N, {\cal F})$ be as in the statement of the lemma, furthermore let $G'$ be the graph used in the initial input of ALG$_2$ of the fair run that produces the instance $(H, G, w, N, {\cal F})$. Assume to the contrary, that $|{\cal F}| \geq c \cdot |H| \cdot \log(N)$. In the fair run that executes the call ALG$_2(H, G, w, N, {\cal F})$, consider the sequence of recursive calls (ordered by when the call occurs) that lead to the call ALG$_2(H, G, w, N, {\cal F})$. In particular, consider the subsequence 
$$\mbox{ALG}^0_2(H, G^0, w, N, {\cal F}^0),  \mbox{ALG}^1_2(H, G^1, w, N, {\cal F}^1),..., \mbox{ALG}^{c \cdot |H| \cdot \log(N)-1}_2(H, G^{c \cdot |H| \cdot \log(N)-1}, w, N, {\cal F}^{c \cdot |H| \cdot \log(N)-1})$$
such that the call ALG$^i_2(H, G^i, w, N, {\cal F}^i)$ is the $(i+1)^{th}$ call to add a neighborhood $N[X_i]$. By Observation \ref{log(N) + 1 is empty 2}, we can see that for all $X_i$, and for all vertices $v \in X_i$, $\{v\} \cap N[X_j] \neq \emptyset$ for at most $\log(N)\ X_j$'s with $j < i$. The result follows now by observing that $G'$, $H$, $N$, and the sequence $X_0, X_1,..., X_{c \cdot |H| \cdot \log(N)-1}$ satisfy the hypothesis of Lemma \ref{claw sequence}, contradicting that $G'$ is $H$-free.

%Since each $X_i$ is generated when there are no branchable vertices, and $L({\cal F},log(N)+1) = \emptyset$ by Observation \ref{log(N) + 1 is empty 2}, $X_i \cap N[X_j] \neq \emptyset$ for at most $|X_i| \log(N)$ $X_j$'s for $j > i$. So, we start by selecting the last $X_i$ in the sequence such that $c-1$ = $i$($mod$ $c)$, so $X_i$ is an induced $K_{c-1}$ and remove

%The proof is by induction on $c$. When $c = 1$ the result is trivial, so assume that $c > 1$.

\end{proof}

\begin{observation}\label{level set size 2}
For every fair call ALG$_2(H, G, w, N, {\cal F})$ that recurses by adding a neighborhood $N[X]$ and for every i, $$|L({\cal F}\cup N[X],i)| \leq \Delta_{i-1} \cdot |H| + |L({\cal F},i)|$$ Furthermore, for every fair instance $(H, G', w, N', {\cal F}')$, $$|L({\cal F}',i)| \leq \Delta_{i-1} \cdot |H| \cdot |{\cal F}'|$$
\end{observation}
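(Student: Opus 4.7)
The plan is to mirror the proof of Observation~\ref{level set size} essentially verbatim, with one substantive substitution: where the ALG$_1$ proof used ``$X$ is the closed neighborhood of at most $8k$ vertices'' (which was justified by invoking Lemma~\ref{strong gyarfas} with $i=2$), the ALG$_2$ proof uses the bound $|X|\leq|H|$. This bound holds for free in our setting because the set $X$ added on line~7 of ALG$_2$ is an induced copy of some component $H_j\in{\cal CC}(H)$, so $|X|=|V(H_j)|\leq|H|$.

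For the first inequality, I would first note that line~7 of ALG$_2$ executes only when the test on line~4 has failed, so the current instance $(H,G,w,N,{\cal F})$ has no branchable vertex. By Definition~\ref{branchableVertex} this means $|N[v]\cap L({\cal F},j)|<\Delta_j$ for every $v\in V(G)$ and every $j\geq 1$. A vertex $u$ lies in $L({\cal F}\cup\{N[X]\},i)$ only if either $u\in L({\cal F},i)$ already, or $u\in N[X]\cap L({\cal F},i-1)$; hence
\[
|L({\cal F}\cup\{N[X]\},i)|\leq|L({\cal F},i)|+|N[X]\cap L({\cal F},i-1)|,
\]
and the second summand is bounded above by $\sum_{x\in X}|N[x]\cap L({\cal F},i-1)|\leq|X|\cdot\Delta_{i-1}\leq|H|\cdot\Delta_{i-1}$, giving the first stated inequality.

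For the ``furthermore'' part I would induct on the depth of the fair call that produced $(H,G',w,N',{\cal F}')$ in the recursion tree of its fair run. The base case ${\cal F}'=\emptyset$ is trivial. For the inductive step, let $\mathcal{E}$ denote the multi-family of the parent fair call. Either the parent branches on a branchable vertex, in which case ${\cal F}'=\mathcal{E}-S$ for some vertex set $S$, so $|{\cal F}'|=|\mathcal{E}|$ and level sets can only shrink; or the parent adds a neighborhood, in which case $|{\cal F}'|=|\mathcal{E}|+1$ and the first part of this observation contributes at most $|H|\cdot\Delta_{i-1}$ to level $i$. In either case the inductive bound $|L(\mathcal{E},i)|\leq|H|\cdot\Delta_{i-1}\cdot|\mathcal{E}|$ propagates to $|L({\cal F}',i)|\leq|H|\cdot\Delta_{i-1}\cdot|{\cal F}'|$. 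The argument is essentially bookkeeping, and the only point requiring care is that in the branching case the multi-family retains the same cardinality as its parent (individual sets are merely shrunk), so the right-hand side of the induction bound is not spuriously inflated.
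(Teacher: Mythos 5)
Your proposal is correct and mirrors the paper's proof of this observation: both arguments use the decomposition $L({\cal F}\cup\{N[X]\},i)\subseteq L({\cal F},i)\cup(N[X]\cap L({\cal F},i-1))$, bound the second set by summing the non-branchability bound $|N[x]\cap L({\cal F},i-1)|<\Delta_{i-1}$ over the at most $|H|$ vertices of $X$, and then obtain the ``furthermore'' part by induction over the recursion tree using the fact that $|{\cal F}|$ increases exactly when a neighborhood is added. Your write-up is somewhat more explicit than the paper's (in particular about why $|X|\leq|H|$, and about how the branching case preserves $|{\cal F}|$ while only shrinking the sets), but the underlying argument is the same.
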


\begin{proof}
Consider a fair call ALG$_2(H, G, w, N, {\cal F})$ that recurses by adding a neighborhood $N[X]$. Let $X_j$ denote the set of vertices in $L({\cal F},j) \cap N[X]$, then we can see that $|L({\cal F}\cup \{N[X]\},j)| \leq L({\cal F},j) + |X_{j-1}|.$ Since the call ALG$_2(H, G, w, N, {\cal F})$ adds a neighborhood, $N[X]$, there are no branchable vertices. So, we have that for all $v \in G$, $|N[v] \cap L({\cal F}, j)| < \Delta_j$. Hence $|X_{j-1}|$ $<$ $\Delta_{j-1} \cdot |H|$ and the result $|L({\cal F}\cup \{N[X]\},i)| \leq \Delta_{i-1} \cdot |H| + |L({\cal F},i)|$ follows.

The second inequality follows by combining induction, the first part the observation, and the fact that if the call ALG$_2(H, G, w, N, {\cal F})$ executes ALG$_2(H, G', w, N', {\cal F}')$, then $|{\cal F}| < |{\cal F}'|$ if and only if the call ALG$_2(H, G, w, N, {\cal F})$ makes the call ALG$_2(H, G', w, N', {\cal F}')$ by adding a neighborhood.
\end{proof}

%\begin{definition}
For fair instances $(H, G, w, N,{\cal F})$ we define the measure
\begin{equation*}
 \label{eq2}
\begin{split}
\mu_H(H, G, w, N,{\cal F})  &= |V(G)| + \sum_i\left(|L({\cal F},i)| \cdot \frac{N}{\Delta_{i-1}}\right)\\
& \quad + 2|H| \cdot N \cdot \log(N) \cdot (|H| \cdot |{\cal CC}(H)| \cdot \log(N)-|{\cal F}|)\\
\end{split}
\end{equation*}
If $(H, G, w, N,{\cal F})$ is not a fair instance, then $\mu_H(H, G, w, N,{\cal F})$ is undefined. Note that $\mu_H(H, G, w, N,{\cal F})$ must always be an integer and that it is independent of the weight function $w$. We will say that two instances $(H, G, w, N,{\cal F})$ and $(H, G', w', N',{\cal F}')$ are {\em essentially different} if $G' \neq G$, $N' \neq N$ or ${\cal F}' \neq {\cal F}$. 
%where n = $|V(G)|$, $t = |{\cal F}|$, and the summation is taken from $i = 1$ to $\log(N)$.
%We define $\mu_k(N, n, t, \{\ell_1, \ldots, \ell_t\}) =$.
%\end{definition}

If $N = 1$ then a fair run ALG$_2(H, G, w, N,{\cal F})$ clearly terminates after a constant number of steps (since in a fair run, $|V(G)| \leq N$) regardless of the other inputs, so from now on we will assume $N > 1$.

\begin{lemma}\label{finite measure 2}
For every positive integer $\mu$, the number of essentially different fair instances $(H, G, w, N,{\cal F})$ such that  $\mu_H(H, G, w, N,{\cal F})$ = $\mu$ is finite. In addition, for every fair instance $\mu(H, G, w, N,{\cal F})$ $\geq$ 0.
\end{lemma}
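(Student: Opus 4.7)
The plan is to imitate the proof of Lemma~\ref{finite measure}. The measure $\mu_H$ is a sum of three terms; I would first verify each is nonnegative (which will immediately yield both $\mu_H \geq 0$ and the stronger bound $\mu_H \geq |V(G)|$), and then use this together with Lemma~\ref{F size 2} to conclude that for fixed $\mu$ only finitely many essentially different fair instances have measure $\mu$.

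For nonnegativity, I would check term by term. The first summand $|V(G)|$ is trivially nonnegative. The second summand $\sum_i |L({\cal F},i)| \cdot N/\Delta_{i-1}$ is a sum of products of nonnegative numbers, since $N/\Delta_{i-1} = 2^{i-1} \geq 0$. The third summand is $2|H| \cdot N \cdot \log(N) \cdot (|H| \cdot |{\cal CC}(H)| \cdot \log(N) - |{\cal F}|)$, and its only potentially negative factor is $|H| \cdot |{\cal CC}(H)| \cdot \log(N) - |{\cal F}|$. This is exactly controlled by Lemma~\ref{F size 2}, which gives $|{\cal F}| < |H| \cdot |{\cal CC}(H)| \cdot \log(N)$, so that factor is at least $1$ and the third summand is nonnegative. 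Combining, $\mu_H \geq |V(G)| \geq 0$, which proves the second claim.

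For finiteness, I fix $\mu$ and the graph $H$ (the definition of essentially different instances in the paper fixes $H$ implicitly). Under the standing assumption $N \geq 2$, using $|H| \geq 1$ and the fact that the parenthesized factor in the third summand is at least $1$, I obtain $\mu_H \geq 2 N \log N$, which bounds $N$ in terms of $\mu$. Since for any fair instance $G$ is an induced subgraph of the initial graph on $N$ vertices, we have $|V(G)| \leq N$, so there are finitely many possible choices of $G$. Finally, by Lemma~\ref{F size 2}, $|{\cal F}| < |H| \cdot |{\cal CC}(H)| \cdot \log(N)$ is bounded, and every set in ${\cal F}$ is a subset of $V(G)$, so ${\cal F}$ also ranges over a finite collection. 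Therefore only finitely many essentially different fair instances satisfy $\mu_H = \mu$.

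The only nontrivial ingredient is Lemma~\ref{F size 2}, which has already been proved via the combinatorial argument in Lemma~\ref{claw sequence}; everything else is straightforward bookkeeping mirroring the proof of Lemma~\ref{finite measure}. I do not foresee any real obstacle.
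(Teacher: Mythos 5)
Your proof is correct and follows essentially the same route as the paper's: nonnegativity of each summand via Lemma~\ref{F size 2}, giving $\mu_H \geq |V(G)| \geq 0$, and then finiteness by bounding the ranges of $G$, $N$, and ${\cal F}$. You are in fact slightly more careful than the paper's write-up, which leaps from $|V(G)| \leq \mu$ to ``the range of inputs for $G$, $N$, and ${\cal F}$ are bounded'' without spelling out that $N$ is controlled; your explicit observation that the third summand already forces $\mu_H \geq 2N\log(N)$ (using $|H| \geq 1$ and the parenthesized factor being a positive integer) closes that small expository gap cleanly.
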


\begin{proof}
Consider a fair instance $(H, G, w, N, {\cal F})$. We will show that if $\mu_H(H, G, w, N,{\cal F})$ = $\mu$, then $|V(G)| \leq \mu$. If $|V(G)| \leq \mu$ then the range of inputs for $G$, $N$, and ${\cal F}$ are bounded in terms of $\mu$ and the first part of the Lemma follows.

By Lemma \ref{F size 2} we have that $|{\cal F}|$ is less than $|H| \cdot |{\cal CC}(H)| \cdot \log(N)$. It follows that the terms $|V(G)|$, $\Sigma_i(|L({\cal F},i)| \cdot \frac{N}{\Delta_{i-1}})$, and $2|H| \cdot N \cdot \log(N) \cdot (|H| \cdot |{\cal CC}(H)| \cdot \log(N)-|{\cal F}|)$ are all non negative. Hence $\mu(H, G, w, N,{\cal F})$ $\geq$ $|V(G)|$. This also proves that $\mu_H(H, G, w, N,{\cal F})$ $\geq$ 0.
\end{proof}

\begin{lemma}\label{max measure size 2}
$\mu_H(H, G, w, N,{\cal F})$ $\leq$ $4|H|^2 \cdot |{\cal CC}(H)| \cdot N \cdot \log^2(N)$ for every fair instance $(H, G, w, N,{\cal F})$.
\end{lemma}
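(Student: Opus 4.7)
The proof will mirror Lemma~\ref{max measure size} almost exactly, with the role of $k$ replaced by suitable functions of $|H|$ and $|{\cal CC}(H)|$. The plan is to bound each of the three summands in the definition of $\mu_H$ separately, then add them.

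First, since $(H, G, w, N, {\cal F})$ is a fair instance, it descends from a fair run $\mbox{ALG}_2(H, G^*, w, |V(G^*)|, \emptyset)$ via recursive calls that only delete vertices, never add them, so $|V(G)| \leq N$. This disposes of the first term.

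For the second term, I would apply Observation~\ref{level set size 2}, which gives $|L({\cal F}, i)| \leq \Delta_{i-1} \cdot |H| \cdot |{\cal F}|$, so that $|L({\cal F}, i)| \cdot \tfrac{N}{\Delta_{i-1}} \leq N \cdot |H| \cdot |{\cal F}|$ for every $i \geq 1$. By Observation~\ref{log(N) + 1 is empty 2} we have $L({\cal F}, i) = \emptyset$ for $i > \log(N)$, so the sum has at most $\log(N)$ nonzero terms. Combining this with the bound $|{\cal F}| < |H| \cdot |{\cal CC}(H)| \cdot \log(N)$ from Lemma~\ref{F size 2} yields
\[
\sum_i \Bigl(|L({\cal F}, i)| \cdot \tfrac{N}{\Delta_{i-1}}\Bigr) \leq \log(N) \cdot N \cdot |H| \cdot |H| \cdot |{\cal CC}(H)| \cdot \log(N) = |H|^2 \cdot |{\cal CC}(H)| \cdot N \cdot \log^2(N).
\]

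For the third term, since $|{\cal F}| \geq 0$, we immediately get
\[
2|H| \cdot N \cdot \log(N) \cdot \bigl(|H| \cdot |{\cal CC}(H)| \cdot \log(N) - |{\cal F}|\bigr) \leq 2|H|^2 \cdot |{\cal CC}(H)| \cdot N \cdot \log^2(N).
\]
Adding the three bounds gives
\[
\mu_H(H, G, w, N, {\cal F}) \leq N + |H|^2 \cdot |{\cal CC}(H)| \cdot N \cdot \log^2(N) + 2|H|^2 \cdot |{\cal CC}(H)| \cdot N \cdot \log^2(N) \leq 4|H|^2 \cdot |{\cal CC}(H)| \cdot N \cdot \log^2(N),
\]
where the last inequality uses that $N \leq |H|^2 \cdot |{\cal CC}(H)| \cdot N \cdot \log^2(N)$ for $N \geq 2$. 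There is no real obstacle here; the whole argument is just plugging the previous observations and lemmas into the definition and reading off the dominant $|H|^2 \cdot |{\cal CC}(H)| \cdot N \cdot \log^2(N)$ term.
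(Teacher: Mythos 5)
Your proof is correct and follows essentially the same route as the paper's: bound $|V(G)|$ by $N$, bound each summand of the level-set sum by $|H|\cdot N\cdot|{\cal F}|$ via Observation~\ref{level set size 2}, cap the number of nonzero terms at $\log(N)$ via Observation~\ref{log(N) + 1 is empty 2}, bound $|{\cal F}|$ via Lemma~\ref{F size 2}, and finish by adding the three terms. If anything your write-up is slightly more careful, since you explicitly justify the final step $N \leq |H|^2\cdot|{\cal CC}(H)|\cdot N\cdot\log^2(N)$, which the paper's proof glosses over.
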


\begin{proof}
Consider a fair instance $(H, G, w, N,{\cal F})$. By Observation \ref{level set size 2} and Lemma \ref{F size 2}, we have that  
$$|L({\cal F},i)| \cdot \frac{N}{\Delta_{i-1}} < |H| \cdot N \cdot |{\cal CC}(H)| < |H|^2 \cdot |{\cal CC}(H)| \cdot N \cdot \log(N)$$ 
It follows that 
$$\sum_i\left(|L({\cal F},i)| \cdot \frac{N}{\Delta_{i-1}}\right) < |H|^2 \cdot |{\cal CC}(H)| \cdot N \cdot \log^2(N)$$
Also, since $N \geq |V(G)|$, we have the following. 
%
%$$\mu_k(G, w, N,{\cal F}) = 400k^2 \log^2(N)(N+n) + \Sigma_i(|L({\cal F},i)|\frac{N}{\Delta_{i-1}}) + 16kN\log(N)(10k\log(N)-t)$$
%$$ < 800k^2N\log^2(N) +  80k^2N\log^2(N) + 160k^2N\log^2(N)$$
%$$ < 1050k^2log^2(N)N$$
%
\begin{equation*}
 \label{eq2}
\begin{split}
\mu_H(H, G, w, N,{\cal F}) & = |V(G)| + \Sigma_i(|L({\cal F},i)| \cdot \frac{N}{\Delta_{i-1}}) \\
& \quad + 2|H| \cdot N \cdot \log(N) \cdot (|H| \cdot |{\cal CC}(H)| \cdot \log(N)-|{\cal F}|) \\
& < |V(G)| + |H|^2 \cdot |{\cal CC}(H)| \cdot N \cdot \log^2(N) + 2|H|^2 \cdot |{\cal CC}(H)| \cdot N \cdot \log^2(N)\\ 
& = 4|H|^2 \cdot |{\cal CC}(H)| \cdot N \cdot \log^2(N)
\end{split}
\end{equation*}

\end{proof}

%\begin{definition}\label{T definition 2}
We define $T_H(H, G, w, N, {\cal F})$ to be the running time (including the {\em number} of oracle calls) of ALG$_2$ starting with the inputs $(H, G, w, N, {\cal F})$. We also define 
$$T_H(\mu) = \max_{\substack{G, N,{\cal F} \mbox{ s.t.} \\ \mu_H(H, G, w, N,{\cal F}) \leq \mu}} \mbox{T}_H(H, G, w, N,{\cal F})$$ 
%\end{definition}

Just as for $\mbox{ALG}_1$, when we analyze run time we assume that arithmetic on weights takes constant time. Thus, both the running time of $ALG_2$ and the measure of an instance  $(H, G, w, N, {\cal F})$ are independent of the weight function $w$, and so by Lemma \ref{finite measure 2}, $T_H(\mu)$ is well defined.

\begin{lemma}\label{T recurrence 2}
$T_H(\mu)$ satisfies the following recurrence:
%$$T(\mu) \leq n^{O(1)} + \max(1,2,3)$$  

\begin{equation*}
T_H(\mu) \leq \mu^{O(1)} + max
\begin{cases}
T_H(\mu - 1) + T_H(\mu[1-\frac{1}{8|H|^2 \cdot |{\cal CC}(H)| \cdot \log^2(\mu)}])\\
T_H(\mu[1-\frac{1}{4|H| \cdot \log(\mu)}])\\
\end{cases}
\end{equation*}

\end{lemma}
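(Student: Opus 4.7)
The plan is to follow the template of Lemma \ref{T recurrence 1}, casing on the three possible behaviors of the call $\mbox{ALG}_2(H, G, w, N, {\cal F})$: branching on a branchable vertex, adding a neighborhood, or reaching the oracle base case. Note that, unlike $\mbox{ALG}_1$, there is no ``recurse on connected components'' step here, so the case analysis is slightly simpler. The work performed in any individual call is $\mu^{O(1)}$ (plus at most one oracle invocation, counted as a single operation), accounting for the additive $\mu^{O(1)}$ term in the recurrence.

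For the branching case, consider the two recursive calls generated by a branchable vertex $v$. The call on $(H, G - v, w, N, {\cal F} - \{v\})$ reduces $|V(G)|$ by one, and hence reduces the measure by at least one, giving a $T_H(\mu - 1)$ contribution. For the call on $(H, G - N[v], w, N, {\cal F} - N[v])$, exploit the definition of a branchable vertex: there exists some level $i$ with $|N[v] \cap L({\cal F}, i)| \geq \Delta_i$. Removing $N[v]$ therefore drops $|L({\cal F}, i)|$ by at least $\Delta_i$, and the corresponding term $|L({\cal F}, i)| \cdot \frac{N}{\Delta_{i-1}}$ in the measure decreases by at least $\Delta_i \cdot \frac{N}{\Delta_{i-1}} = \frac{N}{2}$. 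Combining this absolute decrease with Lemma \ref{max measure size 2} ($\mu \leq 4|H|^2 \cdot |{\cal CC}(H)| \cdot N \cdot \log^2 N$) and $\log N \leq \log \mu$ converts the $\frac{N}{2}$ drop into a fractional drop of at least $\frac{\mu}{8|H|^2 \cdot |{\cal CC}(H)| \cdot \log^2 \mu}$, yielding the recurrence's first case.

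For the neighborhood-adding case, apply Observation \ref{level set size 2} to bound the growth of each level: for every $i$, $|L({\cal F} \cup \{N[X]\}, i)| \leq |L({\cal F}, i)| + \Delta_{i-1} \cdot |H|$. The level-sum term in the measure therefore grows by at most $\Delta_{i-1} \cdot |H| \cdot \frac{N}{\Delta_{i-1}} = |H| \cdot N$ per level, and Observation \ref{log(N) + 1 is empty 2} restricts attention to $\log N$ levels, so the total growth is at most $|H| \cdot N \cdot \log N$. Meanwhile, the term $2|H| \cdot N \cdot \log N \cdot (|H| \cdot |{\cal CC}(H)| \cdot \log N - |{\cal F}|)$ drops by exactly $2|H| \cdot N \cdot \log N$ because $|{\cal F}|$ increases by one. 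The net decrease in the measure is therefore at least $|H| \cdot N \cdot \log N$, which, after converting via Lemma \ref{max measure size 2} as before, yields the second case of the recurrence.

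The main obstacle will be the careful bookkeeping in the neighborhood case: verifying that the absolute decrease of $2|H| \cdot N \cdot \log N$ from the $|{\cal F}|$-dependent term strictly dominates the growth of the level-sum summed over all $\log N$ active levels simultaneously, and confirming that the coefficients match the $\frac{1}{4|H| \cdot \log \mu}$ and $\frac{1}{8|H|^2 \cdot |{\cal CC}(H)| \cdot \log^2 \mu}$ factors promised by the lemma (with any discrepancy between $\log N$ and $\log \mu$ absorbed into the constants). Once the case analysis is complete and the base-case work of $\mu^{O(1)}$ is accounted for, the recurrence follows by taking the maximum over the two recursive branches.
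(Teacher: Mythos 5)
Your proposal is correct and follows essentially the same route as the paper: the same case split (branch / add neighborhood / oracle base case), the same $\mu-1$ and $N/2$ absolute drops for the two branches of a branchable vertex, the same bookkeeping that adding $N[X]$ grows the level-sum by at most $|H|\cdot N\cdot\log N$ while the $|{\cal F}|$-dependent term falls by $2|H|\cdot N\cdot\log N$, and the same conversion to fractional decrements via Lemma \ref{max measure size 2} and $N\leq\mu$. (One incidental note: carrying the conversion through in the neighborhood-adding case actually yields $\frac{1}{4|H|\cdot|{\cal CC}(H)|\cdot\log\mu}$ rather than the $\frac{1}{4|H|\cdot\log\mu}$ in the lemma statement; the paper's own proof produces exactly this same expression, so the slight mismatch is in the paper's statement, not in your argument, and it is harmless since Corollary \ref{T bound 2} only uses the weaker $\frac{1}{8|H|^2\cdot|{\cal CC}(H)|\cdot\log^2\mu}$ bound.)
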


\begin{proof}
Let $(H, G, w, N,{\cal F})$ be a fair instance such that $\mu_H(H, G, w, N,{\cal F}))$ = $\mu$ and $T_H(\mu)$ is the run time of ALG$_2(H, G, w, N,{\cal F})$. If the call ALG$_2(H, G, w, N,{\cal F})$ branches on a branchable vertex, $v$, then it makes two recursive calls, one execution on $(H, G-\{v\}, w, N, {\cal F}-\{v\})$, which has measure at most $\mu - 1$. The other execution is on the instance $(H, G-N[v], w, N, {\cal F}-N[v])$. Note that for a branchable vertex, $v$, we have that 
$$\sum_i\left(|L({\cal F}-N[v],i)| \cdot \frac{N}{\Delta_{i-1}}\right) \leq \sum_i\left(|L({\cal F},i)| \cdot \frac{N}{\Delta_{i-1}}\right) - \frac{N}{2}$$ 

since for at least one level $i$ we have that $|N[v] \cap L({\cal F},i)| \geq \Delta_i$ and $\frac{\Delta_i}{\Delta_{i-1}}$ = 1/2.

Hence,

%= $400k^2\log^2(N)(N+|V(G)-N[V]|)$ + $\Sigma_i(|L({\cal F}-N[v],i)|\frac{N}{\Delta_{i-1}})$ + $16kN\log(N)(10k\log(N)-t)$ $<$ $400k^2 \log^2(N)(N+n)$ + $\Sigma_i(|L({\cal F},i)|\frac{N}{\Delta_{i-1}})$ + $16kN\log(N)(10k\log(N)-t) - N $ = $\mu - N$ $\leq$ $\mu[1-1/(1000k^2\log^2(N))]$ $\leq$ (by Lemma \ref{max measure size}) $\mu[1-1/(1000k^2\log^2(\mu))]$.

\begin{equation*}
 \label{eq1}
\begin{split}
\mu_H(H, G-N[v], w, N,{\cal F}-N[v]) & = |V(G)-N[v]| + \Sigma_i(|L({\cal F}-N[v],i)| \cdot \frac{N}{\Delta_{i-1}})\\
&  \quad + 2|H| \cdot N \cdot \log(N) \cdot (|H| \cdot |{\cal CC}(H)| \cdot \log(N)-|{\cal F}-N[v]|) \\
& \leq |V(G)| + \sum_i\left(|L({\cal F},i)| \cdot \frac{N}{\Delta_{i-1}}\right)\\
& \quad + 2|H| \cdot N \cdot \log(N) \cdot (|H| \cdot |{\cal CC}(H)| \cdot \log(N)-|{\cal F}|) - \frac{N}{2} \\
& = \mu - \frac{N}{2}\\ 
& \leq \mu\left(1-\frac{1}{8|H|^2 \cdot |{\cal CC}(H)| \cdot \log^2(N)}\right)  \mbox{~~~~~~( by\ Lemma\ \ref{max measure size 2} )}\\
& \leq \mu\left(1-\frac{1}{8|H|^2 \cdot |{\cal CC}(H)| \cdot \log^2(\mu)}\right)
\end{split}
\end{equation*}
Thus, if the call ALG$_2(H, G, w, N,{\cal F})$ branches on a branchable vertex, then we have that $$T_H(\mu) \leq T_H(\mu - 1) + T_H\left(\mu[1-\frac{1}{8|H|^2 \cdot |{\cal CC}(H)| \cdot \log^2(\mu)}]\right)$$

If ALG$_2(H, G, w, N,{\cal F})$ adds a neighborhood, $N[X]$, it makes a single call ALG$_2(H,G, w, N,{\cal F}\cup \{N[X]\})$. By Observation \ref{level set size 2} and Lemma \ref{max measure size 2} we get the following.
$$\mu(H, G, w, N,{\cal F}\cup \{N[X]\}) < \mu + |H| \cdot N \cdot \log(n) - 2|H| \cdot N \cdot \log(N) < \mu\left([1-\frac{1}{4|H| \cdot |{\cal CC}(H)| \cdot \log(\mu)}]\right)$$
Thus, if the call ALG$_2(H, G, w, N,{\cal F})$ adds a neighborhood, then $T_H(\mu)$ $\leq$ $T_H(\mu([1-\frac{1}{4|H| \cdot |{\cal CC}(H)| \cdot \log(\mu)}])$.
The result now follows from the observation that ALG$_2(H, G, w, N,{\cal F})$ only does $|V(G)|^{O(1)}$ = $\mu^{O(1)}$ work in a given call and always branches on a branchable vertex, adds a balanced separator, or immediately returns a value without making further recursive calls.
\end{proof}

Since $T_H(\mu)$ is a non negative, non decreasing function, by adding the two possibilities in the $\max$ of  Lemma \ref{T recurrence 2} we immediately obtain the following simplified recurrence. 

\begin{corollary}\label{T bound 2}
%$T_H(\mu) \leq \mu^{O(1)} +  T_H(\mu - 1) + T_H(\mu[1-\frac{1}{4|H| \cdot \log(\mu)}])]) + T_H(\mu[1-\frac{1}{4|H|^2 \cdot |{\cal CC}(H)| \cdot \log^2(\mu)}]) \leq T_H(\mu - 1) + \mu^{O(1)} +  2T_H(\mu[1-\frac{1}{4|H|^2 \cdot |{\cal CC}(H)| \cdot \log^2(\mu)}])$
$T_H(\mu) \leq \mu^{O(1)} + T_H(\mu[1-\frac{1}{4|H| \cdot \log(\mu)}]) +  T_H(\mu - 1)  +  T_H(\mu[1-\frac{1}{8|H|^2 \cdot |{\cal CC}(H)| \cdot \log^2(\mu)}])$ $<$ $T_H(\mu - 1) + \mu^{O(1)} + 2T_H(\mu[1-\frac{1}{8|H|^2 \cdot |{\cal CC}(H)| \cdot \log^2(\mu)}])$
\end{corollary}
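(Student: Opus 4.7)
The plan is a two-step simplification of the recurrence from Lemma~\ref{T recurrence 2} that relies only on two elementary properties of $T_H(\mu)$: non-negativity and monotonicity in $\mu$. Both properties are immediate from the definition of $T_H(\mu)$ as a maximum of running times (which are non-negative) over the collection of fair instances of measure at most $\mu$ (a set that only grows with $\mu$). I would state these two properties as a preliminary observation, since they are used implicitly throughout.

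First, I would replace the $\max$ in Lemma~\ref{T recurrence 2} by a sum. Since each branch inside the $\max$ is itself a sum of evaluations of $T_H$ at non-negative arguments, each branch is non-negative, so $\max(a,b) \leq a + b$. Carrying out this substitution and reordering terms yields the first inequality claimed in the corollary, namely
$$T_H(\mu) \leq \mu^{O(1)} + T_H(\mu[1-\tfrac{1}{4|H| \log(\mu)}]) + T_H(\mu - 1) + T_H(\mu[1-\tfrac{1}{8|H|^2 \cdot |{\cal CC}(H)| \cdot \log^2(\mu)}]).$$

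Second, I would collapse the two ``shrunk'' $T_H$ terms into a single term with coefficient $2$. The key elementary inequality is $\frac{1}{4|H| \log(\mu)} \geq \frac{1}{8|H|^2 \cdot |{\cal CC}(H)| \cdot \log^2(\mu)}$, which is equivalent to $2|H| \cdot |{\cal CC}(H)| \cdot \log(\mu) \geq 1$ and thus holds whenever $\mu$ is at least a small absolute constant. Consequently $\mu[1-\frac{1}{4|H|\log(\mu)}] \leq \mu[1-\frac{1}{8|H|^2 |{\cal CC}(H)|\log^2(\mu)}]$, so by monotonicity of $T_H$ we may replace the first of these $T_H$ values by the second. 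For the few small $\mu$ where the pointwise inequality might fail, the $\mu^{O(1)}$ additive slack absorbs the discrepancy.

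There is essentially no obstacle here; both steps are routine manipulations. The only mildly subtle point is the formal verification of monotonicity of $T_H$, which I would record once by noting that the set of fair instances of measure at most $\mu$ used in the defining $\max$ is inclusion-monotone in $\mu$, so enlarging $\mu$ can only enlarge the maximum.
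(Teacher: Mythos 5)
Your proof is correct and takes essentially the same approach as the paper, which dispatches the corollary in one sentence by appealing to non-negativity (so $\max$ becomes a sum) and monotonicity of $T_H$ (so the $\mu[1-\tfrac{1}{4|H|\log\mu}]$ term is dominated by the $\mu[1-\tfrac{1}{8|H|^2|{\cal CC}(H)|\log^2\mu}]$ term). The only minor difference is that the paper writes the second inequality as strict ($<$), while your argument, correctly, only yields $\leq$; that strictness is cosmetic and immaterial to the downstream use in Lemma~\ref{T asymptotics 2}.
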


\begin{lemma}\label{T asymptotics 2}
$T_H(\mu) = \mu^{O(|H|^2 \cdot |{\cal CC}(H)| \cdot \log^3(\mu))}$
\end{lemma}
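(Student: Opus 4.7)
The plan is to mirror the proof of Lemma \ref{T asymptotics} essentially verbatim, only with the constants adjusted to match the statement of Corollary \ref{T bound 2}. I will induct on $\mu$; Lemma \ref{finite measure 2} furnishes the base case (there are only finitely many fair instances of bounded measure, so $T_H(\mu)$ is finite and trivially bounded for small $\mu$), and the inductive step will use the recurrence from Corollary \ref{T bound 2}.

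For the inductive step, I would first unfold the $T_H(\mu-1)$ term in Corollary \ref{T bound 2}: apply the same inequality to $T_H(\mu-1)$, then $T_H(\mu-2)$, and so on $\mu$ times, using that $T_H$ is non-decreasing. Since in each of the $\mu$ unfoldings we pick up at most a $\mu^{O(1)}$ additive term and the ``non-$T_H(\mu-1)$'' contribution $2 T_H(\mu[1-1/(8|H|^2 |{\cal CC}(H)| \log^2 \mu)])$, this will yield
\begin{equation*}
T_H(\mu) \;\leq\; \mu^{O(1)} \,+\, 2\mu \cdot T_H\!\left(\mu\left[1-\frac{1}{8|H|^2 \cdot |{\cal CC}(H)| \cdot \log^2(\mu)}\right]\right).
\end{equation*}

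Now apply the inductive hypothesis with an as-yet-unspecified constant $c$, so that $T_H(\mu') \leq (\mu')^{c|H|^2 \cdot |{\cal CC}(H)| \cdot \log^3(\mu')}$ for all $\mu' < \mu$. Using $1-x \leq e^{-x}$, the bracketed factor contributes $\exp\!\bigl(-c|H|^2 |{\cal CC}(H)| \log^3(\mu) / (8|H|^2 |{\cal CC}(H)| \log^2(\mu))\bigr) = e^{-c \log(\mu)/8} = \mu^{-c/8}$. Hence
\begin{equation*}
T_H(\mu) \;\leq\; \mu^{O(1)} \,+\, 2\mu \cdot \mu^{c|H|^2 \cdot |{\cal CC}(H)| \cdot \log^3(\mu)} \cdot \mu^{-c/8}.
\end{equation*}
Taking $c$ sufficiently large (e.g.\ so that $2\mu^{1-c/8} < 1$ for all $\mu$ past the base case) closes the induction and gives $T_H(\mu) \leq \mu^{c|H|^2 \cdot |{\cal CC}(H)| \cdot \log^3(\mu)}$, as required.

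I do not anticipate any real obstacle: the structure of Corollary \ref{T bound 2} is identical to Corollary \ref{T bound} except the coefficient $3\mu$ in front of the shrinking-argument call is replaced by $2$ and the shrinking factor is $1/(8|H|^2 |{\cal CC}(H)| \log^2(\mu))$ instead of $1/(2100k^2 \log^2(\mu))$, so the same unfold-then-bound-the-exponential trick from Lemma \ref{T asymptotics} goes through. The only bookkeeping subtlety is to make sure the constant $c$ is chosen after the exponential $\mu^{-c/8}$ is extracted so that it dominates the $2\mu$ factor; this is exactly the same ``for sufficiently large $c$'' step that appears in the final line of Lemma \ref{T asymptotics}.
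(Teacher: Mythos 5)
Your proposal is correct and follows the paper's own proof essentially line for line: induct on $\mu$ with Lemma \ref{finite measure 2} as the base case, unfold the $T_H(\mu-1)$ term of Corollary \ref{T bound 2} $\mu$ times to get $T_H(\mu)\leq\mu^{O(1)}+2\mu\,T_H(\mu[1-\tfrac{1}{8|H|^2|{\cal CC}(H)|\log^2\mu}])$, then apply the inductive hypothesis together with $1-x\leq e^{-x}$ and choose $c$ large enough. The only cosmetic remark is that, since the paper's $\log$ is $\lceil\log_2\rceil$ rather than $\ln$, the step $e^{-c\log(\mu)/8}=\mu^{-c/8}$ should be an inequality $e^{-c\log(\mu)/8}\leq\mu^{-c/8}$ (which holds since $e>2$), but this goes in the helpful direction and does not affect the argument.
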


\begin{proof}
The proof is by induction on $\mu$. The base case is established by Lemma \ref{finite measure 2}. By Corollary \ref{T bound 2} we have the inequality $T_H(\mu) \leq T_H(\mu - 1) +  \mu^{O(1)} + 2T_H(\mu[1-\frac{1}{8|H|^2 \cdot |{\cal CC}(H)| \cdot \log^2(\mu)}])$ and repeatedly applying the inequality to the first term on the right hand side, gives $T(\mu) \leq \mu^{O(1)} + 2\mu T_H(\mu[1-\frac{1}{8|H|^2 \cdot |{\cal CC}(H)| \cdot \log^2(\mu)}])$. By the inductive hypothesis then, there is some constant $c$ such that $T_H(\mu)$

\begin{equation*}
 \label{eq1}
\begin{split}
 & \leq \mu^{O(1)} + 2\mu(\mu[1-\frac{1}{8|H|^2 \cdot |{\cal CC}(H)| \cdot \log^2(\mu)}])^{c|H|^2 \cdot |{\cal CC}(H)| \cdot \log^3(\mu)}\\
&  = \mu^{O(1)} + 2\mu \mu^{c|H|^2 \cdot |{\cal CC}(H)| \cdot \log^3(\mu)}\left(1-\frac{1}{8|H|^2 \cdot |{\cal CC}(H)| \cdot \log^2(\mu)}\right)^{c|H|^2 \cdot |{\cal CC}(H)| \cdot \log^3(\mu)}\\
& \leq \mu^{O(1)} + 2\mu\mu^{c|H|^2 \cdot |{\cal CC}(H)| \cdot \log^3(\mu)} \cdot e^{-\frac{c|H|^2 \cdot |{\cal CC}(H)| \cdot \log^3(\mu)}{8|H|^2 \cdot |{\cal CC}(H)| \cdot \log^2(\mu)}} \mbox{~~~~~~( since $(1-x) \leq e^{-x}$ )}\\
& \leq  \mu^{O(1)} + 2\mu \mu^{c|H|^2 \cdot |{\cal CC}(H)| \cdot \log^3(\mu)} \cdot e^{-\frac{c\log(\mu)}{8}} \\
& \leq \mu^{c|H|^2 \cdot |{\cal CC}(H)| \cdot \log^3(\mu)} \mbox{~~~~~~( for sufficiently large $c$ )}\\ 
\end{split}
\end{equation*}
\end{proof}

We are now ready to prove Theorem \ref{thm:main 2}.

\begin{proof}[Proof of Theorem~\ref{thm:main 2}]
The algorithm returns the answer of ALG$_2(H,G,w,|V(G)|,\emptyset)$. By Lemmata \ref{termination 2} and~\ref{mis return 2}, ALG$_2$ will always terminate and return the weight of a maximum weight independent set in $G$. 
For the running time analysis, observe that $(H,G,w,|V(G)|,\emptyset)$ is a fair instance and let $\mu = \mu_H(H, G, w, N, {\cal F})$. We assume that $|H| \leq N$, since the run time bound follows trivially if  $|H| > N$. By Lemma \ref{max measure size 2} we have that $\mu < 4|H|^2 \cdot |{\cal CC}(H)| \cdot N \cdot \log^2(N)$. Let $n = N = |V(G)|$, then it follows that 
$$T_H(H, G, w, N, {\cal F}) \leq T_H(\mu) = \mu^{O(|H|^2 \cdot |{\cal CC}(H)| \cdot \log^3(\mu))} = n^{O(|H|^2 \cdot |{\cal CC}(H)| \cdot \log^3(n))}$$
This completes the proof. 
\end{proof}

Theorem~\ref{thm:main 2} sligthly increases the current reach of Theorem~\ref{thm:main}. In particular, let $T_k$ be the graph with $k$ connected components the first of which is a path $P_k$ on $k$ vertices and the remaining $k-1$ are forks (a fork is a path on four vertices plus a single vertex adjacent to the second vertex of the path). Lozin and Milanic~\cite{LozinM06} gave a polynomial time algorithm for {\sc Weighted Independent Set} on fork-free graphs. Theorem~\ref{thm:main 2} implies that {\sc Weighted Independent Set} on $T_k$ free graphs can be solved by making $n^{O(k^3\log^3 (n))}$ oracle calls to the polynomial time algorithm of Lozin and Milanic~\cite{LozinM06} or the algorithm of Theorem~\ref{thm:main}. Thus we obtain the following result.

\begin{theorem}\label{thm:tkfree}
There exists an algorithm that given a $T_k$-free graph $G$ and weight function $w : V(G) \rightarrow \mathbb{N}$, runs in $n^{O(k^3 \log^3 n)}$ time, and outputs the weight of a maximum weight independent set of $G$.
\end{theorem}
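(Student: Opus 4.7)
The plan is to obtain Theorem~\ref{thm:tkfree} as a direct consequence of Theorem~\ref{thm:main 2}, by decomposing $T_k$ into its connected components and plugging in the algorithms we already have in hand as the required oracles.

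First I would observe that $\mathcal{CC}(T_k)$ consists of one copy of $P_k$ together with $k-1$ copies of the fork, so there are essentially two oracles to supply. For the $P_k$ component I would use the $n^{O(k^2 \log^3 n)}$ algorithm from Theorem~\ref{thm:main} as $O_{P_k}$, and for each fork component I would use the polynomial-time algorithm of Lozin and Milanic~\cite{LozinM06} as $O_{\text{fork}}$. The only mild point requiring checking is that each oracle really does receive an $H_i$-free graph: this holds because ALG$_2$ only reaches line 10 and invokes $O_{H_i}$ after the \textbf{else if} condition on line 6 has failed, so by construction the current induced subgraph contains no induced $H_i$.

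All that remains is a running-time calculation. Theorem~\ref{thm:main 2} bounds the number of operations and oracle calls by $n^{O(|T_k|^2 \,|\mathcal{CC}(T_k)| \log^3 n)}$. Since $|T_k| = k + 5(k-1) = O(k)$ and $|\mathcal{CC}(T_k)| = k$, this exponent is $O(k^3 \log^3 n)$. The per-call cost of the slowest oracle (the $P_k$ oracle, at $n^{O(k^2 \log^3 n)}$) gets absorbed into the same bound when multiplied through. I do not anticipate any real obstacle — Theorem~\ref{thm:tkfree} is essentially a one-line corollary, and the entire proof amounts to naming the two oracles, verifying that ALG$_2$ calls each one only on admissible input, and combining the exponents.
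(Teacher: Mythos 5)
Your proposal is correct and matches the paper's own argument exactly: both decompose $T_k$ into one $P_k$ plus $k-1$ forks, supply Theorem~\ref{thm:main} and the Lozin--Milanic algorithm as the two oracles, and read off $n^{O(k^3 \log^3 n)}$ from the $n^{O(|H|^2|\mathcal{CC}(H)|\log^3 n)}$ bound of Theorem~\ref{thm:main 2} using $|T_k| = O(k)$ and $|\mathcal{CC}(T_k)| = k$. Your explicit check that ALG$_2$ only invokes $O_{H_i}$ on $H_i$-free inputs is a sound sanity check the paper leaves implicit.
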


\section{Conclusion}\label{sec:conclusion}
%!tex root=main.tex
In this paper we gave a quasipolynomial time algorithm for {\sc Weighted Independent Set} on $P_k$-free graphs for all integers $k$. The dependence on $k$ in the exponent is $O(k^2)$ and so our algorithm is quasi-polynomial even for $k = \log^{O(1)}n$ and sub-exponential for $k = n^{\frac{1}{2} - \epsilon}$ for $\epsilon > 0$. In light of our algorithm it is tempting to conjecture that {\sc (Weighted) Independent Set} on $P_k$-free graphs can be solved in polynomial time for every $k$. Given how dependent our algorithms are on branching on high degree vertices it looks unlikely that our techniques can lead to polynomial time algorithms for $P_k$-free graphs. Nevertheless it may be possible to extract structural insights from our algorithms that could eventually lead to polynomial time algorithms. 

Our second main result (Theorem~\ref{thm:main 2}) implies that if there exists a quasi-polynomial time algorithm for $H$-free graphs for every subdivided claw $H$ then there exists a quasi-polynomial time algorithm for every finite family ${\cal H}$ such that {\sf NP}-completeness of {\sc Independent Set} on ${\cal H}$-free graphs does not follow from Alekseev's result~\cite{Alekseev82}. Thus, a quasi-polynomial time algorithm for subdivided-claw-free graphs would complete a dichotomy for the complexity of {\sc Independent Set} on ${\cal H}$-free graphs for every finite family $H$: every case is either quasi-polynomial time solvable or {\sf NP}-complete.

\bibliographystyle{alpha}
\bibliography{bibliography}
\end{document}